\documentclass[11pt]{article}

\usepackage{hyperref}
\usepackage{fullpage}
\usepackage{natbib}

\usepackage{amsmath,amssymb,amsfonts,amsthm}
\usepackage{thm-restate}

\usepackage{url}            % simple URL typesetting
\usepackage{booktabs}       % professional-quality tables
\usepackage{amsfonts}       % blackboard math symbols
\usepackage{nicefrac}       % compact symbols for 1/2, etc.
\usepackage{microtype}      % microtypography
\usepackage{xcolor}         % colors
\usepackage{xspace}
\usepackage[capitalize, noabbrev]{cleveref}
\usepackage{algorithm}
\usepackage{algorithmic}

\newcommand{\alglinelabel}{%
  \addtocounter{ALC@line}{-1}% Reduce line counter by 1
  \refstepcounter{ALC@line}% Increment line counter with reference capability
  \label% Regular \label
}
\newcommand{\Comment}[1]{\hfill $\triangleright$\text{ #1}}

\def\final{0}  % set this to 1 to get a comment-free version
\def\iflong{\iffalse}
\ifnum\final=0  %namely if we allow comments in the output
\newcommand{\tnote}[1]{{\color{teal}[\small {Tamalika: \bf #1}]}}

\newcommand{\todo}[1]{{\color{red}[\small {TODO: #1}]}}

\newcommand{\rc}[1]{{\color{blue}[\small {Rachel: \bf #1}]}}
\newcommand{\ttnote}[1]{{\color{red}[\small {Tingting: \bf #1}]}}
\newcommand{\jmnote}[1]{{\color{blue}[\small {Jieming: \bf #1}]}}
\newcommand{\alenote}[1]{{\color{cyan}[\small {Alessandro: \bf #1}]}}

\else % in this case [final=1] we don't want any comments to show
\newcommand{\tnote}[1]{}
\newcommand{\todo}[1]{}
\newcommand{\rc}[1]{}
\newcommand{\ttnote}[1]{}
\newcommand{\jmnote}[1]{}
\newcommand{\alenote}[1]{}

\fi

%%% notation shortcuts %%%%%%%%%%%\

\newcommand{\istar}{\ensuremath{{i^*}}\xspace}
\newcommand{\N}{\mathcal{N}}
\newcommand{\binjt}{\ensuremath{{\textsf{Bin}_j(t)}}}

\newcommand{\E}[0]{\mathop{\bbE}\xspace}

\DeclareMathOperator*{\poly}{poly}
 \newcommand{\occur}{\text{occurrency}}
\DeclareMathOperator*{\occurm}{occur}
 %% Renamed Symbols
  \newcommand{\eps}[0]{\ensuremath{\varepsilon}}
  \let\epsilon\eps
  
  %% Cal Alphabets
  \newcommand{\cA}{\ensuremath{{\mathcal A}}\xspace}
  \newcommand{\cB}{\ensuremath{{\mathcal B}}\xspace}
  
  \newcommand{\cD}{\ensuremath{{\mathcal D}}\xspace}

  \newcommand{\cN}{\ensuremath{{\mathcal N}}\xspace}

  \newcommand{\cR}{\ensuremath{{\mathcal R}}\xspace}
  \newcommand{\cS}{\ensuremath{{\mathcal S}}\xspace}
  
  \newcommand{\cU}{\ensuremath{{\mathcal U}}\xspace}

  \newcommand{\cX}{\ensuremath{{\mathcal X}}\xspace}

  %% Black-board Bold Alphabets

  \newcommand{\bbE}{\ensuremath{{\mathbb E}}\xspace}

  \newtheorem{definition}{Definition}
\newtheorem{theorem}{Theorem}
\newtheorem{lemma}{Lemma}
\newtheorem{corollary}[theorem]{Corollary}
\newtheorem{fact}{Fact}

\author{%
 Rachel Cummings\thanks{Supported in part by NSF grant CNS-1942772 and 2138834 (CAREER), a Google Cyber NYC Award, and the Center for Smart Streetscapes, an NSF Engineering Research Center, under grant agreement EEC-2133516.} \\
 Columbia University\\ \texttt{rac2239@columbia.edu}
 \and 
 Alessandro Epasto \\
 Google Research\\ \texttt{aepasto@google.com}
   \and
 Jieming Mao\\
   Google Research \\
   \texttt{maojm@google.com}
    \and
 Tamalika Mukherjee$^*$ \\
  Columbia University\\
   \texttt{tm3391@columbia.edu} 
 \and
 Tingting Ou$^*$\\
   Columbia University\\
  \texttt{to2372@columbia.edu} 
   \and
 Peilin Zhong\\
   Google Research\\
  \texttt{peilinz@google.com} 
 }

\title{Differentially Private Space-Efficient Algorithms for Counting Distinct Elements in the Turnstile Model}

\begin{document}

\maketitle
\begin{abstract}
The \emph{turnstile} continual release model of differential privacy captures scenarios where a privacy-preserving real-time analysis  is sought for a dataset evolving  through additions and deletions.  In typical applications of real-time data analysis, both the length of the stream $T$ and the size of the universe $|\mathcal{U}|$ from which data come can be extremely large.  This motivates the study of private algorithms in the turnstile setting using space sublinear in both $T$ and $|\mathcal{U}|$.  In this paper, we give the first sublinear space differentially private algorithms for the fundamental problem of counting distinct elements in the turnstile streaming model. Our  algorithm achieves, on arbitrary streams, $\tilde{O}_{\eta}(T^{1/3})$ space and additive error, and a $(1+\eta)$-relative approximation for all $\eta \in (0,1)$. Our result significantly improves upon the space requirements of the state-of-the-art algorithms for this problem, which is linear, approaching the known $\Omega(T^{1/4})$ additive error lower bound for arbitrary streams. Moreover, when a bound $W$ on the number of times an item appears in the stream is known, our algorithm provides $\Tilde{O}_{\eta}(\sqrt{W})$ additive error, using $\Tilde{O}_{\eta}(\sqrt{W})$ space. This additive error asymptotically matches that of prior work which required instead linear space.  
Our results address an open question posed by~\cite{JainKRSS23} about designing low-memory mechanisms for this problem. We complement these results with a space lower bound for this problem, which shows that any algorithm that uses similar techniques must use space $\tilde{\Omega}(T^{1/3})$ on arbitrary streams.
\end{abstract}

\section{Introduction}

Data streaming applications in which one needs to track specific statistics over a period of time occur in many real-world scenarios including digital advertising, network monitoring, and database systems. Given the sheer volume of data collected and processed in these online settings, significant work has been dedicated to the design of efficient algorithms to track and release useful statistics about the data stream, at every timestep. A key goal in this area is the design of space-efficient algorithms, i.e., algorithms that do not require storing the entire stream in memory. 

In many applications, the data being collected is not only massive, but also contains sensitive and personal user information. In this case, formal privacy protections are often required to ensure that the data released by the algorithm does not inadvertently leak protected information about individual users. Differential privacy (DP) \cite{dwork2006calibrating} has emerged as the de facto gold-standard in privacy-preserving data analysis to address such concerns. 

We focus on the \emph{turnstile} model of \emph{continual release} in differential privacy. In this model, we are given a stream $x=(x_1,\ldots, x_T)$, where each $x_t$ can be an insertion ($+u$) or deletion ($-u$) of some data item $u$ from a fixed universe $\cU$, or $\perp$ (no update), and the goal is to release a statistic of interest at every timestep $t \in [T]$, in a differentially private manner. An algorithm $\cA$ is $(\eps,\delta)$-\emph{differentially private} if for any neighboring input streams $x \sim x'$ and any output set $O$, $$\Pr[\cA(x) \in O] \leq e^\eps \Pr[\cA(x') \in O]+\delta.$$
In the continual release model, the output $\cA(x)$ refers to the entire output history of the algorithm $\cA$ over stream $x$ at every timestep. We consider event-level privacy, i.e., the input streams $x \sim x'$ are neighboring if they differ in at most one timestep.

In this paper, we study the fundamental problem of counting distinct elements in the turnstile model under continual release. 
Starting from the seminal works of~\cite{flajolet1985probabilistic, AlonMS99}, there is a rich literature of low space algorithms for the problem of counting distinct elements in the non-DP streaming world (e.g., see references within~\cite{muthukrishnan2005data}).

  While space-efficient DP algorithms for this problem have been recently studied in the (1) one-shot model (where the algorithm produces an output once at the end of the stream)~\cite{PaghMS2020, WangPS22, DickensTT22, HehirTC23, Smith0T20, StanojevicNY17, BravermanMWZ23, BlockiGMZ23}, and (2) insertion-only continual release model (where the algorithm outputs at every timestep, but no deletions are allowed)~\cite{EpastoMMMVZ23, DworkNPR10, ChanSS2011, BolotFMNT13,  Ghazi0NM23}, the existence of a private space-efficient algorithm in the more general turnstile model under continual release has been largely unexplored. Notably, even without any space constraints, recent work~\cite{JainKRSS23} has shown that designing DP algorithms for this problem in the general turnstile model is more challenging and can incur significantly more additive error, i.e., polynomial in $T$ in the turnstile model vs. polylogarithmic in $T$ in the insertion-only model. Understanding whether this fundamental problem can be solved in the turnstile model under continual release with reasonable accuracy and efficiency is an open problem. We address this gap by presenting the first sublinear space algorithm (sublinear in $T$ and $\vert \cU\vert$) for counting distinct elements.
  
For counting distinct elements in an input stream $x$ under continual release, one strategy used by prior work~\cite{EpastoMMMVZ23, JainKRSS23} is to approximately track the summation over the stream $s_x \in \{-1,0,1\}^T$ with the binary tree mechanism \cite{ChanSS2011,DworkNPR10}, where $s_x(t)$ is the difference in the number of distinct elements at timestep $t-1$ and $t$. Note that $ \sum_{i \leq t} s_x(i)$ precisely gives the number of distinct elements at timestep $t$. In the insertion-only setting, it is easy to observe that the sensitivity of the summation stream $s_x$ is bounded by a constant~\cite{EpastoMMMVZ23, BolotFMNT13}. In the turnstile model, \cite{JainKRSS23} observed that a change in the stream $x$ can cause $\Omega(T)$ changes to the corresponding summation stream $s_x$. In particular, the sensitivity of the summation stream $s_x$ in the turnstile model depends on the number of times an input stream item switches between being present to absent or vice versa --- this property is called the \emph{flippancy} of the item.~\cite{JainKRSS23} designed DP algorithms for counting distinct elements where the additive error scales with the flippancy of the stream, denoted $w_x$ (i.e., maximum flippancy over all items in the universe for a fixed stream). More specifically, the algorithm imposes a (provable) bound on the maximum flippancy of the stream.

However, flippancy is ill-suited to design low-space algorithms due to its inherently {\it stateful} nature (i.e., whether an element flips at time $t$ depends on all prior events of that element). For this reason, we introduce the related, but {\it stateless}, notion of \emph{occurrency} (see \cref{def:occur}) which measures the maximum number of times an item appears in the stream. Using this \occur\ measure, we design space-efficient DP algorithms for counting distinct elements with error and space that scales with the maximum \occur.

\begin{definition}[Occurrency]\label{def:occur}
Given a stream $x$ and an element $u \in \cU$, the \emph{\occur} of $u$ in $x$, denoted $\occurm(u,x)$, is the number of times an element $u$ appears (either as insertion or as  deletion) in the stream $x$. The \occur\ of the stream $x$, denoted $W_x$, is $\max_{u \in \cU} \occurm(u,x)$.    
\end{definition}

We remark that for any stream $x$ of size $T$, $W_x \leq T$; however, in many instances, the \occur\ is much smaller, and can be as low as $T$ / $|\cU|$.  For example, consider the case of an online learning platform that wants to estimate the number of activities being actively performed at the moment by users. A user can begin one activity (e.g., initiating an homework assignment for a course) and then terminate it when they are done (e.g., completing the assignment).\footnote{In this simple example, for clarity, we assume a user can only participate in the activity once. Our algorithms make no such assumption.} Each event in this example stream represents initiations/terminations of a given activity (where $+u$ is an initiation of an activity $u$, and $-u$ is a termination). In this case, \occur\ can be bounded by twice the maximum number of students in a class, which is clearly $\ll T$ (total number of events). 

In this paper, we first design a DP algorithm for counting distinct elements that, when provided with a stream promised to have \occur\ bounded by $W$, has an error and space that scales as a function of the promised bound $W$. When a bound on \occur\ is known, as in the example above,
this first algorithm can be used as is and will satisfy DP guarantees. However, in many cases an \occur\ bound may not be known. For this reason, we also design a more general algorithm that for an {\it arbitrary} stream of unknown \occur, {\it imposes} a bound on \occur\ for a subset of the stream. Combining these results, we prove our main result: a low-space algorithm with {\it unconditional} DP guarantees for arbitrary streams --- i.e., the algorithm is {\it truly private} in the sense of~\cite{JainKRSS23}. Finally, we give a matching space lower bound showing that any algorithm using a technique based on bounding \occur\ (or flippancy) cannot hope to achieve better space bounds.

\subsection{Our Contributions}

The main result of our paper is the design of the first sublinear space DP algorithms for the problem of counting distinct elements in the turnstile model under continual release. Before presenting our results, we first define some useful notation and terminology. We say $v$ is an $(\alpha,\beta)$-approximation of a function $f$ on input stream $x$ if $(1/\alpha) f(x) - \beta \leq   v \leq \alpha f(x) +\beta$.

We first present a DP algorithm for the problem of counting distinct elements, where the input stream has a promised upper bound on its \occur. In such cases, our DP algorithm achieves space and error guarantees that only have a sublinear dependency on \occur\ as stated below.\footnote{We present our informal results in terms of $(\eps,\delta)$-DP, however our formal theorems are stated in terms of $\rho$-zCDP. We can use \cref{thm:zcdp-to-dp} to convert $\rho$-zCDP to $(\eps,\delta)$-DP.} 

This algorithm utilizes a low-space dictionary data structure called $k$-set (denoted \textsf{KSET}), which was introduced by~\cite{Ganguly07} to estimate the number of distinct elements in the non-DP turnstile streaming setting. \textsf{KSET} supports the insertion and deletion of data items and returns (with high probability) the set of items $S$ present in the dictionary as long as $\vert S \vert \leq k$. If the number of elements exceeds $k$ or there is a collision where two elements are assigned the same cell in the \textsf{KSET}, then it ``fails'' by outputting NIL.

\begin{theorem}[Informal, \cref{thm:cd-cond-inf}]\label{thm:cond-inf}
For all $\eps, \eta>0$ and $\delta \in (0,1)$ and streams of length $T \in \mathbb{N}$, given a promised \occur\ bound of $W_x$,  
    there exists an $(\eps,\delta)$-DP algorithm in the  turnstile model under continual release that outputs a $(1 + \eta, \Tilde{O}_{\eps,\delta,\eta}(\sqrt{W_x} ))$-approximation to the number of distinct elements in the stream, using space $\Tilde{O}_{\eps,\delta,\eta}(\sqrt{W_x})$.
\end{theorem}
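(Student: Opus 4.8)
The plan is to combine three ingredients: (i) the non-private \textsf{KSET} data structure of \cite{Ganguly07} run at a carefully chosen capacity $k$, (ii) a differentially private summation (binary tree / continual counting) mechanism applied to a derived ``difference stream'', and (iii) a geometric sequence of restarts to handle the fact that a single \textsf{KSET} only succeeds while the number of distinct elements stays below $k$. Concretely, since the true count of distinct elements $d(t)$ is a monotone-free quantity in the turnstile model, I would first partition the possible count values into dyadic scales $[2^{j}, 2^{j+1})$; within a scale $j$ we maintain a $\textsf{KSET}$ of capacity $k_j = \Theta(2^{j})$ (with enough hashing redundancy to drive the failure/collision probability down to $1/\mathrm{poly}(T)$), so that as long as the count is in that scale, \textsf{KSET} reports the exact active set $S$, and hence the exact value $d(t) = |S|$. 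The key observation — following \cite{EpastoMMMVZ23, JainKRSS23} — is that instead of releasing $d(t)$ directly, we track the difference stream $s_x(t) = d(t) - d(t-1)$, feed it into a continual-counting mechanism, and output the running sum. The additive error of continual counting on a length-$T$ stream of sensitivity-$\Delta$ increments is $\tilde O(\Delta \cdot \mathrm{polylog}\,T / \eps)$ (or the $\rho$-zCDP analogue), so everything reduces to bounding the sensitivity of $s_x$.

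The heart of the argument is the sensitivity bound, and this is where the \occur\ promise enters. Changing one timestep of $x$ — say inserting/deleting a single occurrence of item $u$ — can only change $d(t)$ at those timesteps $t$ where $u$'s present/absent status is altered, and the number of such timesteps is at most $\occurm(u,x) \le W_x$ in the original stream plus $O(1)$ in the modified stream (flipping one event changes at most the parity of a prefix of $u$'s occurrence subsequence, so the presence pattern of $u$ changes on at most $O(\occurm(u,x))$ timesteps). Hence $\|s_x - s_{x'}\|_1 = O(W_x)$ and each coordinate changes by $O(1)$, giving $\ell_2$-sensitivity $O(\sqrt{W_x})$ for the vector of increments — this is precisely what makes a $\tilde O(\sqrt{W_x})$ error possible under zCDP, versus $\tilde O(W_x)$ under pure DP. Plugging $\Delta = O(\sqrt{W_x})$ (in the zCDP sense) into the continual-counting error bound yields additive error $\tilde O_{\eps,\delta,\eta}(\sqrt{W_x})$. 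The $(1+\eta)$ multiplicative slack is then used to tolerate \textsf{KSET} failures gracefully: when the count crosses a dyadic boundary or a rare \textsf{KSET} collision occurs, we reinitialize at the next scale and re-seed the counting mechanism on the residual stream; the $\eta$-relative term absorbs the error of re-estimating the base count at each of the $O(\log T)$ scale changes, and a union bound over the $\mathrm{poly}(T)$ failure events keeps the whole thing correct with high probability.

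For space: a \textsf{KSET} of capacity $k_j$ uses $\tilde O(k_j)$ words, and since the count never exceeds the true number of distinct elements which is at most (twice) the \occur-weighted support — more carefully, at any time the number of distinct \emph{active} elements is at most $W_x$ only in the bounded-\occur\ regime relevant here wait, that is not quite right, so instead I would cap the largest maintained scale at $k_{\max} = \Theta(\sqrt{W_x})$ and argue that whenever the true count exceeds $\sqrt{W_x}$ the additive error $\tilde O(\sqrt{W_x})$ we are already committing to, combined with the $(1+\eta)$ factor, means we may as well output a coarse estimate (e.g. via a small-space $F_0$ sketch run in parallel) — so the regime where we need the exact set from \textsf{KSET} is exactly the regime $d(t) = O(\sqrt{W_x})$, keeping total space at $\tilde O_{\eta}(\sqrt{W_x})$; the continual-counting mechanism itself needs only $\tilde O(1)$ space. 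The main obstacle I anticipate is making the restart/composition bookkeeping rigorous: each reinitialization must be shown to (a) not blow up the privacy budget — handled by composing $O(\log T)$ zCDP sub-mechanisms, each with parameter $\rho/\log T$ — and (b) not accumulate additive error across scales, which requires that the ``handoff'' of the base count between consecutive \textsf{KSET} instances is done on disjoint portions of the stream so errors do not telescope. Getting the interaction between the high-probability correctness of \textsf{KSET}, the privacy of the (data-dependent) restart times, and the accuracy accounting all simultaneously correct is the delicate part; the rest is assembling known mechanisms.
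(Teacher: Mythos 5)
Your core ingredients match the paper's (a \textsf{KSET} maintaining a distinct sample, a Gaussian binary-tree mechanism on the difference stream, and an $\ell_2$-sensitivity bound of $O(\sqrt{W_x \log T})$ coming from the fact that neighboring streams change the difference stream in at most $O(W_x)$ positions by $O(1)$ each), but two pieces of your plan have genuine gaps. First, your treatment of the large-count regime does not work as stated: when the true count $d(t)$ far exceeds $\sqrt{W_x}$, an additive error of $\Tilde{O}(\sqrt{W_x})$ does not by itself yield a $(1+\eta)$-approximation, and the ``small-space $F_0$ sketch run in parallel'' you fall back on is not differentially private --- privatizing exactly that estimate in the turnstile continual-release model is the problem being solved, and non-DP sketches are known to leak (the paper cites membership-inference attacks on them). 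The paper instead never caps the scale of the count: it subsamples the \emph{universe} into $L=\lceil\log T\rceil$ parallel substreams with geometric rates $2^{-i}$, so that for every magnitude of $d(t)$ some substream has a small (hence \textsf{KSET}-storable, $\Tilde{O}(\sqrt{W})$-sized) distinct count whose noisy value, rescaled by $2^i$, gives the $(1+\eta)$ multiplicative guarantee; the selection of which substream to report is post-processing of already-private outputs, so no extra privacy cost is incurred.

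Second, and more fundamentally, you correctly flag ``the privacy of the data-dependent restart times'' as delicate, but that is precisely the missing idea rather than a bookkeeping detail. In your scheme the times at which a \textsf{KSET} overflows or collides, and hence the restarts and the re-seeding of the counting mechanism, are data-dependent events; the \textsf{KSET} fails \emph{with probability one} once its capacity is exceeded, so these failures cannot be union-bounded away or absorbed into $\delta$, and splitting the budget as $\rho/\log T$ over restarts presupposes a data-independent (or privately computed) restart schedule, which you do not provide. The paper's resolution is structural: it modifies \textsf{KSET} to return NIL whenever $|S|>k$, adds a thresholding step that outputs TOO-HIGH when the private count exceeds $\tau$, and then proves by a coupling argument (Lemma~\ref{lem:new_privacy}) that the whole \textsf{COUNTING-KSET} pipeline is statistically indistinguishable, up to probability $\beta/L$, from a non-space-efficient exact counter \textsf{COUNTING-DICT} whose privacy (Lemma~\ref{lem:exact_count_dp}) follows from the $\sqrt{W}$-type sensitivity bound you sketched. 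Without an analogue of this coupling/thresholding step, your restart-based construction does not yield a valid privacy proof, even though its accuracy and space accounting could be made to work.
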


We remark that in this result, DP is guaranteed only when the input streams do not violate the promised \occur\ upper bound. Our next result allows us to remove this assumption and provide a DP algorithm for streams with unbounded \occur\ by creating a \emph{blocklist} of high \occur\ items to effectively ignore them in future timesteps. Before presenting our main result, we first define the problem of blocklisting high \occur\ items and show a near optimal space algorithm for this problem. 

Let  $\bf{blocklist}_{occ}(W)$ be the problem of outputting 0 at timestep $t$ when the current input element has \occur\ $<W$ (up to the  step before), and 1 otherwise. An algorithm for solving  $\bf{blocklist}_{occ}(W)$ reports a false negative if the element at timestep $t$ has \occur\ $\geq W$ but the algorithm outputs 0, and similarly reports a false positive if the element at timestep $t$ has \occur\ $<W$ but the algorithm outputs 1.

\begin{theorem}[Informal, Corollary~\ref{corol:blocklist-ub} and Theorem~\ref{thm:lb}]\label{th:informal-boundknown} 
    There exists an algorithm that, with high probability, has no false negatives and bounded false positives for   $\bf{blocklist}_{occ}(W)$ and uses space $\Tilde{O}(T/W)$. Moreover, for any $W>0$, any algorithm that solves  $\bf{blocklist}_{occ}(W)$ with the same false positive bound (and no false negatives) needs space $\Tilde{\Omega}(T/W)$. 
\end{theorem}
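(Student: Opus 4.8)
The plan is to prove the two directions separately: a streaming algorithm achieving the stated guarantees in $\tilde O(T/W)$ space, and a matching $\tilde\Omega(T/W)$ lower bound via an encoding (one-way communication) argument.

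\textbf{Upper bound.} The starting observation is that a stream of length $T$ contains at most $T/W$ elements whose \occur\ ever reaches $W$, and at most $2T/W$ elements whose \occur\ ever reaches $W/2$. So it suffices to run, on the stream of \emph{element types} (treating each of $+u$ and $-u$ as one occurrence of $u$), a classical deterministic heavy-hitters summary --- a Space-Saving / Misra--Gries table with $k=\Theta(T/W)$ counters --- and at timestep $t$ to output $1$ iff the current element $u=x_t$ is monitored with estimated count at least $W$. Such a summary maintains, at \emph{every} prefix, an over-estimate $\hat f_t(u)$ with $f_t(u)\le\hat f_t(u)\le f_t(u)+m_t/k\le f_t(u)+T/k$, where $f_t(u)$ is the true number of occurrences of $u$ so far and $m_t\le T$ is the number of updates so far. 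Taking $k$ large enough that $T/k\le W/2$, I would then verify: (i) \emph{no false negatives}, since any $u$ with $f_{t-1}(u)\ge W$ is monitored (a non-monitored element has true count $<W$) and has $\hat f_{t-1}(u)\ge f_{t-1}(u)\ge W$; and (ii) \emph{bounded false positives}, since an element is output $1$ only if $\hat f_{t-1}(u)\ge W$, hence $f_{t-1}(u)\ge\hat f_{t-1}(u)-W/2\ge W/2$, and there are at most $2T/W$ elements with \occur\ $\ge W/2$. The total space is $O\big(k(\log T+\log|\cU|)\big)=\tilde O(T/W)$, and the guarantees are deterministic; a randomized Count-Min variant also works.

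\textbf{Lower bound.} Here I would use a direct encoding / one-way communication argument. Fix a ground set $\cS\subseteq\cU$ with $|\cS|=\Theta(T)$ and a target size $a=\Theta(T/W)$ with $aW\le T/2$. For a subset $A\subseteq\cS$ with $|A|=a$, Alice feeds each element of $A$ exactly $W$ times (costing $aW\le T/2$ timesteps) and transmits the resulting memory state $M$, of $b$ bits. Bob then queries each $u\in\cS$ once, in a fixed order (another $|\cS|\le T/2$ timesteps, so the total stream length is $\le T$), and records the outputs. Just before the query of a $u\in A$ that element has \occur\ exactly $W$, so the no-false-negative guarantee forces output $1$; for $u\notin A$ its \occur\ is $0$, so any output $1$ there is a false positive, of which the false-positive bound $f$ permits at most $f$. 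Thus from $M$ (which determines Bob's entire simulation, since his query order is fixed) one recovers a set $Q\supseteq A$ with $|Q\setminus A|\le f$, pinning $A$ down to at most $\binom{a+f}{f}$ candidates. A counting argument over all $\binom{|\cS|}{a}$ choices of $A$ then yields $2^b\cdot\binom{a+f}{f}\ge\binom{|\cS|}{a}$, hence
\[
b\ \ge\ \log\binom{|\cS|}{a}-\log\binom{a+f}{f}\ =\ \tilde\Omega(T/W),
\]
using that $\log\binom{|\cS|}{a}=\Omega\big(\tfrac{T}{W}\log W\big)$ since $|\cS|/a=\Theta(W)$, while $\log\binom{a+f}{f}=O(T/W)$ whenever $f=O(T/W)$. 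For a randomized algorithm I would first invoke Yao's principle to fix its coins to a setting correct on a $\ge 2/3$ fraction of the subsets $A$, losing only an additive $O(1)$ in the bound.

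\textbf{Main obstacle.} Both directions are assembled from standard ingredients; the delicate point --- where I expect to spend the most effort --- is making the lower bound apply to the \emph{same} false-positive bound $f=\Theta(T/W)$ that the algorithm attains. This forces the ground set $\cS$ to be taken large (size $\Theta(T)$, not $\Theta(T/W)$), so that the ambiguity term $\log\binom{a+f}{f}$ is dominated by $\log\binom{|\cS|}{a}$; getting the constants to work out requires care. Secondary points to pin down are: that the heavy-hitters bound holds at \emph{every} prefix, not just at the end, so the algorithm is correct under continual release; that the false-positive budget shared between Alice's and Bob's portions of the stream does not break the decoding (only Bob's outputs are used, and a total-stream guarantee bounds his false positives); and a separate, simpler treatment of the degenerate regime $W=O(1)$, where $T/W=\Theta(T)$.
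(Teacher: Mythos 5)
Your lower bound is essentially the paper's own proof in different clothing: the hard distribution is identical (a uniformly random set of $\Theta(T/W)$ elements, each occurring $W$ times in the first half of the stream, followed by one query of every universe element in the second half), and your encoding/counting inequality $2^b\cdot\binom{a+f}{f}\ge\binom{|\cS|}{a}$ is the combinatorial form of the paper's entropy computation $|S|\ge I(X;Y\mid P)\ge (1-\beta)\bigl(\log\bigl((1-\beta)\binom{n}{m}\bigr)-\log\binom{m+r}{m}\bigr)$; your use of Yao's principle plays the role of the paper's ``fix the best random seed and condition on the success event $P$.'' The upper bound, however, is a genuinely different route: the paper does not use a heavy-hitters summary at all, but instead uses the sampling-based blocklist already built into \cref{alg:large-universe-sampling} --- each occurrence of an element is independently added to $\cB$ with probability $p\approx \log(\cdot)/W$ --- so that with high probability every element reaching \occur\ $W$ is blocklisted (no false negatives) and only $\tilde{O}(T/W)$ elements are ever blocklisted (bounding false positives and space). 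Your deterministic Space-Saving/Misra--Gries construction buys worst-case (rather than with-high-probability) guarantees and applies directly for every $W$, whereas the paper's stateless coin-flip blocklist is what plugs into the DP analysis, since the privacy coupling in \cref{lem:new_privacy,lem:bl-id} fixes the blocklisting randomness across neighboring streams --- a data-dependent deterministic blocklist would complicate that argument, though this is irrelevant for the standalone $\bf{blocklist}_{occ}(W)$ statement. One caveat to state explicitly: your false-positive argument bounds the number of distinct falsely flagged elements (at most $2T/W$), not the number of false-positive \emph{timesteps} under the paper's literal definition (a single light element flagged early can be a false positive at up to $\sim W/2$ later appearances); the paper's own accounting in \cref{corol:blocklist-ub} effectively counts newly-blocklisted events, so your bound matches the intended guarantee, but you should say which quantity you are bounding.
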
 

Note that this space lower bound applies to any algorithm (even  exponential time ones) and implies that the space bounds achieved by our final algorithms are optimal up to log factors (among algorithms using the blocklisting approach). 

Finally, our main result gives a DP algorithm for counting distinct elements  with no assumptions on the \occur\ of the input stream. Our algorithm uses the blocklisting technique described above with the optimal choice of \occur\ $W_x=T^{2/3}$, combined with the \textsf{KSET} data strucure to achieve sublinear space while obtaining non-trivial additive error. 

\begin{theorem}[Main (Informal), \cref{thm:cd-uncond-inf}]\label{thm:main-inf}
For all $\eps, \eta>0$ and $\delta \in (0,1)$ and streams of length $T \in \mathbb{N}$, there exists an $(\eps,\delta)$-DP algorithm in the  turnstile model under continual release that outputs a $(1 + \eta,  \Tilde{O}_{\eps,\delta,\eta}(T^{1/3}))$-approximation to the number of distinct elements in the stream, using space $ \Tilde{O}_{\eps,\delta,\eta}(T^{1/3})$.
\end{theorem}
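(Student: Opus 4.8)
The plan is to reduce the unbounded-\occur\ problem to the bounded-\occur\ algorithm of \cref{thm:cond-inf} via the blocklisting primitive of \cref{th:informal-boundknown}, run at the balanced scale $W = T^{2/3}$. The algorithm maintains two components in parallel: (i) the $\tilde O(T/W)$-space blocklisting data structure of \cref{th:informal-boundknown} with parameter $W$, and (ii) an instance of the bounded-\occur\ counting algorithm of \cref{thm:cond-inf} instantiated with \occur\ bound $W$. At each timestep $t$, the update $x_t$ is first passed to the blocklisting structure; if it involves an element that has \emph{not} been flagged, the update is forwarded to the counting instance, otherwise it is discarded. The output at time $t$ is the estimate the counting instance produces on this \emph{filtered} substream $\tilde x$. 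Because the blocklisting structure has no false negatives (with high probability), every element contributes at most $W$ updates to $\tilde x$, so $\tilde x$ satisfies the precondition $W_{\tilde x}\le W$ of \cref{thm:cond-inf}. Note $W=T^{2/3}$ is exactly the value that equalizes the two space terms, $T/W=\sqrt W$, and both are $\tilde O(T^{1/3})$.

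For accuracy, observe that $\tilde x$ differs from $x$ only by deleting all updates of blocklisted elements. At any timestep, the elements that are ever blocklisted are (a) those with true \occur\ $\ge W$ — of which there are at most $T/W = T^{1/3}$, since each consumes $\ge W$ of the $T$ stream positions — and (b) false positives, whose count is bounded by \cref{th:informal-boundknown} and is $\tilde O(T^{1/3})$ for this $W$. Hence the number of distinct elements in $\tilde x$ underestimates that of $x$ by at most $\tilde O(T^{1/3})$ at every timestep. Composing this with the $(1+\eta,\tilde O(\sqrt W))=(1+\eta,\tilde O(T^{1/3}))$ guarantee of \cref{thm:cond-inf} on $\tilde x$ yields a $(1+\eta,\tilde O(T^{1/3}))$-approximation of the distinct count of $x$, and the total space is $\tilde O(T/W)+\tilde O(\sqrt W)=\tilde O(T^{1/3})$.

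\textbf{Privacy (the main obstacle).} The delicate point is that the filtering map $x\mapsto\tilde x$ is data dependent, so neighboring streams $x\sim x'$ need not map to neighboring filtered streams; moreover $\tilde x$ depends on the (data-independent) internal randomness $r$ of the blocklisting structure. The key structural claim, which I would prove first, is that for every fixed $r$ the streams $\tilde x(x;r)$ and $\tilde x(x';r)$ differ in only $O(1)$ positions: changing the element at time $t$ shifts, for at most two elements, the point at which that element's (estimated) \occur\ crosses the threshold $W$ by at most one of its occurrences, which alters the set of its surviving (first $W$) updates in $O(1)$ positions — and in the sub-case where the affected element has already been blocklisted, $\tilde x$ and $\tilde x'$ coincide. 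Given this, $\tilde x$ and $\tilde x'$ are $O(1)$-neighboring, so feeding them to the $(\eps,\delta)$-DP (formally $\rho$-zCDP) algorithm of \cref{thm:cond-inf} preserves privacy with only a constant-factor loss via group privacy (running \cref{thm:cond-inf} at $\rho/c^2$ for the constant $c$); since the blocklist itself is never released, only the filtered count, post-processing/mixture arguments over $r$ close the reduction. Equivalently, at the level of the underlying mechanism, the count-difference stream that is actually perturbed by the Gaussian-noised binary tree has $\ell_2$-sensitivity $\tilde O(\sqrt W)=\tilde O(T^{1/3})$ even after accounting for the $O(1)$-position filtering discrepancy, because an $O(1)$-position change to $\tilde x$ flips the running distinct count on at most $O(W)$ timesteps, each by $\pm1$.

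Two auxiliary points remain, which I would handle last. First, the whole argument conditions on the good event that the blocklisting structure has no false negatives and the internal \textsf{KSET} sketches do not fail; these events have probability $1-\delta'$ over randomness independent of the data, and $\delta'$ is folded into the $\delta$ of the final $(\eps,\delta)$-DP bound (using \cref{thm:zcdp-to-dp} to convert from zCDP), exactly as in the analysis of \cref{thm:cond-inf}. Second, one must check that $\tilde x$ is a valid turnstile stream — no deletion without a preceding insertion — which holds because restricting a valid stream to any prefix of a given element's updates is again valid. I expect the $O(1)$-neighboring structural claim to be the crux of the argument; everything else is bookkeeping of error and space terms and standard zCDP composition.
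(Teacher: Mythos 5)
Your high-level architecture is the same as the paper's: impose an occurrency bound $W=T^{2/3}$ by blocklisting, run the bounded-occurrency machinery (KSET plus a Gaussian-noised binary tree with noise scale $\tilde O(\sqrt W)$) on the filtered stream, charge the discarded elements ($\le T/W$ heavy ones plus $\tilde O(T^{1/3})$ false positives) to the additive error, and balance $T/W$ against $\sqrt W$; your accuracy and space bookkeeping match the paper's. The gap is in the privacy crux, which you correctly single out but do not actually establish. Your key structural claim --- that for every fixing of the blocklist randomness the filtered streams $\tilde x,\tilde x'$ differ in $O(1)$ positions --- is justified by an exact-threshold picture (``the point at which the element's estimated occurrency crosses $W$ shifts by at most one occurrence''), but the only $\tilde O(T/W)$-space blocklisting primitive you can invoke (\cref{corol:blocklist-ub}, i.e., adding each arriving non-blocklisted element to $\cB$ with probability $p$) does not behave this way: under the natural fixing of its coins (the one used in the paper), removing one occurrence of $u$ can change which coin flip is the first success, shifting $u$'s blocklisting time by many occurrences, so the filtered streams can differ in up to $W+1=T^{2/3}+1$ positions, and only with probability $1-\beta/L$ --- this is exactly the statement the paper proves and uses (\cref{lem:bl-id}). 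Worse, treating the blocklist as a black box satisfying only ``no false negatives, few false positives'' is insufficient for privacy: a generic structure meeting those guarantees (e.g., one built on shared sketches) could, on neighboring inputs, change its flagging decisions for \emph{other} colliding elements, so the filtered streams need not be close in any number of positions; the argument needs a per-element stability/coupling property of the specific sampling blocklist that your reduction never states or verifies. Finally, the inner bounded-occurrency algorithm's privacy guarantee is conditional on the promise $W_{\tilde x}\le W$, which fails on the false-negative event, so one cannot literally invoke it (or group privacy over it) as a black box off that event; this has to be folded into an approximate-zCDP argument, which is what the paper does via \cref{lem:new_privacy}, \cref{lem:bm-ec-dp} and \cref{lem:privacy_occ_bound}.

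Note also that the $O(1)$-position claim, even if repaired (it can be, by coupling the blocklist coins per element-and-occurrence-index rather than per timestep), buys you nothing: as your own closing sentence concedes, the quantity that controls the noise is the $\ell_2$-sensitivity of the binary-tree node counts, which is $\Theta(\sqrt{W\log T})$ because the affected element's presence indicator can toggle differently at up to $O(W)$ timesteps; the paper's weaker bound of $W+1$ differing positions already yields exactly this, and hence the same $\tilde O(T^{1/3})$ error. So the missing content is precisely a proof of a coupling/stability lemma in the style of \cref{lem:bl-id} for the concrete sampling-based blocklist, followed by the sensitivity computation of \cref{lem:bm-ec-dp} (and the KSET-failure coupling of \cref{lem:new_privacy}); as written, the central step is asserted with a justification that does not apply to the primitive you cite, and the group-privacy shortcut built on it is not licensed.
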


\paragraph{Comparison to prior work}
We observe that our results are the first to address the open question posed by~\cite{JainKRSS23} regarding the existence of accurate, private, and {\it low-memory} mechanism for counting distinct elements in turnstile streams.

When the stream has a promised flippancy bound $w_x$ rather than a promised \occur\ bound, the informal \cref{thm:cond-inf} (and the formal counterpart~\cref{thm:cd-cond-inf}) can be restated equally in terms of the flippancy of the stream. More precisely, our algorithm provides a $1 + \eta$ multiplicative approximation with $\Tilde{O}_{\eps,\delta,\eta}(\sqrt{w_x} ))$ additive error, using space $\Tilde{O}_{\eps,\delta,\eta}(\sqrt{w_x})$.\footnote{We omit the proof for simplicity as it is follows the same steps as our proof in terms of \occur.}

Notice that this additive error matches (neglecting lower order terms) the additive error $\Tilde{O}_\eps(\min(\sqrt{w_x},T^{1/3}))$ in the algorithm of~\cite{JainKRSS23}\footnote{Notice that sublinear space requires as well a multiplicative approximation factor in our algorithm}  which is achieved however with $\Omega(T)$ space. (This is because for the regime of $w_x \ge T^{2/3}$ one can use the unbounded \occur\ algorithm in our paper to obtain additive error $\Tilde{O}_{\eps,\delta,\eta}(T^{1/3})$). 

When flippancy is unbounded, instead, our additive error is always $\Tilde{O}_{\eps,\delta,\eta}(T^{1/3})$. This is close to the lower bound of $\tilde \Omega_\eps(T^{1/4})$ from~\cite{JainKRSS23} for private algorithms for the problem in streams of arbitrary large flippancy. Closing this gap is an interesting open problem.

\subsection{Related Work}
Our work on designing space-efficient turnstile streaming algorithms in the DP continual release setting is related to several topics in the areas of (non-private) space-efficient streaming algorithms, DP continual release, and DP streaming algorithms.

\textbf{Space-efficient streaming algorithms.} The streaming model of computation~\cite{flajolet1985probabilistic} is a well-known abstraction for efficient computation on large-scale data. In this model, data are received one input at a time, and the algorithm designer seeks to design algorithms that compute a solution on-the-fly while using limited space and time. In the area of non-private computation, a vast literature has been developed over the past decades~\cite{morris1978counting,flajolet1985probabilistic,flajolet2007hyperloglog,alon1996space} for addressing a variety of problems ranging from classical streaming computations (such as heavy hitters and frequency moments)~\cite{flajolet1985probabilistic,flajolet2007hyperloglog,durand2003loglog,cormode2005improved,misra1982finding}, to solving combinatorial optimization problems~\cite{mcgregor2018simple,HuangP19,charikar2003better}. Two key questions in the literature are whether it is possible to obtain space sublinear in the number of updates~\cite{flajolet2007hyperloglog} and whether it is possible to handle increasingly more dynamic updates (i.e., insertions-only, sliding window~\cite{datar2002maintaining}, and fully-dynamic streams~\cite{mcgregor2018simple}).  In the (non-private) streaming literature,  there is a well-developed theoretical understanding for interplay between the dynamicity of the stream, the accuracy achievable, and space bounds required, for a vast array of algorithmic problems~\cite{woodruff2004optimal}.

\textbf{DP continual release algorithms. } Differential privacy (DP)~\cite{dwork2006calibrating} has become the \emph{de facto} standard of private computation in algorithm design. In the context of streaming algorithms, the standard DP model is the \emph{continual release} model, first introduced by~\cite{DworkNPR10, ChanSS2011}. In this model, algorithms should preserve privacy of the input, even if a solution is observed at each timestep (as opposed to the \emph{one-shot} model where the adversary can obtain only \emph{one} solution at the end of the stream). Celebrated results in the continual release model include the well-known binary mechanism for releasing sum statistics in a binary data stream with $O(\eps^{-1}\log^2(T))$ additive error~\cite{DworkNPR10, ChanSS2011}. Significant work has expanded on this foundational result in various directions, including handling non-binary streams~\cite{TS13,FichtenbergerHO21}, sliding windows~\cite{BolotFMNT13}, and improving the space/utility tradeoffs~\cite{mcmahan2024efficient}. The latter work also provided algorithms for counting distinct elements with additive error of $O(\log^{1.5}(T))$ in insertion-only streams. Later work~\cite{Ghazi0NM23} also focused on counting distinct elements in the sliding window model, achieving polylogarithmic additive error in the window size. For insertion-only streams,~\cite{EpastoMMMVZ23} gave the first DP algorithms with space $O(\poly\log(T))$ and a $(1+\eta)$-multiplicative and  $O(\poly\log(T))$ additive error for counting distinct elements and frequency moment estimation in the insertion-only and sliding window model under continual release. 

\textbf{DP one-shot streaming algorithms.} A more restricted setting is the one-shot streaming model, where the analyst only seeks to output a solution at the end of the stream. In the one-shot setting,~\cite{DesfontainesLB19} showed  membership inference attacks for a large class of non-DP sketching algorithms for counting distinct elements, implying that they do not preserve any reasonable notion of privacy. In the private sphere and for a related problem, designing low-space DP algorithms for the general frequency moment estimation problem has been well-explored~\cite{Smith0T20, PaghMS2020, BlockiBDS12, DickensTT22, HehirTC23}. Specifically~\cite{WangPS22} showed that a well-known streaming algorithm called the $\mathbb{F}_p$ sketch preserves DP as is.~\cite{BlockiGMZ23} gave a black-box transformation for turning non-DP streaming algorithms into DP streaming algorithms while still preserving sublinear space and accuracy guarantees. These problems have also been explored in the \emph{pan-privacy} streaming model, where DP is preserved even if the internal memory of the algorithm is compromised~\cite{DworkNPR10, MirMNW11}.

\textbf{DP turnstile model.}
All previously mentioned work does not consider the fully dynamic streaming setting, where items can also be removed from the stream. The DP turnstile continual release model is the private equivalent of the fully-dynamic streaming model, and has received substantially less attention in the literature.
Counting distinct elements in the turnstile continual release model was only recently studied for the first time by~\cite{JainKRSS23}.  For a stream with flippancy bound $w$, they give an $(\eps,\delta)$-DP mechanism with additive error \sloppy $O(\eps^{-1}\sqrt{w} \poly\log(T)\sqrt{\log(1/\delta)})$ and space $\Omega(T)$.  They also show a lower bound of $\Omega(\min(w, {T}^{1/4}))$ on the additive error for any DP mechanism for this problem. In a concurrent work,~\cite{HenzingerSS23} also studied this problem under a restricted variant of the turnstile model where items are guaranteed to be present with cardinality at most $1$ at any time (i.e., multiple insertions of the same element are ignored). For this setting, they give an $(\eps,0)$-DP algorithm with additive error $\Tilde{O}(\sqrt{\eps^{-1}K\log(T)})$, where $K$ is the total number of insertions and deletions, and a nearly matching lower bound. Contrary to the non-private literature, theoretical understanding of space efficiency in private dynamic streaming algorithms is very limited. No prior work has designed differentially private sublinear-space algorithms for the foundational problem of counting distinct elements in the turnstile continual release model.

\section{Preliminaries}

We consider an input stream $x_1, \ldots, x_T$ of length $T$, coming from universe $\cU$, such that each $x_i \in \cU \cup \{\bot\}$. We assume that $\vert \cU \vert = \poly(T)$, as is standard in streaming literature~\cite{chakrabarti-streaming}. This assumption allows for simplified lower bound on space, since storing a single item from the universe requires $O(\log(|\cU|)) = \log(T))$ bits.

First, we recall the definition of differential privacy (DP) on streams. Neighboring streams $x$ and $x'$, denoted $x \sim x'$, differ in the stream elements at most one timestep.  

\begin{definition}[Differential privacy~\cite{dwork2006calibrating}] Given privacy parameters $\eps>0$ and $\delta \in [0,1)$, an algorithm $\cA$ is $(\eps,\delta)$-DP if for any neighboring streams $x \sim x'$ and any output set $O$, 
    $$\Pr[\cA(x) \in O] \leq e^\eps \Pr[\cA(x') \in O] +\delta.$$
\end{definition}
When $\delta=0$, this is known as pure DP, when $\delta>0$ and it is known as approximate DP. In the continual release setting, the output of $\cA(x)$ is the entire $T$-length output over the stream $x$ at every timestep.

Our privacy analysis will primarily use zero-concentrated differential privacy (zCDP), which is a slight variant of the standard DP definition. We present the definition of zCDP, its composition properties, and its relationship to $(\eps,\delta)$-DP in the one-shot setting, where the entire stream is processed and only one output is released at the end of the stream.

\begin{definition}[zero-concentrated differential privacy (zCDP) \cite{BunS16}]
    Given a privacy parameter $\rho>0$, a randomized algorithm $\cA$ satisfies $\rho$-zCDP if for all pairs of neighboring streams $x \sim x'$ and all $\alpha >1$, 
    $$\cD_\alpha (\cA(x) \| \cA(x') ) \leq \rho \alpha, $$
    where $\cD_\alpha ( P \| Q) = \frac{1}{\alpha-1}\log \left( E_{y \sim P} \left[ \frac{P(y)^{\alpha -1}}{Q(y)^{\alpha-1}} \right]\right)$ is the Renyi divergence of order $\alpha$ between probability distributions $P$ and $Q$. 
\end{definition}

There is also a relaxation of zCDP, known as \emph{approximate zCDP}, which is analogous to the relaxation between pure DP and approximate DP.

\begin{definition}[Approximate zCDP~\cite{BunS16}]\label{def:approx-zcdp}
    Given privacy parameters $\rho>0$ and $\delta \in (0,1)$, a randomized algorithm $\cA$ satisfies $\delta$-approximate $ \rho$-zCDP if for all pairs of neighboring streams $x \sim x'$, there exist events $E$ (which depends on $\cA(x)$) and $E'$ (which depends on $\cA(x')$) such that $\Pr [E] \geq 1-\delta$, $\Pr [E'] \geq 1-\delta$, and for all $\alpha \in (1,\infty)$,
    \begin{align*}
        D_\alpha (\cA(x)\vert_E \| \cA(x')\vert_{E'}) \leq \rho \alpha\ \vee D_\alpha (\cA(x')\vert_{E'} \| \cA(x)\vert_{E}) \leq \rho \alpha,
    \end{align*}
    where $\cA(x)\vert_E$ denotes the distribution of $\cA(x)$ conditioned on the event $E$.
\end{definition}

The privacy parameters of zCDP compose, similar to the composition guarantees of DP. Additionally, it is possible to translate between the guarantees of zCDP and DP.

\begin{theorem}[Composition \cite{BunS16}]\label{thm:zcdp-comp}
Let $\cA$ be a $\delta$-approximate $\rho$-zCDP algorithm and $\cA'$ be a $\delta'$-approximate $\rho'$-zCDP algorithm. Then the composition $\cA''(x)= (\cA(x), \cA'(x))$ satisfies $(\delta+\delta'-\delta\cdot \delta')$-approximate $(\rho+\rho')$-zCDP.
\end{theorem}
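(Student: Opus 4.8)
The plan is to reduce the theorem to two standard facts. The first is additivity of R\'enyi divergence over products, $D_\alpha(P_1\otimes P_2\,\|\,Q_1\otimes Q_2)=D_\alpha(P_1\|Q_1)+D_\alpha(P_2\|Q_2)$, which converts the two per-mechanism bounds $\rho\alpha$ and $\rho'\alpha$ into the composed bound $(\rho+\rho')\alpha$. The second is that if two high-probability events are determined by the \emph{independent} internal randomness of the two mechanisms, then the probability of their intersection is at least the product of the individual probabilities, which turns the two failure probabilities $\delta,\delta'$ into exactly $\delta+\delta'-\delta\delta'=1-(1-\delta)(1-\delta')$.

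Concretely, I would fix a pair of neighboring streams $x\sim x'$. Applying \cref{def:approx-zcdp} to $\cA$ yields events $E_1$ (a function of $\cA(x)$) and $E_1'$ (a function of $\cA(x')$) with $\Pr[E_1],\Pr[E_1']\ge1-\delta$ and, for every $\alpha>1$, a bound of $\rho\alpha$ on the R\'enyi divergence between $\cA(x)\vert_{E_1}$ and $\cA(x')\vert_{E_1'}$; likewise $\cA'$ supplies events $E_2,E_2'$ with probability at least $1-\delta'$ and divergence bound $\rho'\alpha$. Set $E:=E_1\cap E_2$ and $E':=E_1'\cap E_2'$, which are functions of $\cA''(x)=(\cA(x),\cA'(x))$ and of $\cA''(x')$ respectively, as \cref{def:approx-zcdp} requires. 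Since $\cA$ and $\cA'$ draw their randomness independently, $E_1$ and $E_2$ are independent, so $\Pr[E]=\Pr[E_1]\Pr[E_2]\ge(1-\delta)(1-\delta')=1-(\delta+\delta'-\delta\delta')$, and symmetrically for $\Pr[E']$. For the divergence, conditioning on $E$ makes $\cA''(x)\vert_E$ the product $\cA(x)\vert_{E_1}\otimes\cA'(x)\vert_{E_2}$ and similarly for $x'$, so additivity gives $D_\alpha(\cA''(x)\vert_E\,\|\,\cA''(x')\vert_{E'})\le\rho\alpha+\rho'\alpha=(\rho+\rho')\alpha$ for all $\alpha>1$. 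This is exactly $(\delta+\delta'-\delta\delta')$-approximate $(\rho+\rho')$-zCDP. (If the two mechanisms certify their bounds in opposite directions of the ``$\vee$'' in \cref{def:approx-zcdp}, one uses matching directions throughout, exactly as in \cite{BunS16}.)

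The step I expect to be the main obstacle is \emph{adaptive} composition, where $\cA'$ also reads the output of $\cA$. Then $\cA''(x)$ is no longer a product distribution and the clean additivity breaks. I would handle this by conditioning on the realized first output $y=\cA(x)$: for each fixed $y$, $\cA'(\cdot,y)$ is $\delta'$-approximate $\rho'$-zCDP, which supplies events $E_2(y),E_2'(y)$ chosen measurably in $y$ together with a $\rho'\alpha$ divergence bound. The probability accounting still goes through --- given $y$, the event $E_1$ is determined and $\Pr[E_2(y)\mid y]\ge1-\delta'$, so $\Pr[E_1\cap E_2]\ge(1-\delta)(1-\delta')$ --- while the divergence bound is re-assembled using the conditional (chain-rule) form of $D_\alpha$ together with its quasi-convexity in the conditioning variable. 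Carrying out this measurable selection of the events and the chain-rule bookkeeping is the only genuinely delicate part of the argument, and it is done in \cite{BunS16}; everything else is routine.
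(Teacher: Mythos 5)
The paper does not actually prove this statement --- it is quoted directly from \cite{BunS16} --- so there is no in-paper proof to compare against; your argument is essentially the standard one from that reference and is correct for the composition as stated, which is non-adaptive (both mechanisms run independently on the same input $x$). Intersecting the certified events and using independence gives $\Pr[E_1\cap E_2]\ge(1-\delta)(1-\delta')=1-(\delta+\delta'-\delta\delta')$, and additivity of R\'enyi divergence over product distributions gives the $(\rho+\rho')\alpha$ bound, exactly as you say. One caveat worth flagging: the ``$\vee$'' in \cref{def:approx-zcdp} as printed would, if read literally, let $\cA$ certify only the forward direction and $\cA'$ only the backward one, in which case additivity certifies neither direction for the pair and your ``use matching directions'' remark has nothing to latch onto; in \cite{BunS16} the definition requires both directions (the disjunction here is best read as a typo for a conjunction), and under that reading your step is fine. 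The adaptive-composition discussion in your final paragraph is not needed for the theorem as stated --- $\cA'$ does not read $\cA$'s output --- but the conditional chain-rule treatment you sketch is indeed how the adaptive case is handled in \cite{BunS16}.
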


\begin{theorem}[Relationship to DP \cite{BunS16}]\label{thm:zcdp-to-dp}
For all $\rho,\delta >0$:
\begin{enumerate}
    \item If algorithm $\cA$ is $\rho$-zCDP then $\cA$ is $(\rho+2\sqrt{\rho \log(1/\delta)},\delta)$-DP. Conversely, if $\cA$ is $\eps$-DP then $\cA$ is $(\eps^2/2)$-zCDP. 
    \item If algorithm $\cA$ is $\delta$-approximate $\rho$-zCDP then $\cA$ is $(\eps, \delta+(1-\delta)\delta')$-DP for all $\eps \geq \rho$, where 
    $\delta' = \exp(-(\eps-\rho)^2/4\rho) \cdot \min \{ 1,
        \sqrt{\pi \cdot \rho},
        \frac{1}{1+{\eps-\rho)/2\rho }},
        \frac{2}{1+\frac{\eps-\rho}{2\rho} + \sqrt{(1+\frac{\eps-\rho}{2\rho})+\frac{4}{\pi\rho}}} \}.$
Conversely, if $\cA$ is $(\eps,\delta)$-DP then $\cA$ is $\delta$-approximate $(\eps^2/2)$-zCDP.  
\end{enumerate}
\end{theorem}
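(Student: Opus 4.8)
The plan is to reduce every implication to a statement about the \emph{privacy loss random variable}. Fix a pair of neighboring streams $x \sim x'$, write $P = \cA(x)$ and $Q = \cA(x')$, and let $Z = \log\frac{dP}{dQ}$, viewed as a random variable with $y$ drawn from $P$. Two elementary observations drive everything. First, $\E_{y\sim P}[e^{(\alpha-1)Z}] = e^{(\alpha-1)D_\alpha(P\|Q)}$, so a zCDP bound $D_\alpha(P\|Q) \le \rho\alpha$ is precisely a bound on the moment generating function of $Z$. Second, for any event $O$ and any $\eps \ge 0$,
\[
P(O) - e^\eps Q(O) \;=\; \E_{y\sim P}\big[\mathbf{1}[y\in O]\,(1 - e^{\eps - Z})\big] \;\le\; \Pr_{y\sim P}[Z > \eps],
\]
since $1 - e^{\eps - Z} \le 1$ and is positive only when $Z > \eps$. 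Hence to establish an $(\eps,\delta)$-DP guarantee it suffices to prove $\Pr_{y\sim P}[Z > \eps] \le \delta$ together with the symmetric bound obtained by swapping $x$ and $x'$.

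For the first implication of Part~1, suppose $\cA$ is $\rho$-zCDP. Applying a Chernoff bound and then the moment bound above, for every $\alpha > 1$,
\[
\Pr_{y\sim P}[Z > \eps] \;\le\; e^{-(\alpha-1)\eps}\,\E_{y\sim P}[e^{(\alpha-1)Z}] \;\le\; e^{(\alpha-1)(\alpha\rho - \eps)}.
\]
Minimizing the exponent over $\alpha$ gives the optimal value $\alpha = \frac{\eps + \rho}{2\rho}$ (which exceeds $1$ whenever $\eps > \rho$), yielding $\Pr_{y\sim P}[Z > \eps] \le e^{-(\eps-\rho)^2/(4\rho)}$. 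Setting the right-hand side equal to $\delta$ and solving for $\eps$ gives $\eps = \rho + 2\sqrt{\rho\log(1/\delta)}$, which is exactly the claimed bound.

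For the converse of Part~1, suppose $\cA$ is $\eps$-DP. Then $P$ and $Q$ are mutually absolutely continuous with $e^{-\eps} \le \frac{dP}{dQ} \le e^{\eps}$, so $Z \in [-\eps,\eps]$ almost surely, and moreover $\E_{y\sim P}[e^{-Z}] = \int \frac{dQ}{dP}\,dP = 1$. Hoeffding's lemma applied to the bounded variable $Z$ gives $\log \E_{y\sim P}[e^{\lambda Z}] \le \lambda\,\E_{y\sim P}[Z] + \lambda^2\eps^2/2$ for all $\lambda \in \mathbb{R}$. Evaluating at $\lambda = -1$ and using $\E_{y\sim P}[e^{-Z}] = 1$ forces $\E_{y\sim P}[Z] \le \eps^2/2$; substituting this back in at $\lambda = \alpha - 1 > 0$ gives $\log \E_{y\sim P}[e^{(\alpha-1)Z}] \le (\alpha-1)\alpha\,\eps^2/2$, i.e.\ $D_\alpha(P\|Q) \le \alpha\cdot(\eps^2/2)$. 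Thus $\cA$ is $(\eps^2/2)$-zCDP.

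Part~2 treats the approximate versions, and is where the work lies. If $\cA$ is $\delta$-approximate $\rho$-zCDP, fix the guaranteed events $E$ (depending on $P$) and $E'$ (depending on $Q$), each of probability at least $1-\delta$; the R\'enyi bound then holds for the conditioned distributions $P|_E$ and $Q|_{E'}$, so the Chernoff argument above bounds $\Pr_{y\sim P|_E}[Z > \eps]$, and one pays an extra additive $\delta$ to translate this into a statement about $P$ itself, giving the final failure probability $\delta + (1-\delta)\delta'$. The precise form of $\delta'$ in the statement --- the minimum of four expressions --- requires replacing the crude Chernoff bound by a sharper tail estimate: the moment bound is exactly that of a Gaussian with variance and mean of order $\rho$, so the privacy loss is genuinely sub-Gaussian, and the four terms arise from the standard refinements of the Gaussian tail inequality (bounds on $\int_t^\infty e^{-s^2/2}\,ds$ of the form $\min\{1,\sqrt{\pi\rho},\dots\}$) combined with the exact optimization over $\alpha$; I would either carry out these estimates directly or cite the corresponding lemma of~\cite{BunS16}. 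The converse of Part~2 uses the standard decomposition of approximate DP: $(\eps,\delta)$-DP is equivalent to the existence of events $E, E'$ of probability at least $1-\delta$ on which $P|_E$ and $Q|_{E'}$ are $\eps$-DP-close, and applying the Part~1 converse to these conditioned distributions yields $\delta$-approximate $(\eps^2/2)$-zCDP. The main obstacle throughout is the tight constant tracking needed for $\delta'$; every other step is the two-line moment-generating-function manipulation above.
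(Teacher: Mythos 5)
This theorem is stated in the paper purely as a citation of \cite{BunS16}, with no proof given, so there is nothing in the paper to diverge from; your sketch correctly reproduces the standard arguments from that reference — the Chernoff/moment-generating-function optimization over $\alpha$ (giving exponent $-(\varepsilon-\rho)^2/4\rho$ and hence $\varepsilon=\rho+2\sqrt{\rho\log(1/\delta)}$) for the forward direction, the Hoeffding-lemma trick with $\mathbb{E}_{y\sim P}[e^{-Z}]=1$ forcing $\mathbb{E}[Z]\le\varepsilon^2/2$ for the converse, and event-conditioning for the approximate statements. Deferring the exact four-term form of $\delta'$ to the cited lemma of \cite{BunS16} is consistent with how the paper itself treats this result, so the proposal is fine as is.
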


Finally, we present a simple mechanism that satisfies both DP and zCDP. The Gaussian Mechanism privately answers vector-valued queries by adding Gaussian noise to the true query answer. The noise is proportional to the \emph{$\ell_2$-sensitivity} of the function, which is the maximum change in the function's $\ell_2$-norm from changing a single element in the data.

\begin{definition}[Sensitivity \cite{dwork2006calibrating}]
    Let $f: \cX \to \mathbb{R}^k$ be a function. Its $\ell_2$-sensitivity is defined as 
    \begin{align*}
        \max_{x \sim x' \in \cX} \| f(x) - f(x') \|_2
    \end{align*}
\end{definition}

\begin{theorem}[Gaussian mechanism \cite{dwork2014algorithmic}] \label{thm:gaussian}
    Let $f: \mathcal{X}^n \to \mathbf{R}$ be a function with $\ell_2$-sensitivity at most $\Delta_2$. Let $\cA$ be an algorithm that on input $y$, releases a sample from $\mathcal{N}(f(y),\sigma^2)$. Then $\cA$ is $(\Delta^2_2/(2\sigma^2))$-zCDP. 
\end{theorem}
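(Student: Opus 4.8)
The plan is to reduce the statement to a direct computation of the R\'enyi divergence between two Gaussians that share a common variance, and then invoke the sensitivity hypothesis. Fix a pair of neighboring inputs $x \sim x'$ and write $\mu = f(x)$ and $\mu' = f(x')$, so that $\cA(x)$ is distributed as $\mathcal{N}(\mu,\sigma^2)$ and $\cA(x')$ as $\mathcal{N}(\mu',\sigma^2)$. By the definition of $\rho$-zCDP it suffices to prove that for every $\alpha > 1$,
\[
D_\alpha\bigl(\mathcal{N}(\mu,\sigma^2) \,\big\|\, \mathcal{N}(\mu',\sigma^2)\bigr) \le \frac{\alpha\,\Delta_2^2}{2\sigma^2}.
\]

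First I would evaluate the divergence exactly. Starting from the identity $D_\alpha(P\|Q) = \frac{1}{\alpha-1}\log \int P(y)^\alpha Q(y)^{1-\alpha}\,dy$, substituting the two Gaussian densities produces an integrand whose exponent is $-\frac{1}{2\sigma^2}\bigl(\alpha(y-\mu)^2 + (1-\alpha)(y-\mu')^2\bigr)$, a quadratic in $y$. Completing the square in $y$ splits this quadratic into a perfect square $(y-m)^2$ with $m = \alpha\mu + (1-\alpha)\mu'$ plus a $y$-independent remainder equal to $-\alpha(\alpha-1)(\mu-\mu')^2$; the perfect-square term integrates against the Gaussian normalizer to $1$, and the remainder factors out, giving $\int P(y)^\alpha Q(y)^{1-\alpha}\,dy = \exp\left(\frac{\alpha(\alpha-1)(\mu-\mu')^2}{2\sigma^2}\right)$. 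Dividing the logarithm by $\alpha-1$ then yields the clean identity
\[
D_\alpha\bigl(\mathcal{N}(\mu,\sigma^2) \,\big\|\, \mathcal{N}(\mu',\sigma^2)\bigr) = \frac{\alpha\,(\mu-\mu')^2}{2\sigma^2}.
\]

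Finally I would apply the hypothesis on $\ell_2$-sensitivity: since $|\mu - \mu'| = |f(x) - f(x')| \le \Delta_2$ holds for every pair of neighbors $x \sim x'$, the identity above gives $D_\alpha(\cA(x)\|\cA(x')) \le \frac{\alpha \Delta_2^2}{2\sigma^2}$ for all $\alpha > 1$, which is precisely $(\Delta_2^2/(2\sigma^2))$-zCDP. The argument carries over verbatim to vector-valued $f : \cX^n \to \mathbb{R}^k$ with spherical noise $\mathcal{N}(f(y), \sigma^2 I_k)$: separating coordinates reduces the R\'enyi divergence to $\frac{\alpha \|\mu - \mu'\|_2^2}{2\sigma^2}$, and $\|f(x) - f(x')\|_2 \le \Delta_2$ by definition of sensitivity.

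There is no deep obstacle here; the entire content is the one-line Gaussian integral obtained by completing the square. The only step demanding care is the bookkeeping around the $\frac{1}{\alpha-1}$ prefactor and the sign of the $y$-independent remainder, so that the prefactor cancels exactly and the resulting bound is \emph{linear} in $\alpha$ (as $\rho$-zCDP requires) rather than quadratic.
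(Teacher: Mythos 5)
Your proof is correct: the paper states this as an imported result (cited from the literature, with the zCDP form due to Bun and Steinke) and gives no proof of its own, and your argument — computing $D_\alpha(\mathcal{N}(\mu,\sigma^2)\,\|\,\mathcal{N}(\mu',\sigma^2)) = \frac{\alpha(\mu-\mu')^2}{2\sigma^2}$ by completing the square and then invoking $|f(x)-f(x')|\le\Delta_2$ — is exactly the standard proof found in those references, including the correct linear-in-$\alpha$ bound and the coordinate-wise extension to spherical Gaussian noise in $\mathbb{R}^k$. No gaps.
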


\section{Estimating the Number of Distinct Elements}\label{sec:distinct-elements}

In this section, we first present our DP algorithm (Algorithm~\ref{alg:large-universe-sampling}) in Section \ref{s.distinctalgo}, which outputs an approximation of the number of distinct elements in the stream in a continual release manner. We then present our main results for this algorithm in Section \ref{s.distinctresults}, namely its privacy, accuracy, and space guarantees, will details of the analysis deferred to Section \ref{sec:analysis-cd}. Finally, we present the main tools used in our analysis --- the \textsf{KSET} data structure in \cref{sec:kset} and the DP mechanism for counting in \cref{sec:bm-count-distinct}.

\subsection{Algorithm and Description}\label{s.distinctalgo}

Our algorithm \textsf{CountDistinct} (\cref{alg:large-universe-sampling}) for counting the number of distinct elements in a stream uses three main ingredients: (1) a \textsf{KSET} data structure which we use to store distinct elements from the stream $x$ with low space, (2) a binary-tree mechanism \textsf{BinaryMechanism-CD} for estimating the summation stream $s_x(t) \in \{-1,0,1\}^T$ which we obtain by comparing the cardinality of the distinct element set returned by the \textsf{KSET} data structure at timesteps $t-1$ and $t$, and (3) a blocklist $\cB$ which with high probability, stores all items whose \occur\ is too large.  

At a high-level, \textsf{CountDistinct} executes the following process on different subsamples. At each timestep $t\in [T]$, if the data element $x_t$ is non-empty, then the \textsf{COUNTING-KSET} subroutine updates the \textsf{KSET} data structure with $x_t$, and obtains the current set of distinct elements from the \textsf{KSET} (if the \textsf{KSET} does not fail; we discuss later how to deal with failures in the \textsf{KSET} as this case requires more care).  If the item corresponding to $x_t$ is present in the blocklist $\cB$, then the \textsf{COUNTING-KSET} subroutine does not update the \textsf{KSET} data structure with $x_t$. Next, the algorithm computes $s_x(t)$ to be the difference in counts between the current distinct elements set and the previously stored set, and feeds $s_x(t)$ into \textsf{BinaryMechanism-CD}, which produces a differentially privat count of the number of distinct elements  $\hat{s}_x(t)$. Finally $\hat{s}_x(t)$ is compared to a fixed threshold $\tau$. 
If $\hat{s}_x(t)$ is greater than $\tau$, then \textsf{COUNTING-KSET} returns TOO-HIGH, otherwise it returns $\hat{s}_x(t)$ to the main algorithm. The algorithm then performs the blocklisting step, which adds the current item to the blocklist $\cB$ with probability $p$ --- this ensures that the elements with high \occur\ are not likely to be considered by our algorithm in future iterations. 
This process is executed in $\log(T)$ parallel instances of \textsf{COUNTING-KSET} with different sampling rates using a hash function, in order to ensure that at least one sampling rate yields a good approximation to the number of distinct elements.

\begin{algorithm}[!ht]
\caption{\textsf{CountDistinct}}
\label{alg:large-universe-sampling}
\begin{algorithmic}[1]
\REQUIRE{ Stream $x_1, \ldots, x_T  \in \cU$, relative error $\eta \in (0,0.5)$, privacy parameter $\rho$, failure probability $\beta$,  boolean $ob$ that signals if we have an \occur\ bound on elements, \occur\ bound $W$
}
\STATE{ Let  $\ L \leftarrow \lceil \log (T) \rceil, \ \lambda \leftarrow 2\log (40 L/\beta)$ }
\IF{$ob$ is true}
\STATE{$\ \gamma = \sqrt{ \frac{  4  (W+1) (\log T + 1)^3  \log (10 (\log T + 1)/ \beta) }{\rho} }   $}
\ELSE
\STATE{$\ \gamma =  \sqrt{ \frac{  4  (T^{2/3} +1) (\log T + 1)^3  \log (10 (\log T + 1)/ \beta) }{\rho} }  + 3T^{1/3} \log (T^{1/3} \lceil \log T \rceil / \beta) $}
\ENDIF
\STATE{Let  $g:\cU \to [L]$ be a $\lambda$-wise independent hash function; for every $a \in \cU$, $i \in [L]$, $\Pr[g(a)=i] = 2^{-i}$, $\Pr[g(a)= \perp] = 2^{-L}$}\alglinelabel{line:hash}
\STATE{Initialize empty streams $\cS_1,\ldots, \cS_L$} \Comment{$\cS_i$ is the stream of noisy distinct counts and TOO-HIGHs}

\STATE{Initialize $\textsf{COUNTING-KSET}_1,\ldots,\textsf{COUNTING-KSET}_L$ with \occur\ bound $W$ if $ob=true$ or \occur\ bound $T^{2/3}$ if $ob=false$}\alglinelabel{line:kset} %\label{line:kset}
\STATE{Initialize blocklist $\cB = \emptyset$}

\FOR{ update $x_t$}

\FOR{$i\in [L]$}
\IF{$x_t \neq \perp$ and $g(x_t) = i$}
\STATE{$\cS_i[t] = $ \textsf{COUNTING-KSET}$_i$.\textsf{Update}($x_t, \cB$)\alglinelabel{line:replace-count} } \Comment{see \cref{alg:kset-count}}
\IF{$x_t \notin \cB$ and $ob$ is false}\label{line:blocklist-update}
\STATE{Add $x_t$ to $\mathcal{B}$ with probability $p = \frac{\log (T^{1/3} L/\beta)}{T^{2/3}}$} \alglinelabel{line:blocklist} 
\ENDIF

\ELSE
\STATE{$\cS_i[t] =$ \textsf{COUNTING-KSET}$_i$.\textsf{Update}($\bot,  \cB$)\alglinelabel{line:replace-count-empty}} \Comment{see \cref{alg:kset-count}}
\ENDIF
\ENDFOR
\STATE{\textbf{Output:} $\cS_i[t] \cdot 2^i$ for the largest $i \in [L]$ such that $\cS_i[t] $ is not TOO-HIGH and $\cS_i[t] \geq \max \{  \gamma/\eta, 32\lambda / \eta^2 \}$. (If such $i$ does not exist, \textbf{output} 0.) }\alglinelabel{line:outout-dist}
\ENDFOR
\end{algorithmic}
\end{algorithm}

\begin{algorithm}[!th]
\caption{\textsf{COUNTING-KSET}}
\label{alg:kset-count}
\begin{algorithmic}[1]
\REQUIRE{ Stream update $x_1, \ldots, x_T  \in \cU$, relative error $\eta \in (0,0.5)$, privacy parameter $\rho$, failure probability $\beta$, substream index $i$, \occur\ bound $W$ on elements, blocklist $\cB$
}
\STATE{ {Initialize} $\tau = 16 \max \{  \gamma/\eta, 32\lambda / \eta^2 \} +   2\sqrt{2} \frac{(\log T + 1)^{3/2}\sqrt{W \log(20T \lceil \log T \rceil /\beta)}}{\sqrt{\rho}}$}
\STATE{ Initialize $k= 16 \max \{  \gamma/\eta, 32\lambda / \eta^2 \} +   4\sqrt{2} \frac{(\log T + 1)^{3/2}\sqrt{W \log(20T \lceil \log T \rceil /\beta)}}{\sqrt{\rho}}$ }
\STATE{Initialize 
$\textsf{BinaryMechanism-CD}_i$ with parameters: privacy parameter $\rho/L$ and occurrency bound $W$ \alglinelabel{line:bm-init}}
\STATE{Initialize $\textsf{KSET}_i$ data structure with parameters: capacity $k$ and failure probability $ \beta / (2TL)$}\alglinelabel{line:kset-count-init}
\STATE{Initialize $F_{i,\textsf{last}} = 0$, $t_{last} = 0$ }  

\STATE{\textbf{Update}($x_t, \cB$):}
\FOR{ update $x_t$}
\IF{$x_t \neq \perp$ and $x_t \not\in \cB$}
\STATE{$\textsf{KSET}_i$.\textsf{Update}($x_t$) }  
\ENDIF
\STATE{Let $S_{i} \leftarrow \textsf{KSET}_i.\textsf{ReturnSet} $ } \Comment{Only keep the elements but not their counts }

\IF{$S_{i} \neq $ NIL }
\STATE{$t_{\textsf{diff}} = t - t_{\textsf{last}}$}
\STATE{$\textsf{diff}=|S_{i}| - F_{i,\textsf{last}}$\alglinelabel{line:sum-diff}}

\FOR{$j=1$ to $\vert \textsf{diff} \vert $}
\IF{$\textsf{diff} >0$}
\STATE{$\hat{s}_i \leftarrow $\textsf{BinaryMechanism-CD}$_i$.\textsf{Update}($1$)} 
\ELSIF{$\textsf{diff}<0$} 
\STATE{$\hat{s}_i \leftarrow $\textsf{BinaryMechanism-CD}$_i$.\textsf{Update}($-1$)}
\ENDIF
 \ENDFOR

 \FOR{$j=1$ to $t_{\textsf{diff}} - \vert \textsf{diff} \vert $}
 \STATE{$\hat{s}_i \leftarrow $\textsf{BinaryMechanism-CD}$_i$.\textsf{Update}($0$)}
 \ENDFOR

 \STATE{Save $F_{i,\textsf{last}} \leftarrow |S_{i}|$}
 \STATE{$t_{\textsf{last}}=t$} \alglinelabel{line:last}
 
 \ENDIF

\IF{$\hat{s}_i > \tau$ or $S_i = $NIL 
}\alglinelabel{line:threshold}
 \STATE{Return TOO-HIGH}
 \ELSE
\STATE{Return $\hat{s}_i$ } 
 \ENDIF
\ENDFOR
\end{algorithmic}
\end{algorithm}

More concretely, \cref{alg:large-universe-sampling} takes in a boolean flag $ob$ as input, indicating whether there is a promised bound on the \occur\ of the stream --- if $ob$ is true, the algorithm will operate under the assumption that the input stream has an \occur\ upper bounded by $W$, and thus does not employ the blocklisting technique. If $ob$ is false, the algorithm  imposes an internal bound of $W=T^{2/3}$ and executes the blocklisting procedure, in which it fixes a sampling probability $p$ (in Line~\ref{line:blocklist}), and every time an element occurs, the element is sampled with probability $p$ to be stored in a blocklist $\cB$. 

\cref{alg:large-universe-sampling}  uses a hash function to generate multiple parallel substreams $i \in [L]$ of the input stream (see Line \ref{line:hash}) where $L=\lceil \log(T) \rceil$, all subsampled with different sampling rates. 
The different sampling rates ensure that at least one sampling rate will yield a good approximation of the number of distinct elements in the original stream.

For each instance, $i\in L$, \cref{alg:large-universe-sampling} initializes the DP subroutine \textsf{COUNTING-KSET}$_i$ (\cref{alg:kset-count}), which uses two key subroutines:
\begin{enumerate}
    \item {\textsf{KSET} (Algorithm~\ref{alg:kset})}: The $k$-set structure is a dictionary data structure that supports insertion and deletion of data items and either returns, with high probability, the set of items $S$ that are present in the dictionary if $\vert S \vert \leq k$, or returns NIL (failure condition). Additional details are deferred to \cref{sec:kset}.
    \item {\textsf{BinaryMechanism-CD} (Algorithm~\ref{alg:bm})}:  This subroutine is used to privately count the sum of the difference in the count of distinct elements between consecutive timesteps. The mechanism is an extension of the Binary Mechanism~\cite{ChanSS2011, DworkNPR10}; however, the major differences are that it uses Gaussian noise (similar to~\cite{JainKRSS23}) and the input is $\{-1,0,1\}^T$ (as opposed to $\{0,1\}^T$ in the original). Although the \textsf{BinaryMechanism-CD} algorithm is similar to prior work, its privacy analysis needs to be handled carefully in our use-case, as it is closely tied to the failure behavior of the \textsf{KSET}. Additional details are deferred to \cref{sec:bm-count-distinct}.
\end{enumerate}

\textsf{COUNTING-KSET}$_i$ (\cref{alg:kset-count}) takes as input $x_t$ and updates the \textsf{KSET} data structure with $x_t$ (as long as $x_t$ is not in blocklist $\cB$ or equal to $\bot$). 
If the \textsf{KSET}$_i$ does not fail, then \textsf{COUNTING-KSET}$_i$ updates \textsf{BinaryMechanism-CD}$_i$ with the difference of the distinct sample size at time $t$ and the distinct sample size at the last timestep before $t_{last}< t$ (Line \ref{line:last}) that the \textsf{KSET} did not fail. Then \textsf{BinaryMechanism-CD}$_i$ outputs the noisy count of distinct elements, denoted $\hat{s}_i$. If the \textsf{KSET} does fail, i.e., $S_i=\textrm{NIL}$, then \textsf{COUNTING-KSET}$_i$ skips to Line \ref{line:threshold} which returns TOO-HIGH if the \textsf{KSET} fails, or $\hat{s}_i$ exceeds the threshold $k$. Note that this step is crucial for proving that \textsf{COUNTING-KSET}$_i$ is DP, which is presented in more detail in \cref{s.privacyresult}.  We note that \textsf{COUNTING-KSET}$_i$ does not take as input the flag $ob$, because if $ob=true$, then $\cB=\emptyset$ in \cref{alg:large-universe-sampling}, so taking $\cB$ as input is sufficient. 

Finally, \cref{alg:large-universe-sampling} maintains a stream of noisy distinct counts or TOO-HIGHs, denoted $\cS_i$, for each of the $\log T$ instances of \textsf{COUNTING-KSET}$_i$ instance. These streams are used to output $\cS_i[t] \cdot 2^i$ such that $\cS_i[t]$ is not TOO-HIGH in Line~\ref{line:outout-dist}, which is the private count of distinct elements in the $i$-th stream at time $t$, normalized by (the inverse of) that stream's sampling rate. This final count is output for all times $t \in [T]$.

\subsection{Main Results}\label{s.distinctresults}

In this subsection, we present our main results: \cref{thm:cd-cond-inf} and \cref{thm:cd-uncond-inf}, which summarize our main results on the privacy (\cref{thm:new_privacy}), accuracy (\cref{thm:acc_main}), and space (\cref{thm:space}) of \cref{alg:large-universe-sampling}. These results are all presented in greater detail in Section \ref{sec:analysis-cd}.

\cref{thm:cd-cond-inf} summarizes our main results for when there is a promised upper bound on the \occur\ of the input stream. 

\begin{corollary}\label{thm:cd-cond-inf}
   For all $\eta>0$ and $\beta \in (0,1)$, and a stream of length $T$, universe size $|\cU| = poly(T)$, and promised \occur\ $\leq W$, 
    there exists a $\beta$-approximate $\rho$-zCDP algorithm in the  turnstile model under continual release that, with probability $1-2\beta$, outputs a $(1 \pm \eta, \max \{ O(\gamma/\eta), O(\lambda / \eta^2)\})$-approximation to the number of distinct elements using space $O(\sqrt{W} \cdot \text{polylog}(T/\beta)) \cdot \text{poly}(\frac{1}{\rho \eta})$ where $\gamma =  O\left(\sqrt{ \frac{  W (\log T)^3  \log ( \log T/ \beta) }{\rho } }\right) $
    and $\lambda = O(\log (  \log (T) / \beta ))$.
\end{corollary}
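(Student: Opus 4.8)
The plan is to establish the corollary --- which packages the privacy, accuracy, and space guarantees of Algorithm~\ref{alg:large-universe-sampling} (formalized in \cref{sec:analysis-cd}) when run with $ob=\textsf{true}$ and \occur\ bound $W$ --- by proving the three guarantees separately and assembling them. A simplification in this regime: the blocklist $\cB$ is initialized empty and is modified only inside the $ob=\textsf{false}$ branch (Line~\ref{line:blocklist-update}), so $\cB=\emptyset$ throughout; hence each \textsf{COUNTING-KSET}$_i$ merely feeds its subsampled substream $\{x_t : g(x_t)=i\}$ to a private binary mechanism and a \textsf{KSET}, the only randomness being the Gaussian noise, the hash $g$, and the internal \textsf{KSET} hashing.

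\textbf{Privacy.} Fix neighboring $x\sim x'$. I would show each \textsf{COUNTING-KSET}$_i$, run with budget $\rho/L$, is $(\beta/L)$-approximate $(\rho/L)$-zCDP, then compose over $i\in[L]$ via \cref{thm:zcdp-comp} (the output in Line~\ref{line:outout-dist} is post-processing of the per-level outputs) to obtain $\beta$-approximate $\rho$-zCDP. For one level: \textsf{BinaryMechanism-CD}$_i$ is fed the summation stream $s$ of subsampled present-set cardinalities, and changing one event of an item $u$ with $\occurm(u,x)\le W$ changes $s$ in at most $O(W)$ coordinates, since the flippancy of $u$ is at most its occurrency (as in~\cite{JainKRSS23}); through the binary tree, the vector of partial sums then has $\ell_2$-sensitivity $O(\sqrt{W}\,\mathrm{polylog}(T))$, and the Gaussian noise is calibrated by \cref{thm:gaussian} so that this substep is $(\rho/L)$-zCDP. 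The delicate point, flagged in the text before \cref{alg:kset-count}, is the \textsf{KSET} failure branch: returning TOO-HIGH when $S_i=\textrm{NIL}$ is a data-dependent decision not directly masked by Gaussian noise. I would handle it by defining a good event $E$ over the \textsf{KSET} hashing, of probability $\ge 1-\beta/L$, under which \textsf{KSET}$_i$ never collides internally, so that $S_i=\textrm{NIL}$ holds exactly when the true subsampled cardinality exceeds the capacity $k$. Because $k$ is chosen a constant factor plus a ``maximum-noise'' additive term above the threshold $\tau$, on $E$ the event ``\textsf{COUNTING-KSET}$_i$ returns TOO-HIGH'' coincides (up to a further failure probability absorbed into $\beta/L$) with ``$\hat s_i>\tau$'', i.e., with a post-processing of the DP output $\hat s_i$; applying \cref{def:approx-zcdp} with this $E$ gives the single-level claim, and \cref{thm:zcdp-to-dp} converts to $(\eps,\delta)$-DP if desired.

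\textbf{Accuracy.} Condition on: (i) every \textsf{BinaryMechanism-CD}$_i$ noise being $O(\gamma)$ at all $T$ timesteps and $L$ levels (Gaussian tails and a union bound over $TL$ terms), and (ii) no in-use \textsf{KSET} failing, which holds with probability $\ge 1-\beta/2$ by the failure parameter $\beta/(2TL)$. Fix time $t$ with true distinct count $d_t$. By $\lambda$-wise independence of $g$ with $\lambda=\Theta(\log(L/\beta))$, a tail bound for sums of $\lambda$-wise independent indicators shows that at the level $i^\star$ with $2^{-i^\star}d_t=\Theta(\max\{\gamma/\eta,\lambda/\eta^2\})$ the subsampled cardinality is within a $(1\pm\eta/2)$ factor of $2^{-i^\star}d_t$. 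Rescaling, $\cS_{i^\star}[t]\cdot 2^{i^\star}$ equals $d_t$ up to a $(1\pm\eta/2)$ factor plus additive noise $O(\gamma)\cdot 2^{i^\star}=O(\eta^2 d_t)$, which folds into the multiplicative term, leaving the residual additive error $\max\{O(\gamma/\eta),O(\lambda/\eta^2)\}$ (the dominant error when $d_t$ is below threshold and the algorithm outputs $0$). It then remains to verify that the level \emph{selected} in Line~\ref{line:outout-dist} --- the largest $i$ that is not TOO-HIGH and above threshold --- is sandwiched near $i^\star$, using monotonicity of the subsampled counts in $i$, the gap between $\tau$ and the ``above-threshold'' cutoff, and the $O(\gamma)$ noise bound. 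A union bound over $t\in[T]$ gives success probability $1-2\beta$.

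\textbf{Space and assembly.} The dominant cost is the $L=O(\log T)$ copies of \textsf{KSET}$_i$, each of capacity $k=O\big(\max\{\gamma/\eta,\lambda/\eta^2\}+\mathrm{polylog}(T/\beta)\sqrt{W/\rho}\big)$; since $\gamma=O\big(\sqrt{W(\log T)^3\log(\log T/\beta)/\rho}\big)$ and $\lambda=O(\log(\log T/\beta))$, and a \textsf{KSET} of capacity $k$ uses $\tilde O(k)$ words (\cref{sec:kset}), the total \textsf{KSET} space is $O(\sqrt{W}\,\mathrm{polylog}(T/\beta))\cdot\mathrm{poly}(1/(\rho\eta))$. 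The $L$ binary mechanisms use $O(\mathrm{polylog}(T))$ space each, storing $g$ costs $O(\lambda\log|\cU|)=\mathrm{polylog}(T/\beta)$, and $\cB=\emptyset$; summing gives the claimed bound. Combining the three guarantees yields the corollary. I expect the \textbf{main obstacle} to be the privacy argument, and specifically the coupling that licenses treating the $S_i=\textrm{NIL}$ branch as post-processing of the Gaussian-noised count: this is what forces the precise quantitative relationship among $k$, $\tau$, the noise scale, and the failure probabilities, and is why this case is singled out as needing careful handling.
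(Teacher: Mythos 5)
Your overall architecture (per-level privacy with budget $\rho/L$ plus composition, sensitivity via occurrency through the binary tree, KSET capacity $k$ set a max-noise margin above the threshold $\tau$, level-selection accuracy via $\lambda$-wise independent subsampling, space dominated by the $L$ \textsf{KSET}s) matches the paper. The genuine gap is in the step you yourself flag as the crux: you propose to condition on a good event $E$ (no spurious \textsf{KSET} collisions) and then declare that ``\textsf{COUNTING-KSET}$_i$ returns TOO-HIGH'' coincides with ``$\hat s_i>\tau$,'' i.e.\ post-processing of an already-DP output. But inside \textsf{COUNTING-KSET}$_i$ the quantity $\hat s_i$ is not yet known to be DP, because the \textsf{KSET} failure schedule feeds back into the binary mechanism's \emph{inputs}: during a NIL window the mechanism receives no updates at all (so the only available $\hat s_i$ is a stale value from the last recovery time, and the TOO-HIGH emitted in that window is triggered directly by the raw data, not by any noisy quantity), and upon recovery it is fed a data-dependent batch of $\pm 1$'s and $0$'s whose composition is derived from the \textsf{KSET} output. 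So the update schedule and the stream seen by \textsf{BinaryMechanism-CD} are themselves functions of the unprotected data, and ``apply \cref{def:approx-zcdp} with this $E$'' does not by itself yield a Renyi-divergence bound between the conditional transcript distributions on neighboring streams --- you would still need to argue that the entire transcript (values \emph{and} the pattern of TOO-HIGHs) is distributed as a post-processing of a mechanism whose sensitivity you can bound.

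This is exactly what the paper's proof supplies and your sketch omits: it introduces the idealized \textsf{COUNTING-DICT}$_i$, which updates \textsf{BinaryMechanism-CD}$_i$ at every timestep with exact differences (so its TOO-HIGH decision really is post-processing of $\hat s_i$, and its sensitivity is cleanly bounded by $2\sqrt{(W+1)(\log T+1)}$ in \cref{lem:bm-ec-dp}), and then proves a coupling statement (\cref{lem:new_privacy}): with $k\ge\tau+O(\mathrm{polylog}(T/\beta)\sqrt{W/\rho})$ and shared randomness, the full output transcripts of \textsf{COUNTING-KSET}$_i$ and \textsf{COUNTING-DICT}$_i$ agree except on two bad events of total probability $\beta/L$ (spurious NIL below capacity; noisy DICT count $\le\tau$ while the true count exceeds $k$), the case analysis handling precisely the stale-value/skipped-update windows via the catch-up batches having the same length and sum. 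Privacy of \textsf{COUNTING-KSET}$_i$ then follows from privacy of \textsf{COUNTING-DICT}$_i$ plus total-variation closeness (\cref{cor:kset-count-dp}), and the same coupling is reused to transfer accuracy (\cref{thm:acc_main}) --- a second place where your plan implicitly needs it. To repair your write-up you would either reproduce this coupling or prove directly that the data-dependent update schedule is, on your good events, a measurable function of the noisy outputs alone; as stated, the post-processing claim does not go through. (A minor additional point: the corollary's guarantee, like \cref{thm:acc_main}, is per fixed timestep; your union bound over all $T$ timesteps inflates $\gamma$'s $\log(\log T/\beta)$ factor to $\log(T/\beta)$, which no longer matches the stated bound.)
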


The next result (\cref{thm:cd-uncond-inf}) summarizes our theoretical guarantees when the stream has unbounded \occur.

\begin{corollary}
\label{thm:cd-uncond-inf}\sloppy 
    For all $\eta>0$ and $\beta \in (0,1)$,  and a stream of length $T$ and universe size $|\cU| = poly(T)$, 
    there exists a $2\beta$-approximate $\rho$-zCDP algorithm in the  turnstile model under continual release that, with probability at least $1- 2\beta$ outputs a $(1 \pm \eta, \max \{ O(\gamma/\eta), O(\lambda / \eta^2)\})$-approximation to the number of distinct elements using space $O(T^{1/3} \cdot \text{polylog}(T/\beta)) \cdot \text{poly}(\frac{1}{\rho \eta})$ where  $ \gamma =  O\left(T^{1/3}\sqrt{ \frac{   (\log T)^3  \log ( \log T/ \beta) }{\rho } }\right)  $
    and $\lambda = O(\log (  \log (T)  / \beta ))$.
\end{corollary}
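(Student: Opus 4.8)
The plan is to instantiate \cref{alg:large-universe-sampling} with the flag $ob=\mathrm{false}$, so that the algorithm imposes the internal \occur\ bound $W=T^{2/3}$, runs the blocklisting procedure with sampling probability $p=\log(T^{1/3}L/\beta)/T^{2/3}$ on Line~\ref{line:blocklist}, and initializes each \textsf{COUNTING-KSET}$_i$ (hence each \textsf{BinaryMechanism-CD}$_i$ and \textsf{KSET}$_i$) with occurrency bound $T^{2/3}$; here $L=\lceil\log T\rceil$. The corollary then follows by combining the three main guarantees for \cref{alg:large-universe-sampling}: privacy (\cref{thm:new_privacy}), accuracy (\cref{thm:acc_main}), and space (\cref{thm:space}), each evaluated at $W=T^{2/3}$.

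For privacy, the role of the blocklist is to reduce the unbounded-occurrency case to the bounded one analyzed for \cref{thm:cd-cond-inf}. I would define the good event $\cG$ over the blocklist coins that every universe element $u$ with $\occurm(u,x)>W$ is added to $\cB$ within its first $W$ occurrences, so that it is updated into each \textsf{KSET}$_i$ at most $W$ times. Since at most $T/W=T^{1/3}$ elements can have occurrency exceeding $W$, and a fixed such element evades the blocklist for $W$ occurrences with probability $(1-p)^{W}\le e^{-pW}=\beta/(T^{1/3}L)$, a union bound gives $\Pr[\neg\cG]\le\beta/L\le\beta$. Conditioned on $\cG$, no \textsf{KSET}$_i$ sees any element more than $W$ times, so a single-timestep change to the input stream perturbs the increment stream fed into each \textsf{BinaryMechanism-CD}$_i$ in $O(W)$ coordinates (equivalently, $\tilde{O}(\sqrt{W})$ in the $\ell_2$ norm once the tree structure is accounted for) — exactly the sensitivity regime for which \textsf{BinaryMechanism-CD} initialized with occurrency bound $W$ and budget $\rho/L$ is calibrated — so the composition over the $L$ substreams is $\rho$-zCDP. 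Folding in the $\le\beta$ failure probability of the \textsf{KSET} data structures (which, together with the \textsf{TOO-HIGH} handling in Line~\ref{line:threshold} of \cref{alg:kset-count}, is what lets the analysis ignore \textsf{NIL} outputs) then yields, via \cref{def:approx-zcdp} and \cref{thm:zcdp-comp}, that \cref{alg:large-universe-sampling} with $ob=\mathrm{false}$ is $2\beta$-approximate $\rho$-zCDP. The delicate point — and the step I expect to be the main obstacle — is that the blocklist is itself randomized and data-dependent, so a change at one timestep alters not only one element's occurrence pattern but also when (and whether) that element enters $\cB$, hence all downstream \textsf{KSET}/\textsf{BinaryMechanism-CD} updates; one has to argue this cascade still amounts to an occurrency-$W$-bounded perturbation of each \textsf{BinaryMechanism-CD} input on a good event that holds simultaneously for the two neighboring streams, using that their sets of high-occurrency elements differ by at most one and the differing element's occurrency differs by at most one.

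For accuracy, I would apply \cref{thm:acc_main} at $W=T^{2/3}$. Relative to the bounded-occurrency analysis the only new error source is blocklisting: a blocklisted element is no longer counted, and by a Chernoff bound over the at most $T$ sampling events on Line~\ref{line:blocklist} (whose total success probability is $pT=\tilde{O}(T^{1/3})$), with probability $\ge 1-\beta$ the blocklist has size $\tilde{O}(T^{1/3})$. Each blocklisted element contributes at most one to the distinct-count error in the substream that would have sampled it, so after rescaling by $2^i$ its contribution to the final estimate is at most $\tilde{O}(T^{1/3})$ in total; this is dominated by the term $\gamma=\tilde{O}\big(T^{1/3}\sqrt{(\log T)^3\log(\log T/\beta)/\rho}\big)$ coming from the Gaussian noise of the $L$ binary mechanisms at occurrency bound $T^{2/3}$, and it is exactly the reason the $ob=\mathrm{false}$ branch of \cref{alg:large-universe-sampling} sets $\gamma$ to that square-root expression plus the correction $3T^{1/3}\log(T^{1/3}\lceil\log T\rceil/\beta)$. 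Combining this $\le\beta$ event with the $\le\beta$ \textsf{KSET}-failure event gives the claimed $1-2\beta$ overall success probability, and the subsampling argument (at least one of the $L$ sampling rates produces a sampled distinct set whose size falls in the window where the $(1\pm\eta)$-multiplicative guarantee holds) is unchanged, so the output is a $(1\pm\eta,\max\{O(\gamma/\eta),O(\lambda/\eta^2)\})$-approximation with $\lambda=O(\log(\log T/\beta))$ as in \cref{thm:cd-cond-inf}.

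For space, I would apply \cref{thm:space} at $W=T^{2/3}$: each \textsf{KSET}$_i$ has capacity $k=\tilde{O}(\sqrt{W})\cdot\mathrm{poly}(1/(\rho\eta))=\tilde{O}(T^{1/3})\cdot\mathrm{poly}(1/(\rho\eta))$ and uses $\tilde{O}(k)$ words, each \textsf{BinaryMechanism-CD}$_i$ uses $O(\log T)$ words, there are $L=O(\log T)$ substreams, the blocklist stores $\tilde{O}(T^{1/3})$ elements of $\cU$ at $O(\log|\cU|)=O(\log T)$ bits each, and the $\lambda$-wise independent hash $g$ needs $\mathrm{polylog}(T/\beta)$ space. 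Summing these gives total space $O(T^{1/3}\cdot\mathrm{polylog}(T/\beta))\cdot\mathrm{poly}(1/(\rho\eta))$, which completes the proof.
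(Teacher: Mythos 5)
Your proposal takes essentially the same route as the paper: \cref{thm:cd-uncond-inf} is obtained precisely by running \cref{alg:large-universe-sampling} with $ob=\mathrm{false}$ (so the internal bound $W=T^{2/3}$ and the blocklist are active) and combining the privacy, accuracy, and space guarantees of \cref{thm:new_privacy}, \cref{thm:acc_main}, and \cref{thm:space} evaluated at $W=T^{2/3}$. Your supplementary sketches of why those theorems hold in the $ob=\mathrm{false}$ case (the blocklisting good event, the data-dependent-blocklist coupling issue you correctly flag, and the blocklist-size Chernoff bound) mirror the paper's \cref{lem:privacy_occ_bound}, \cref{lem:bl-id}, \cref{lem:new_privacy}, and \cref{lem:blocklist_size}, so the argument is essentially identical.
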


\subsection{\textsf{KSET} data structure}\label{sec:kset}

In this subsection, we present the \textsf{KSET} data structure, first introduced in \cite{Ganguly07}. In the non-DP turnstile setting, one key strategy for solving a variety of problems including estimating the number of distinct elements in sublinear space is to use space-efficient and dynamic {\it distinct sample} data structures. A distinct sample of a stream with sampling probability $p$ is a set of items such that each of the {\it  distinct} items in the stream has an equal and independent probability of $p$ of being included in the set. \cite{Ganguly07} introduced the \textsf{KSET} data structure for this problem, which can be used to (non-privately) give a $(1+\alpha)$-approximation of the number of distinct elements in a stream using roughly $O(\frac{1}{\alpha^2}(\log(T)+\log(\vert \cU \vert)\log(\vert \cU \vert))$ space. Our implementation of \textsf{KSET} is largely similar to the original version in \cite{Ganguly07}, with one small change that is necessary for our privacy analysis.

Our primary reason for using the \textsf{KSET} data structure in the DP setting is that it allows us to maintain a distinct sample. This in turn helps us add a smaller amount of noise (proportional to the bounded \occur\ of data items) to the final distinct elements estimate released through a binary mechanism. However, the privacy analysis for the output of the \textsf{KSET} is quite involved and discussed in \cref{s.privacyresult}. 
To the best of our knowledge, a distinct sample data structure has not previously been used in the DP setting.

The $k$-set structure (which we denote as \textsf{KSET} and is formally described in Algorithm~\ref{alg:kset}) 
is a dictionary data structure that supports insertion and deletion of data items and, with high probability, either returns the set of items $S$ present in the dictionary as long as $\vert S \vert \leq k$, or otherwise returns NIL. The structure is represented as a 2D array $H[R \times B]$ which consists of $R$ hash tables, each containing $B$ buckets, where $R=\lceil \log \frac{k}{\beta} \rceil$ and $B=2k$. For each $r \in [R], b\in [B]$, the bucket $H[r,b]$ contains a \textsf{TESTSINGLETON} data structure, which tests whether or not the bucket $H[r,b]$ contains a single universe element. Details on the implementation of the \textsf{TESTSINGLETON} data structure are deferred to \cref{app:kset}. 
 The $r$-th hash table in $H[R \times B]$ uses a pairwise independent hash function $h_r: \cU \to [B]$. Upon the arrival of an update $x_t$, the \textsf{KSET} structure contains two main operations:
 \begin{enumerate}
     \item The Update$(x_t)$ operation first increments (resp., decrements) the total number $m$ of data items in the structure (which is initialized to zero) if the update is an insertion (resp., deletion) of an item. Next, for every hash table $r \in [R]$, we update the corresponding \textsf{TESTSINGLETON} structure in the bucket $H[r,h_r(x_t)]$.   
     \item The ReturnSet$()$ operation, for every $r\in [R]$, iterates over the buckets $b \in [B]$, and checks whether the entry in the hash table $H[r,b]$ is a SINGLETON. If so, then it retrieves the data item along with its frequency and keeps track of the set of elements ($S$) as well as the total sum of frequencies of items ($m_s$) in $S$. 
If $m_s=m$ and $\vert S \vert \leq k$, then it returns $S$. Otherwise, the function returns NIL. We note that the latter check for $\vert S \vert \leq k$ is not included in the original version of \textsf{KSET} from~\cite{Ganguly07}, but it is crucial for our privacy analysis. We include this check to ensure that the \textsf{KSET} returns NIL with probability 1 in the event that there are more than $k$ elements. 
  \end{enumerate}
 
 Next, we state some simple properties regarding the accuracy of the \textsf{KSET} structure in Lemma~\ref{lem:k-set-acc} that are used in the analysis of Algorithm \ref{alg:large-universe-sampling}. 

\begin{algorithm}[!ht]
\caption{KSET data structure \cite{Ganguly07}}
\label{alg:kset}
\begin{algorithmic}[1]
\REQUIRE{ Capacity parameter $k$, failure probability $\beta$}
\STATE{Initialize 2D array $H[\log \frac{k}{\beta} \times 2k]$}
\STATE $R \leftarrow \lceil \log \frac{k}{\beta} \rceil, \ B \leftarrow 2k$
\STATE Let $h_r: \cU \rightarrow [B]$ be a pairwise independent function for $r = 1,\ldots, R$ 
\STATE Initialize $m = 0$
\STATE{\textbf{Update$(x_t)$:}} \Comment{Process update $x_t$}
\IF{$x_t$ is an insertion}
\STATE{$m \leftarrow m + 1$}
\ELSIF{$x_t$ is a deletion}
\STATE{$m \leftarrow m - 1$}
\ENDIF
\FOR{$r \in [R]$}
\STATE{$H[r, h_r(x_t)]$.\textsf{TSUPDATE}($x_t$)  } \Comment{see Algorithm \ref{alg:ts}}
\ENDFOR
\STATE{\textbf{{ReturnSet():} }}
\STATE{Initialize set $S = \{ \}$, $m_s = 0$}
\FOR{$r\in [R]$}
\FOR{$b \in [B]$}
\IF{$H[r,b]$.\textsf{TSCARD()[0]} == \textsc{SINGLETON}}  
\STATE{$(x, c) \leftarrow$ $H[r,b]$.\textsf{TSCARD()[1]}, $H[r,b]$.\textsf{TSCARD()[2]} }  \Comment{see Algorithm \ref{alg:ts}}
\STATE{Insert $(x,f_x)$ to $S$}
\STATE{$m_s \leftarrow m_s + f_x$}
\ENDIF
\ENDFOR
\ENDFOR
\IF{$m_s = m$ and $|S| \leq k$}\label{line:kset-distinct}
\STATE{Return $S$}
\ELSE 
\STATE{Return NIL}
\ENDIF
\end{algorithmic}
\end{algorithm}

\begin{lemma}[KSET properties] \label{lem:k-set-acc}
Consider a \textsf{KSET} data structure with capacity $k$ and failure probability $\beta$. Then, the following holds:
\begin{enumerate}
\item If strictly more than $k$ distict elements are present in the \textsf{KSET}, then with probability 1, $\textsf{KSET}.\textsf{ReturnSet} =$ NIL. \label{it:kset-full}
\item If less than or equal to $k$ distict 
 elements are present in the \textsf{KSET} then with probability $\geq 1-\beta$, $\textsf{KSET}.\textsf{ReturnSet} = S$ where $S$ is the entire set of items present in the \textsf{KSET}. \label{it:kset-s}
\item The space complexity of \textsf{KSET} is $O(k (\log T + \log |\mathcal{U}|) \log \frac{k}{\beta})$.\label{it:kset-space}
\end{enumerate}
\end{lemma}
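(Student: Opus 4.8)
The plan is to prove the three items in order, relying on the internal structure of the \textsf{KSET} as a $2$D array $H[R\times B]$ with $R=\lceil\log(k/\beta)\rceil$ hash tables of $B=2k$ buckets each, pairwise-independent hash functions $h_r$, and \textsf{TESTSINGLETON} substructures in each bucket. I will treat the \textsf{TESTSINGLETON} primitive as a black box with the (deferred) guarantee that, when a bucket holds exactly one distinct element, \textsf{TSCARD} correctly reports \textsc{SINGLETON} together with the item and its frequency, and that the total count $m$ of signed updates is maintained exactly.

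\textbf{Item~1 (overflow $\Rightarrow$ NIL with probability $1$).} This is the easy direction and follows purely from the explicit check added on Line~\ref{line:kset-distinct}. If strictly more than $k$ distinct elements are present, then in the \textsf{ReturnSet} routine the recovered set $S$ either has $|S|>k$, in which case the condition $|S|\le k$ fails and the routine returns NIL; or $|S|\le k$, in which case $S$ omits at least one present element whose (nonzero) frequency is therefore not counted in $m_s$, so $m_s\ne m$ and again NIL is returned. Either way NIL is returned deterministically. I would state this cleanly as a case analysis on $|S|$ and note that no randomness is involved because the check is a deterministic post-processing step — this is exactly the modification over \cite{Ganguly07} that the text flags as crucial for privacy.

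\textbf{Item~2 (underflow $\Rightarrow$ correct recovery w.p. $\ge 1-\beta$).} Here the argument is the standard $k$-set / $k$-sparse recovery analysis. Fix the set $S^\star$ of present distinct elements with $|S^\star|\le k$. For a single hash table $r$, call an element $u\in S^\star$ \emph{isolated in table $r$} if no other element of $S^\star$ collides with it under $h_r$. By pairwise independence, for a fixed $u$, $\Pr[h_r(u)=h_r(v)]=1/B=1/(2k)$ for each $v\ne u$, so a union bound over the at most $k-1$ other elements gives $\Pr[u\text{ not isolated in table }r]\le (k-1)/(2k)<1/2$. Since the $R$ tables use independent hash functions, $\Pr[u\text{ not isolated in any of the }R\text{ tables}]\le 2^{-R}\le \beta/k$. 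A union bound over the $\le k$ elements of $S^\star$ shows that with probability $\ge 1-\beta$ every $u\in S^\star$ is isolated in at least one table, hence is recovered by the \textsf{SINGLETON} test there (invoking the \textsf{TESTSINGLETON} guarantee), so $S=S^\star$, $m_s=m$, $|S|\le k$, and \textsf{ReturnSet} returns $S^\star$. I would also remark that buckets containing $\ge 2$ distinct elements are reported as non-singletons and thus never introduce spurious items into $S$, so the recovered $S$ is always a subset of $S^\star$; combined with the isolation argument this gives $S=S^\star$ exactly. The main obstacle in this item is being careful that ``\textsf{TESTSINGLETON} correctly identifies singleton buckets'' is the precise black-box property being used, and that collisions only ever \emph{lose} elements (never add fake ones) so that the final consistency check $m_s=m$ is passed precisely when recovery is complete.

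\textbf{Item~3 (space bound).} This is a direct accounting: the array has $R\cdot B = \lceil\log(k/\beta)\rceil\cdot 2k = O(k\log(k/\beta))$ buckets, and each bucket stores a \textsf{TESTSINGLETON} structure whose size is $O(\log T+\log|\cU|)$ bits (it needs to hold a running fingerprint/sum over the universe and a count bounded by $T$), plus the $O(R)$ hash functions each described in $O(\log|\cU|)$ bits. Multiplying gives $O\!\left(k(\log T+\log|\cU|)\log\frac{k}{\beta}\right)$, matching the claim; I would fold the hash-function and counter storage into this bound since they are lower-order. The only nontrivial input here is the per-bucket size of \textsf{TESTSINGLETON}, which I take from the deferred description in the appendix.

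Overall the lemma is routine given the \textsf{TESTSINGLETON} abstraction; I expect the only place needing genuine care is Item~2, specifically getting the isolation probability and the two-level union bound (over tables, then over elements) to land exactly at failure probability $\beta$ with the stated choice $R=\lceil\log(k/\beta)\rceil$.
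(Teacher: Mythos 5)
Your proposal is correct and follows essentially the same route as the paper: the paper's own proof of Item~1 is the same case analysis on the check in Line~\ref{line:kset-distinct} (phrased via Ganguly's fact that $m_s=m$ forces $S$ to be exactly the set of present elements with probability $1$), and for Items~2 and~3 the paper simply defers to \cite{Ganguly07}, whose standard isolation/union-bound and space-accounting arguments you reconstruct accurately with the right parameters $R=\lceil\log(k/\beta)\rceil$, $B=2k$. The only caveat worth noting is that your Item~1 step ``$S$ omits a present element, hence $m_s\neq m$'' implicitly relies on nonnegative frequencies and on \textsf{TESTSINGLETON} never reporting spurious singletons --- exactly the content of the cited Ganguly fact --- so it is fine under the paper's (strict-turnstile) assumptions but should be flagged as such.
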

\begin{proof}
For \cref{it:kset-full}, from~\cite{Ganguly07}, we have that if $m_s=m$ then the set of retrieved items $S$ is exactly the set of distinct elements with probability $1$.  Observe that if there are more than $k$ elements, then either $m_s \neq m$, or the number of distinct elements returned is $\vert S \vert>k$. In both cases, the condition in \cref{line:kset-distinct} fails and the output is NIL. 

    The proofs for \cref{it:kset-s} and \cref{it:kset-space} are identical to the one given in~\cite{Ganguly07}.
\end{proof}

\subsection{Binary Mechanism for the Count Distinct Problem}\label{sec:bm-count-distinct}

In this subsection, we present a subroutine for \textsf{COUNTING-KSET} (\cref{alg:kset-count}) that is a modified Binary Mechanism called \textsf{BinaryMechanism-CD} (see \cref{alg:bm}). This subroutine is used to compute the summation stream $s_x \in \{-1,0,1\}^T$ representing the difference in the number of distinct elements at timesteps $t-1$ and $t$ which is computed from the output of the \textsf{KSET} in Line~\ref{line:sum-diff} of \cref{alg:kset-count}. The algorithm \textsf{BinaryMechanism-CD} injects Gaussian noise (similar to~\cite{JainKRSS23}) --- as opposed to Laplace noise used in the original versions of the Binary Mechanism~\cite{ChanSS2011, DworkNPR10} --- proportional to the sensitivity of the summation stream. 

\begin{algorithm}[!ht]
\caption{\textsf{BinaryMechanism-CD}}
\label{alg:bm}
\begin{algorithmic}[1]
\REQUIRE{Count distinct summation stream $y_1, y_2, \ldots, y_T \in \{ -1, 0, 1\}^T$, privacy parameter $\rho > 0$, \occur\ $W > 0$}
\STATE Initialize each $\alpha_i=0$ and $\hat{\alpha_i}=0$
\STATE{Let $\rho' = \frac{\rho}{2(W+1) (\log T + 1)}$}
\STATE{\textbf{Update}($y_t$):}
\FOR{every update $y_t$}
\STATE{Express $t$ in binary from: $t = \sum_j \binjt 2^j$}
\STATE{Let $i = \min \{ j : \binjt = 1\}$} be the least significant binary digit, and set $\alpha_i = \sum_{j=0} ^{i-1}
\alpha_j +  y_t $
\FOR{$j = 0, 1, \ldots, i-1$}
\STATE Set $\alpha_j = 0$ and $\hat{\alpha}_j = 0$
\ENDFOR
\STATE Set $\hat{\alpha}_i = \alpha_i + \N(0, 1/\rho')$
\STATE \textbf{Return}
$\mathcal{B}(t) = \sum _{j : \binjt = 1 } \hat{\alpha} _j $
\ENDFOR
\end{algorithmic}
\end{algorithm}

We first note that if the input to \textsf{BinaryMechanism-CD} represented the \emph{exact} difference in the number of distinct elements over consecutive timesteps, then the sensitivity of the summation stream in terms of \occur\ can be calculated in a straightforward manner using arguments similar to~\cite{JainKRSS23}. 
However, in for our use of \textsf{BinaryMechanism-CD} in the algorithm \textsf{COUNTING-KSET}, \textsf{BinaryMechanism-CD} receives as input the  difference in the number of distinct elements from the output of the \textsf{KSET}. Importantly, the failure behavior of the \textsf{KSET} needs to be accounted for when arguing about the sensitivity of the resulting summation stream computed from the \textsf{KSET} output (when it does not fail). In order to do this, we use a coupling argument to show that the output stream of the algorithm \textsf{COUNTING-KSET} is close to the output stream of an algorithm that exactly computes the number of distinct elements and feeds the difference over consecutive timesteps to \textsf{BinaryMechanism-CD} as input. The privacy analysis for \textsf{BinaryMechanism-CD} in the latter algorithm is similar to~\cite{JainKRSS23} and is given in \cref{lem:bm-ec-dp}. The privacy guarantee of our application of \textsf{BinaryMechanism-CD} inside \textsf{COUNTING-KSET} is implicitly derived in the privacy analysis of \textsf{COUNTING-KSET} via the coupling argument of \cref{lem:new_privacy} and the claim that \textsf{COUNTING-KSET} is DP in \cref{cor:kset-count-dp}.    

The accuracy guarantee of \textsf{BinaryMechanism-CD} follows from \cref{lem:bm_acc} in which we consider the overall accuracy of $L$ instances of \textsf{BinaryMechanism-CD} as instantiated in Line~\ref{line:bm-init} of \textsf{COUNTING-KSET}. Finally, the space complexity of \textsf{BinaryMechanism-CD} is $O(\log(T))$ and this follows from~\cite{ChanSS2011}.

\section{Analysis of \textsf{CountDistinct}}\label{sec:analysis-cd}

This section presents details of the privacy (\cref{thm:new_privacy}), accuracy (\cref{thm:acc_main}) and space (\cref{thm:space}) guarantees of \cref{alg:large-universe-sampling}. Omitted proofs can be found in \cref{app.proofs}.

\subsection{Privacy}\label{s.privacyresult}

First we present the privacy analysis, showing that \textsf{CountDistinct} (\cref{alg:large-universe-sampling}) is differentially private. One key challenge in the privacy analysis is to ensure that the output of the \textsf{KSET} data structure does not leak privacy, even in the event of its failure (i.e., if more than $k$ distinct elements are stored). Note that if the \textsf{KSET} \emph{never} failed, then one could simply sum up the difference in the output sizes of the \textsf{KSET} over consecutive timesteps, i.e., the stream $s_x \in \{-1,0,1\}^T$ using \textsf{BinaryMechanism-CD}. Assuming no failures, the sensitivity of $s_x$ can be bounded in terms of the maximum \occur\ $W$, since in this case the \textsf{KSET} returns exact counts. However, the failure of \textsf{KSET} cannot be avoided or absorbed into the $\delta$ parameter because the \textsf{KSET} will fail with probability 1 when its capacity exceeds $k$ (see \cref{lem:k-set-acc}).

We address this challenge by modifying the original \textsf{KSET} algorithm to have well-behaved failures. More precisely, we introduce a thresholding step where our algorithm returns TOO-HIGH if the \textsf{KSET} fails or approaches a regime where failing is a likely event. The latter can be estimated privately by verifying whether the binary tree output is too large, $\hat{s}_x>\tau$, and returning TOO-HIGH if so (see \textsf{COUNTING-KSET}, \cref{alg:kset-count}).

By doing so, we can make a coupling argument between our algorithm and a much simpler algorithm (\textsf{COUNTING-DICT}, \cref{alg:exact-count}) that simply stores the exact counts of elements and computes $\hat{s}_x$ via \textsf{BinaryMechanism-CD} and has the same thresholding step: if $\hat{s}_x>\tau$, return TOO-HIGH. Note that \textsf{COUNTING-DICT} is not space-efficient, and we only introduce it for analysis purposes. In \textsf{COUNTING-DICT}, the sensitivity of  $s_x$ can be bounded in terms of the maximum \occur\ $W$ for all timesteps. We then use the coupling argument to bound the sensitivity of $s_x$ in \textsf{COUNTING-KSET} for all timesteps (barring specific bad events whose failure probability is negligible and absorbed into the DP failure probability).

\begin{algorithm}[!ht]
\caption{\textsf{COUNTING-DICT}}
\label{alg:exact-count}
\begin{algorithmic}[1]
\REQUIRE{Stream $x_1, \ldots, x_T  \in \cU$, relative error $\eta \in (0,0.5)$, privacy parameter $\rho$, failure probability $\beta$, \occur\ bound $W$ on elements, substream index $i$, list of blocklisted elements $\cB$ 
}
\STATE{Let $\hat{s}_i=0$ and $\tau = 16 \max \{  \gamma/\eta, 32\lambda / \eta^2 \} +   2\sqrt{2} \frac{(\log T + 1)^{3/2}\sqrt{W \log(20T \lceil \log T \rceil /\beta)}}{\sqrt{\rho}}$  } 
\STATE{Initialize 
$\textsf{BinaryMechanism-CD}_i$ with parameters: privacy parameter $\rho/L$ and occurrency bound $W$ }
\STATE{Initialize $\textsf{DICT}_i$ dictionary data structure of size $\vert \cU \vert \times T$}
\STATE{Initialize $F_{i, last} = 0$ }
\STATE{\textbf{Update}($x_t, \cB$):}
\FOR{ update $x_t$}
\IF{$x_t \neq \perp$ and $x_t \not\in \cB$ }
\IF{$x_t$ is an insertion}\label{line:if-ec}
\STATE{$\textsf{DICT}_i[x_t ][t]= \textsf{DICT}_i[ x_t ][t-1] + 1$ }  
\ELSE
\STATE{$\textsf{DICT}_i[x_t ][t]= \textsf{DICT}_i[ x_t ][t-1] - 1$ }  
\ENDIF \label{line:ifend-ec}
\STATE{Let $s_{i} \leftarrow \sum_{u \in \cU} \mathbf{1}_{\textsf{DICT}_i[u][t]>0}  $}\label{line:s-ec}
\STATE{$\hat{s}_i \leftarrow $\textsf{BinaryMechanism-CD}$_i$.\textsf{Update}($s_{i} - F_{i,last}$)}
\STATE{Save $F_{i,last} \leftarrow { s_{i}}$}
\ELSE
\STATE{$\hat{s}_i \leftarrow $\textsf{BinaryMechanism-CD}$_i$}.\textsf{Update}(0) \ENDIF
\IF{$\hat{s}_i > \tau$}
 \STATE{Return TOO-HIGH }
 \ELSE
\STATE{Return $\hat{s}_i$} 
 \ENDIF
\ENDFOR
\end{algorithmic}
\end{algorithm}

We present the main theorem of the privacy guarantee below. The proof requires showing that the output stream published by \textsf{COUNTING-KSET}$_i$ is DP (see \cref{cor:kset-count-dp}), which is argued by showing that the outputs of \textsf{COUNTING-KSET}$_i$ and \textsf{COUNTING-DICT}$_i$ are identical except with probability at most $\beta$ (see \cref{lem:new_privacy}). Since all the operations after calling the subroutine \textsf{COUNTING-KSET}$_i$ in Lines \ref{line:replace-count} and \ref{line:replace-count-empty} of \cref{alg:large-universe-sampling} is post-processing, we will have shown that \textsf{CountDistinct} is approximate zCDP (see~\cref{def:approx-zcdp}) in \cref{thm:new_privacy}. 

\begin{restatable}{theorem}{privacythm}\label{thm:new_privacy}
    \textsf{CountDistinct} (\cref{alg:large-universe-sampling}) is 
    \begin{enumerate}
        \item \label{it:dp-cond} $\beta$-approximate $\rho$-zCDP if $ob=true$,
        \item \label{it:dp-uncond} $2\beta$-approximate $\rho$-zCDP if $ob=false$. 
    \end{enumerate}
\end{restatable}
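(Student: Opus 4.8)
\textbf{Proof plan for \cref{thm:new_privacy}.}
The plan is to reduce privacy of the full algorithm \textsf{CountDistinct} to privacy of a single \textsf{COUNTING-KSET} instance, and then in turn reduce privacy of \textsf{COUNTING-KSET} to privacy of the exact-count analogue \textsf{COUNTING-DICT} via a coupling argument. First I would argue that the randomness used by \textsf{CountDistinct} splits cleanly: the hash function $g$ in \cref{line:hash} partitions each update into exactly one substream $i = g(x_t)$, and, crucially, $g$ does not depend on the input stream, so conditioned on $g$ the $L$ instances of \textsf{COUNTING-KSET}$_i$ operate on disjoint ``columns'' of the input. Since neighboring streams $x \sim x'$ differ in a single timestep $t_0$, and that update is routed to a single substream $g(x_{t_0})$ (or $g(x'_{t_0})$, but at a fixed $t_0$ only one substream can possibly see a change once $g$ is fixed — if $g(x_{t_0}) \neq g(x'_{t_0})$ we must be careful, but $g$ is shared and applied to the actual update, so the changed coordinate lands in at most... this is the subtle point), only that one substream's view changes. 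The blocklist update in \cref{line:blocklist} and the final aggregation in \cref{line:outout-dist} are post-processing of the substream outputs (the blocklist only affects future \textsf{COUNTING-KSET} updates, which is exactly why it must be part of the analyzed object). Actually the cleaner route, which I expect the paper takes, is: each \textsf{COUNTING-KSET}$_i$ is $(\beta/(2L))$-approximate $(\rho/L)$-zCDP (or $(\beta/L)$-approximate if $ob=false$, to account for the extra $\beta$ from blocklist sampling failures — see below), then apply composition (\cref{thm:zcdp-comp}) across the $L$ instances to get $\beta$-approximate $\rho$-zCDP (resp.\ $2\beta$), then post-processing closure for \cref{line:outout-dist}.

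The heart of the argument is \cref{cor:kset-count-dp}: \textsf{COUNTING-KSET}$_i$ is $(\beta/(2L))$-approximate $(\rho/L)$-zCDP. This goes through \cref{lem:new_privacy}: the output stream of \textsf{COUNTING-KSET}$_i$ equals the output stream of \textsf{COUNTING-DICT}$_i$ (on the same input, same coins for \textsf{BinaryMechanism-CD}, fresh coins for \textsf{KSET}) except with probability at most $\beta/(2TL)$ per timestep — summed over $T$ timesteps this is $\beta/(2L)$ — where the bad event is a \textsf{KSET} failure that is \emph{not} reported as \textsf{TOO-HIGH}. The point of the thresholding in \cref{line:threshold} (return \textsf{TOO-HIGH} if $\hat s_i > \tau$ or $S_i = $ NIL) combined with \cref{it:kset-full} of \cref{lem:k-set-acc} (\textsf{KSET} returns NIL with probability $1$ once more than $k$ distinct elements are present) is that, whenever the true count exceeds $k$, both algorithms output \textsf{TOO-HIGH}; and whenever the true count is at most $k$, \textsf{KSET} returns the exact set with probability $\geq 1 - \beta/(2TL)$ (\cref{it:kset-s}), so the value $|S_i|$ fed to \textsf{BinaryMechanism-CD} matches the exact count $s_i$ of \textsf{COUNTING-DICT}. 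One must also check the $t_{\textsf{diff}}$ bookkeeping in \textsf{COUNTING-KSET} (padding with the right number of $0$-updates between non-failing timesteps) produces the same sequence of \textsf{BinaryMechanism-CD} updates as \textsf{COUNTING-DICT}'s one-update-per-timestep schedule — i.e., a run of $\textsf{KSET}$ failures followed by a success is equivalent, as far as \textsf{BinaryMechanism-CD} sees, to feeding the interim $0$'s and then the true difference. Given that coupling, and that \textsf{COUNTING-DICT}$_i$ is $(\rho/L)$-zCDP because the summation stream it feeds to \textsf{BinaryMechanism-CD} has $\ell_2$-sensitivity controlled by the \occur\ bound $W$ (this is \cref{lem:bm-ec-dp}, Gaussian mechanism over the binary-tree decomposition with the noise scale set in \cref{alg:bm} as $\rho' = \rho/(2L(W+1)(\log T+1))$), the standard ``indistinguishable-except-on-a-$\delta$-event implies $\delta$-approximate zCDP'' lemma upgrades the coupling to the approximate-zCDP conclusion.

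The two cases then differ only in accounting. When $ob = true$, the blocklist $\cB$ stays empty throughout (as noted after \cref{alg:kset-count}), so \textsf{COUNTING-DICT}$_i$ really does have summation-stream sensitivity bounded via $W$ deterministically, the blocklist plays no role, and we get $\beta$-approximate $\rho$-zCDP after composing $L$ copies each $(\beta/(2L))$-approximate $(\rho/L)$-zCDP and applying \cref{thm:zcdp-comp} (the approximate parameters add, up to the negligible product term). When $ob = false$, there is an additional randomized object: the blocklist. Here the extra $\beta$ comes from a second bad event — that the blocklist sampling in \cref{line:blocklist} (include each occurrence independently w.p.\ $p = \log(T^{1/3}L/\beta)/T^{2/3}$) fails to catch some element before its \occur\ exceeds the imposed bound $T^{2/3}$; on the complement of that event, every element ever fed to any \textsf{KSET}$_i$ or \textsf{DICT}$_i$ has \occur\ at most $T^{2/3}$, so \cref{lem:bm-ec-dp} applies with $W = T^{2/3}$ and the per-instance guarantee is $(\rho/L)$-zCDP. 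A union bound over elements shows this blocklist-failure event has probability at most $\beta$ (this is essentially the false-negative analysis of \cref{corol:blocklist-ub}); folding it in doubles the approximate parameter to $2\beta$. The main obstacle I anticipate is \cref{lem:new_privacy} itself — making the coupling fully rigorous requires care that the \textsf{KSET}'s internal randomness can be coupled independently of \textsf{BinaryMechanism-CD}'s Gaussian noise and of the blocklist coins, that a changed timestep $t_0$ in substream $i$ does not, through the failure pattern, cause \textsf{COUNTING-KSET}$_i$ and \textsf{COUNTING-KSET}$_i$-on-$x'$ to desynchronize their \textsf{BinaryMechanism-CD} update schedules in a way the sensitivity bound does not cover, and that the ``return \textsf{TOO-HIGH} when $\hat s_i > \tau$'' rule — which peeks at noisy output — does not itself leak (it is post-processing of \textsf{BinaryMechanism-CD}'s already-private output, so it is fine, but this has to be stated). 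Everything after the coupling is routine zCDP composition and post-processing.
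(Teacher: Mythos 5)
Your overall route is the same as the paper's: couple \textsf{COUNTING-KSET}$_i$ to the exact-count surrogate \textsf{COUNTING-DICT}$_i$, prove the surrogate is $(\rho/L)$-zCDP via the \occur-bounded sensitivity of the binary tree (with the blocklist failure supplying the extra approximate term when $ob=false$), then finish by composition over the $L$ substreams and post-processing. The gap is inside the coupling. You assert that ``whenever the true count exceeds $k$, both algorithms output TOO-HIGH,'' and consequently charge the coupling only for the \textsf{KSET} failing on counts $\leq k$ (probability $\beta/(2TL)$ per step). For \textsf{COUNTING-KSET}$_i$ that assertion is fine (NIL with probability $1$ by \cref{lem:k-set-acc}), but for \textsf{COUNTING-DICT}$_i$ the TOO-HIGH decision compares the \emph{noisy} count $\hat{s}_i$ to $\tau$, and $\tau < k$: if the Gaussian noise is negative with magnitude larger than $(\text{true count})-\tau$, the surrogate outputs a numeric value while \textsf{COUNTING-KSET}$_i$ outputs TOO-HIGH, and the two runs desynchronize. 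This is exactly the paper's second bad event (\cref{lem:e2-occur}), and controlling it is the reason $k$ is set in \cref{alg:kset-count} to exceed $\tau$ by an additive $\Theta\bigl((\log T)^{3/2}\sqrt{W\log(T\log T/\beta)}/\sqrt{\rho}\bigr)$ --- a hypothesis that appears explicitly in the statement of \cref{lem:new_privacy} and that your argument never invokes. Without this event your per-instance coupling budget of $\beta/(2L)$ is unjustified; with it the budget is $\beta/L$ per substream, which is what composes to the stated $\beta$ (respectively $2\beta$ once the blocklist failure is folded in).

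Two smaller points. Your worry about $g(x_{t_0})\neq g(x'_{t_0})$ is resolved in the paper by taking the neighboring relation to be replacement of the single update by $\bot$, so under the identity coupling on $g$ (and on the blocklist coins) only one substream's input changes; you flag this but leave it open. Also your accounting is internally inconsistent: $L$ copies of $(\beta/(2L))$-approximate $(\rho/L)$-zCDP compose to roughly $(\beta/2)$-approximate, not $\beta$ --- harmless for the claimed bound, but it reflects the missing second failure event above. Otherwise the plan, including treating the threshold test as post-processing of the already-private $\hat{s}_i$ and treating the $ob=false$ blocklist failure as an additional small-probability event added to the approximate parameter, matches the paper's proof (\cref{lem:new_privacy}, \cref{lem:exact_count_dp}, \cref{cor:kset-count-dp}, then \cref{thm:zcdp-comp}).
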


To prove \cref{thm:new_privacy}, we first establish that the outputs of \textsf{COUNTING-KSET}$_i$ and \textsf{COUNTING-DICT}$_i$ are identical with high probability (\cref{lem:new_privacy}, proven in Appendix \ref{app.newprivacyproof}). 
Define \textsf{CountDistinct'} as a variant of \textsf{CountDistinct} that replaces calls to \textsf{COUNTING-KSET}$_i$ (\cref{alg:kset-count}) in Line~\ref{line:kset}, Line~\ref{line:replace-count}, Line~\ref{line:replace-count-empty} with calls to \textsf{COUNTING-DICT}$_i$ (\cref{alg:exact-count}) for $i\in [L]$.

\begin{restatable}{lemma}{ksetexactmatch}\label{lem:new_privacy}
%\begin{lemma} \label{lem:new_privacy}
Fix the randomness used across runs of \textsf{CountDistinct} and \textsf{CountDistinct}'. 
Fix $i \in [L]$, and let $K$ and $E$ denote the output distributions of \textsf{COUNTING-KSET}$_i$ and \textsf{COUNTING-DICT}$_i$ respectively. 
 If $k \geq \tau + O\left( \frac{\text{polylog}  (T/\beta) \sqrt{W} }{\sqrt{\rho}} \right)$, then the total variation distance of the two distributions, $d_{TV}(K,E) \leq \beta/L$.
%\end{lemma}
\end{restatable}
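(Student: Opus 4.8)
The plan is to couple the two algorithms step by step on the fixed shared randomness and argue that the outputs of \textsf{COUNTING-KSET}$_i$ and \textsf{COUNTING-DICT}$_i$ agree unless one of a small number of low-probability bad events occurs. First I would set up the coupling: both subroutines share the same Gaussian noise inside \textsf{BinaryMechanism-CD}$_i$, the same hash function $g$ (so they see the same substream), and the same blocklist $\cB$; the only structural difference is that \textsf{COUNTING-DICT}$_i$ always has access to the exact count $s_i$ of distinct (positive-frequency) elements in the substream, whereas \textsf{COUNTING-KSET}$_i$ only learns $|S_i|$ when $\textsf{KSET}_i$ does not return NIL. I would define the ``good event'' $\mathcal{G}$ to be that, at every timestep $t$, the \textsf{KSET} satisfies both of the accuracy guarantees of \cref{lem:k-set-acc}: when at most $k$ distinct elements are present it returns the true set $S_i$ (with $|S_i|$ equal to the exact count), and when more than $k$ are present it returns NIL with probability $1$. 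By \cref{it:kset-s} of \cref{lem:k-set-acc} and a union bound over the $T$ timesteps, $\mathcal{G}$ fails with probability at most $T \cdot \beta/(2TL) = \beta/(2L)$, given the choice of failure probability $\beta/(2TL)$ in Line~\ref{line:kset-count-init}.

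Next I would do the main case analysis, conditioned on $\mathcal{G}$, to show the two output streams coincide. The key quantity to control is the running binary-mechanism estimate $\hat s_i$ versus the true prefix sum of the substream's distinct-count differences. The threshold step (Line~\ref{line:threshold} of \cref{alg:kset-count}) returns TOO-HIGH as soon as $\hat s_i > \tau$ or $S_i = \mathrm{NIL}$; \textsf{COUNTING-DICT}$_i$ returns TOO-HIGH as soon as $\hat s_i > \tau$. The two subroutines clearly agree on the ``$0$'' updates (empty or blocklisted). For non-empty, non-blocklisted updates, they also agree as long as $\textsf{KSET}_i$ does not fail \emph{and} its count $|S_i|$ equals the true count $s_i$ — which is exactly what $\mathcal{G}$ guarantees when $s_i \le k$. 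So the only way they can diverge is if at some timestep the true distinct count in the substream exceeds $k$ while $\hat s_i$ is still $\le \tau$: then \textsf{COUNTING-KSET}$_i$ sees NIL and outputs TOO-HIGH, but \textsf{COUNTING-DICT}$_i$ may still output a numeric value. I would argue this cannot happen (except on a further negligible event) because the true count exceeding $k$ forces, via the binary mechanism's accuracy (\cref{lem:bm_acc}-type bound on $|\hat s_i - (\text{true prefix sum})|$, which holds up to additive $O(\mathrm{polylog}(T/\beta)\sqrt{W}/\sqrt{\rho})$ with probability $\ge 1-\beta/(2L)$), the estimate $\hat s_i$ to exceed $\tau$ as well; here is precisely where the hypothesis $k \ge \tau + O(\mathrm{polylog}(T/\beta)\sqrt{W}/\sqrt{\rho})$ is used — the gap between $k$ and $\tau$ is wide enough to absorb the binary-mechanism error, so whenever the true count crosses $k$ the noisy estimate has already crossed $\tau$, forcing both subroutines into the TOO-HIGH state simultaneously, after which (since the \textsf{KSET} cannot ``recover'' below $\tau$ once it is in the failure regime, and symmetrically \textsf{COUNTING-DICT} tracks the same noisy stream) they stay in agreement. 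Actually I would have to be slightly careful: after a TOO-HIGH, \textsf{COUNTING-KSET}$_i$'s internal $F_{i,\textsf{last}}$ and $t_{\textsf{last}}$ are not updated on a NIL step, so I would verify that the bookkeeping remains consistent once the count drops back below $k$ — but since \textsf{COUNTING-DICT} always updates and the coupling replays the same noise, I would track that the relevant prefix sums still match up whenever \textsf{KSET} next succeeds, possibly at the cost of folding another union-bound term into the $\beta/L$ budget.

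Finally I would collect the failure probabilities: $\mathcal{G}$ fails with probability $\le \beta/(2L)$, and the binary-mechanism accuracy event (used to guarantee $\hat s_i > \tau$ whenever the true count exceeds $k$) fails with probability $\le \beta/(2L)$; outside the union of these events the two output distributions are identical under the shared randomness, so $d_{TV}(K,E) \le \beta/L$. The main obstacle, I expect, is the bookkeeping subtlety in the divergence-never-happens argument — precisely pinning down that the $\tau$-vs-$k$ gap is sufficient in the worst case over the stream, and that the asynchronous updates to $F_{i,\textsf{last}}$/$t_{\textsf{last}}$ on NIL steps do not cause a later mismatch — rather than any single probabilistic estimate, each of which is a routine application of the lemmas already established (\cref{lem:k-set-acc} and the accuracy of \textsf{BinaryMechanism-CD}).
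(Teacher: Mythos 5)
Your proposal is correct and follows essentially the same route as the paper: fix the shared randomness as a coupling, isolate two bad events (a spurious \textsf{KSET} failure when the true count is at most $k$, bounded by $\beta/(2L)$ via the per-step failure probability $\beta/(2TL)$, and the Gaussian noise of \textsf{BinaryMechanism-CD} exceeding the $k-\tau$ gap, which is exactly \cref{lem:e2-occur} and is where the hypothesis on $k$ is used), and then argue epoch-by-epoch agreement of the two output streams. The bookkeeping concern you flag about $F_{i,\textsf{last}}$ and $t_{\textsf{last}}$ after NIL steps is resolved in the paper deterministically rather than with an extra union-bound term: at the first timestep after a NIL epoch where the \textsf{KSET} succeeds, the batch fed to \textsf{BinaryMechanism-CD} in \textsf{COUNTING-KSET}$_i$ has the same total length and sum as the per-step inputs fed to \textsf{BinaryMechanism-CD} in \textsf{COUNTING-DICT}$_i$ over that epoch, so under the fixed noise the two mechanisms' outputs coincide from that point on.
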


To prove that \textsf{COUNTING-KSET}$_i$ is differentially private (Corollary~\ref{cor:kset-count-dp}), we first show that \textsf{COUNTING-DICT}$_i$ satisfies differential privacy (\cref{lem:exact_count_dp}).

\begin{restatable}{lemma}{exactcountdp}\label{lem:exact_count_dp}
%\begin{lemma}\label{lem:exact_count_dp}
\textsf{COUNTING-DICT}$_i$ (\cref{alg:exact-count}) is 
\begin{enumerate}
    \item  $\rho/L$-zCDP, if $ob=true$.
    \item $\beta/L$-approximate $\rho/L$-zCDP, if $ob=false$.
\end{enumerate}
%\end{lemma}
\end{restatable}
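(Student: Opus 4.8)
The plan is to analyze \textsf{COUNTING-DICT}$_i$ directly as a post-processing of a single invocation of \textsf{BinaryMechanism-CD}$_i$ on the summation stream $s_{x,i}$, where $s_{x,i}(t)$ is the change in the number of distinct elements (restricted to substream $i$ and excluding blocklisted items) between consecutive timesteps. The key observation is that everything \textsf{COUNTING-DICT}$_i$ does after feeding values into \textsf{BinaryMechanism-CD}$_i$ — maintaining the exact dictionary $\textsf{DICT}_i$, computing $s_i$, thresholding against $\tau$, and returning TOO-HIGH or $\hat s_i$ — is a deterministic function of the outputs of \textsf{BinaryMechanism-CD}$_i$ and the (non-private) input stream, so by the post-processing property of zCDP it suffices to argue that the sequence of updates fed to \textsf{BinaryMechanism-CD}$_i$ is privacy-preserving. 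By \cref{thm:gaussian} and the composition structure of the binary tree (each stream element affects at most $\log T + 1$ nodes, each perturbed by $\N(0,1/\rho')$ with $\rho' = \rho/(2L(W+1)(\log T+1))$ as set in \cref{alg:bm} with privacy parameter $\rho/L$), this reduces to bounding the $\ell_2$-sensitivity of the summation stream $s_{x,i}$ under a neighboring change in the input stream $x \sim x'$.

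The first main step is the sensitivity bound, which I expect to follow the argument of~\cite{JainKRSS23} (invoked here as \cref{lem:bm-ec-dp} for the exact-count mechanism): changing a single stream element $x_t$ to $x_t'$ changes the insertion/deletion history of at most two universe items $u = $ (the item in $x_t$) and $u' = $ (the item in $x_t'$), each of which has \occur\ at most $W$ by the promised bound. For a single item $u$ with \occur\ at most $W$, flipping one of its events changes the indicator $\mathbf 1_{\textsf{DICT}_i[u][t]>0}$ at most $W$ times along the stream (each event can toggle presence at most once, and there are at most $W$ events of $u$), so the summation stream $s_{x,i}$ differs from $s_{x',i}$ in at most $O(W)$ coordinates, each by at most a constant; hence $\|s_{x,i} - s_{x',i}\|_2 = O(\sqrt{W})$. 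Plugging this into the Gaussian-noise binary mechanism with the stated $\rho'$ gives $(\rho/L)$-zCDP — this establishes part~1, where $ob = true$ and $\cB = \emptyset$ so no blocklist subtleties arise.

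For part~2, where $ob = false$, the algorithm internally sets $W = T^{2/3}$ but the \emph{actual} stream \occur\ is unbounded, so the sensitivity bound above can fail — there may be an item $u$ whose \occur\ exceeds $T^{2/3}$ and whose events are not all filtered by the blocklist $\cB$. The fix is to condition on the good event $G$ that every item ever processed by \textsf{DICT}$_i$ (i.e., not yet in $\cB$ at the time of its arrival) has at most $T^{2/3}$ un-blocklisted occurrences; since \cref{alg:large-universe-sampling} adds each occurrence of an item to $\cB$ independently with probability $p = \log(T^{1/3}L/\beta)/T^{2/3}$ (\cref{line:blocklist}), a Chernoff/union bound over universe items shows that any item reaching its $T^{2/3}$-th occurrence without being blocklisted has probability at most $(1-p)^{T^{2/3}} \le \beta/(\poly(T))$ of doing so, and $\Pr[G^c] \le \beta/L$ after the union bound. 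Conditioned on $G$, the sensitivity argument of part~1 goes through verbatim (the blocklist $\cB$ is determined by the stream's own randomness, which is fixed and common to neighbors via the standard coupling, and on $G$ the effective \occur\ is at most $T^{2/3} = W$), yielding $(\rho/L)$-zCDP on the conditioned distributions; combining with $\Pr[G^c] \le \beta/L$ via \cref{def:approx-zcdp} gives $(\beta/L)$-approximate $(\rho/L)$-zCDP.

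The step I expect to be the main obstacle is handling the blocklist correctly in the neighboring-stream comparison for part~2: the set $\cB$ is itself a random object whose distribution depends on the stream, so one must set up the coupling so that $\cB$ evolves identically on $x$ and $x'$ up to the differing timestep $t$, and then carefully account for the fact that the single changed element $x_t \ne x_t'$ can alter which items get added to $\cB$ going forward, potentially cascading into many downstream differences in $s_{x,i}$. The resolution is that the coupling fixes the coin flips at \cref{line:blocklist}, so the only new discrepancy is whether $u$ (resp.\ $u'$) is added to $\cB$ at time $t$; this adds at most one extra toggled item to the analysis, keeping the sensitivity at $O(\sqrt{W})$ with $W = T^{2/3}$. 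Making this coupling precise — and confirming it is consistent with the coupling used in \cref{lem:new_privacy} — is the delicate part; the rest is a routine application of \cref{thm:gaussian}, \cref{thm:zcdp-comp}, and the binary-mechanism sensitivity calculation.
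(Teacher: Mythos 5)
Your overall architecture matches the paper's: argue that hashing and (for $ob=false$) blocklisting, under a coupling that fixes the shared randomness, turn neighboring inputs into streams that still differ only in a controlled way; prove DP for the \textsf{BinaryMechanism-CD}$_i$ instance fed by \textsf{DICT}$_i$; absorb the blocklisting failure event into the $\beta/L$ of approximate zCDP (the paper does exactly this via \cref{lem:bm-ec-dp}, \cref{lem:bl-id}, and \cref{lem:privacy_occ_bound}); and treat the thresholding as post-processing. One caution on that last point: you describe the output as a function of the noisy counts \emph{and} ``the (non-private) input stream'' --- post-processing cannot touch the raw stream; it goes through only because the TOO-HIGH/$\hat s_i$ decision depends on $\hat s_i$ and the data-independent threshold $\tau$ alone, as the paper states.

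The genuine gap is in the central sensitivity step. You reduce the problem to bounding $\|s_{x,i}-s_{x',i}\|_2 = O(\sqrt{W})$ for the increment stream and then ``plug this into'' the Gaussian binary mechanism. That reduction is not valid: the quantity that must be bounded is the $\ell_2$-sensitivity of the vector of binary-tree node counts, and this is \emph{not} controlled by the $\ell_2$ norm of the increment-stream difference --- a single high-level node sums all increments in its interval, so a difference vector of $\ell_2$ norm $O(\sqrt{W})$ could in principle shift one node by $\Theta(W)$, forcing noise scaling like $W$ rather than $\sqrt{W}$. What actually saves the bound (and what \cref{lem:bm-ec-dp}, following~\cite{JainKRSS23}, uses) is the structural fact that each node value $G_h[j]=s_i[j\cdot 2^h]-s_i[(j-1)\cdot 2^h]$ is a difference of two distinct-element counts, and on neighboring (post-hash, post-blocklist) streams these counts differ by at most $1$ at every timestep, so every node changes by at most $2$; combined with the disjointness of the intervals at each level and the fact that the processed streams differ in at most $W+1$ timesteps (\cref{lem:bl-id}), at most $W+1$ nodes per level change, giving $\Delta_2^2 \le 4(W+1)(\log T+1)$. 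Relatedly, in part~2 your coupling discussion says the only new discrepancy is whether $u$ enters $\cB$ at time $t^*$ (``one extra toggled item''); in fact the extra occurrence can shift \emph{when} $u$ is blocklisted, so the post-blocklist streams can differ at up to $W+1$ timesteps --- but all on the same item $u$, which is exactly why the per-timestep distinct-count difference stays $\le 1$ and the paper's bound survives. Your conclusions are the right ones, but as written the key step would not go through without this bounded-prefix-difference argument.
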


The full proof of Lemma \ref{lem:exact_count_dp} is in Appendix \ref{app.exactcountproof}; we provide a proof sketch here to highlight the key ideas. 
Recall that \textsf{COUNTING-DICT}$_i$ uses a dictionary data structure \textsf{DICT} to store the counts of the elements seen in the stream, and the difference in the number of distinct elements is exactly computed from \textsf{DICT} and fed as input to \textsf{BM-Count-Distinct}$_i$, which outputs a noisy distinct element count $\hat{s}_i$. Thus, in order to prove the privacy guarantee of \textsf{COUNTING-DICT}$_i$, we need to show that \textsf{BM-Count-Distinct}$_i$ is DP. Then the output of \textsf{COUNTING-DICT}$_i$ will simply be post-processing on the DP output of \textsf{BM-Count-Distinct}$_i$ (either $\hat{s}_i$ or TOO-HIGH), so it will also be DP. 
Hence, we prove that \textsf{BM-Count-Distinct}$_i$ when used in \textsf{COUNTING-DICT}$_i$ is $\beta/L$-approximate $\rho/L$-zCDP (in \cref{lem:bm-ec-dp}), which implies Lemma \ref{lem:exact_count_dp}.

We next sketch the proof of \cref{lem:bm-ec-dp}, that \textsf{COUNTING-DICT}$_i$ is $\rho/L$-zCDP.
On a high-level, we need to argue that if $x$ and $x'$ are neighboring streams, then the resulting streams (after hashing and blocklisting, see \cref{def:bl-stream}) that are fed as input to \textsf{BM-Count-Distinct}$_i$ can differ in at most $W+1$ positions with probability $1-\beta/L$, for \occur\ bound $W=T^{2/3}$ (see \cref{lem:bl-id}). Opening up the analysis of the binary tree mechanism in \textsf{BM-Count-Distinct}$_i$, we show that the sensitivity of the nodes over all levels of the binary tree is at most $2 \sqrt{(W+1)(\log(T)+1)}$. Thus adding Gaussian noise proportional to this quantity to each node of the binary tree preserves $\rho/L$-zCDP with probability $\beta/L$.

We emphasize that the privacy argument for the \textsf{BM-Count-Distinct}$_i$ instance in \textsf{COUNTING-DICT}$_i$ cannot be directly applied to the \textsf{BM-Count-Distinct}$_i$ instance in \textsf{COUNTING-KSET}$_i$. This is because in \textsf{COUNTING-KSET}$_i$, the output of the \textsf{KSET} is used to compute the input stream to \textsf{BM-Count-Distinct}$_i$, and failures of the \textsf{KSET} can lead to a large difference in the outputs of \textsf{BM-Count-Distinct}$_i$ on neighboring streams.
Thus the failure behavior of the \textsf{KSET} must be handled carefully in the privacy analysis. \cref{cor:kset-count-dp} gives the resulting privacy guarantee for \textsf{COUNTING-KSET}$_i$.

\begin{corollary}\label{cor:kset-count-dp}
\textsf{COUNTING-KSET}$_i$ (\cref{alg:kset-count}) is (1) $(\beta/L)$-approximate $(\rho/L)$-zCDP if $ob=true$, and (2) $(2\beta/L)$-approximate $(\rho/L)$-zCDP if $ob=false$.
\end{corollary}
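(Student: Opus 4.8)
The plan is to derive \cref{cor:kset-count-dp} directly from the coupling lemma \cref{lem:new_privacy} together with the privacy of the dictionary-based algorithm \cref{lem:exact_count_dp}, invoking the post-processing and approximate-zCDP framework. First I would fix $i \in [L]$ and recall that, by \cref{lem:exact_count_dp}, \textsf{COUNTING-DICT}$_i$ is $(\rho/L)$-zCDP when $ob=true$, and $(\beta/L)$-approximate $(\rho/L)$-zCDP when $ob=false$. Let $E$ denote the output distribution of \textsf{COUNTING-DICT}$_i$ and $K$ that of \textsf{COUNTING-KSET}$_i$ on a fixed input stream, with all shared randomness coupled as in the statement of \cref{lem:new_privacy}. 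Observe that the parameter choices in \cref{alg:kset-count} (namely the capacity $k = 16\max\{\gamma/\eta,32\lambda/\eta^2\} + 4\sqrt{2}\,(\log T+1)^{3/2}\sqrt{W\log(20T\lceil\log T\rceil/\beta)}/\sqrt{\rho}$ and the threshold $\tau$ with the smaller constant) exactly satisfy the hypothesis $k \geq \tau + O(\mathrm{polylog}(T/\beta)\sqrt{W}/\sqrt{\rho})$ required by \cref{lem:new_privacy}; so that lemma applies and gives $d_{TV}(K,E) \leq \beta/L$.

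Next I would translate the small total-variation distance into an additive loss in the approximate-zCDP budget. The key fact is a standard lemma: if two algorithms $K$ and $E$ have $d_{TV}(K(x),E(x)) \leq \beta/L$ for every input $x$, and $E$ is $\delta_0$-approximate $(\rho/L)$-zCDP, then $K$ is $(\delta_0 + \beta/L)$-approximate $(\rho/L)$-zCDP. This is seen by, for each neighboring pair $x \sim x'$, taking the "good" events to be the intersection of the events witnessing approximate zCDP for $E(x), E(x')$ with the events $\{K(x)=E(x)\}$ and $\{K(x')=E(x')\}$ under the coupling; on these events $K$ and $E$ have identical conditional distributions, and the event probabilities drop by at most an additional $\beta/L$ each. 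Applying this with $\delta_0 = 0$ in the case $ob=true$ yields that \textsf{COUNTING-KSET}$_i$ is $(\beta/L)$-approximate $(\rho/L)$-zCDP, and with $\delta_0 = \beta/L$ in the case $ob=false$ yields $(2\beta/L)$-approximate $(\rho/L)$-zCDP, which are exactly the two claims. One should also note that the returned value of \textsf{COUNTING-KSET}$_i$ (either $\hat s_i$ or TOO-HIGH) is a function of the binary-mechanism output stream and the deterministic NIL-indicator of the \textsf{KSET}, so whatever privacy is established for the coupled pair $(K,E)$ transfers verbatim — there is no extra post-processing subtlety at this stage because \cref{lem:exact_count_dp} already accounts for the thresholding inside \textsf{COUNTING-DICT}$_i$.

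I expect the only genuine obstacle to be the precise bookkeeping of the failure probabilities and ensuring the coupling hypothesis is met with the stated constants; the bulk of the technical work — bounding the sensitivity of the summation stream in the dictionary algorithm and the \textsf{KSET}-coupling itself — is already encapsulated in \cref{lem:exact_count_dp} and \cref{lem:new_privacy}. The remaining care is: (i) confirming the $O(\cdot)$ slack in $k - \tau$ is at least what \cref{lem:new_privacy} demands (a direct comparison of the two displayed expressions for $k$ and $\tau$), and (ii) correctly chaining the two failure budgets ($\beta/L$ from the \textsf{KSET} coupling, plus $\beta/L$ from the approximate-zCDP of \textsf{COUNTING-DICT}$_i$ in the $ob=false$ case) without double-counting, giving $2\beta/L$ rather than something larger. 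Since \cref{thm:new_privacy} is then obtained by composing these $L$ instances via \cref{thm:zcdp-comp} and post-processing, keeping the per-instance budget at $\rho/L$ and the per-instance failure at $(1\text{ or }2)\cdot\beta/L$ is exactly what makes the final accounting close.
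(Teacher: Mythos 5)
Your proposal is correct and follows essentially the same route as the paper's own (very terse) proof: apply \cref{lem:exact_count_dp} for the privacy of \textsf{COUNTING-DICT}$_i$, apply \cref{lem:new_privacy} for the $\beta/L$ total-variation coupling, and absorb the coupling failure into the approximate-zCDP parameter, giving $\beta/L$ when $ob=true$ and $2\beta/L$ when $ob=false$. Your write-up is in fact more explicit than the paper's, which simply states that "the claims follow"; the extra care you take in constructing the conditioning events and checking the $k-\tau$ slack is consistent with, not divergent from, the intended argument.
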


\begin{proof}
    From \cref{lem:exact_count_dp} we know that \textsf{COUNTING-DICT}$_i$ is $\beta/L$-approximate $\rho/L$-zCDP. In particular, for $ob=true$, \textsf{COUNTING-DICT}$_i$ is $\rho/L$-zCDP. From \cref{lem:new_privacy}, we know that the output distribution of \textsf{COUNTING-DICT}$_i$ and \textsf{COUNTING-KSET}$_i$ is identical except with probability $\beta/L$. The claims for when $ob=true$ vs $ob=false$ follows.
\end{proof}

We are finally ready to prove \cref{thm:new_privacy}.

\begin{proof}[Proof of \cref{thm:new_privacy}]

We first argue about the case when $ob=false$ as this is the more general case. 

   The randomness of \textsf{CountDistinct} can be viewed as a joint probability distribution $\cR_{CD}= \cR_{g} \times \cR_{BL} \times \cR_{KC_1}\times \ldots \times \cR_{KC_L}$ where $\cR_g$ denotes the randomness from picking hash function $g$ (in Line~\ref{line:hash} of \cref{alg:large-universe-sampling}), $\cR_{BL}$ denotes the randomness associated with sampling an element to add to blocklist $\cB$ (in Line~\ref{line:blocklist}), and $\cR_{KC_i}$ denotes the randomness from the subroutine \textsf{COUNTING-KSET}$_i$ for $i\in [L]$. Similarly, the randomness of \textsf{CountDistinct}' can be viewed as a joint probability distribution $\cR_{CD'}= \cR_{g} \times \cR_{BL} \times \cR_{EC_1}\times \ldots \times \cR_{EC_L}$ where $\cR_g$ denotes the randomness from picking a hash function $g$, $\cR_{BL}$ denotes the randomness associated with sampling an element to add to blocklist $\cB$ and $\cR_{EC_i}$ denotes the randomness from the subroutine \textsf{COUNTING-DICT}$_i$ for $i\in [L]$. 

    We first define an identity coupling over the randomness $\cR_g$ of picking the hash function and the randomness $\cR_{BL}$ of blocklisting between \textsf{CountDistinct} and \textsf{CountDistinct'}.  
    In other words, we fix the same hash function $g$ and the same sampling rate to blocklist an item for \textsf{CountDistinct} and \textsf{CountDistinct'}. 
    Applying \cref{lem:new_privacy}, we have that the outputs of \textsf{COUNTING-KSET}$_i$ and \textsf{COUNTING-DICT}$_i$ are identical except with probability $\beta/L$. In particular, from \cref{cor:kset-count-dp}, we have that \textsf{COUNTING-KSET}$_i$ is $2\beta/L$-approximate $\rho/L$-zCDP. 
    Since we have a total of $L$ substreams, using a union bound argument, the entire \textsf{CountDistinct} algorithm is $2\beta $-approximate $\rho$-zCDP by basic composition of zCDP (\cref{thm:zcdp-comp}).\footnote{Note that \cref{thm:zcdp-comp} gives an even tighter guarantee, but we we use this slightly weaker composition for a cleaner presentation.}

    In the case when $ob=true$, note that we only have to consider the identity coupling of over hashing items using hash function $g$ in \textsf{CountDistinct} and \textsf{CountDistinct'}. 
    The claim that the outputs are identical except with probability $\beta/L$ follows from \cref{lem:new_privacy}. But now, from \cref{cor:kset-count-dp}, we have that \textsf{COUNTING-KSET}$_i$ is $\beta/L$-approximate $\rho/L$-zCDP, and the rest of the argument follows from a union bound and composition.   
 \end{proof}

\subsection{Accuracy}\label{s.accresult}

We present our main accuracy theorem in Theorem~\ref{thm:acc_main}. Note that our accuracy theorem is also parameterized by the value of the boolean flag $ob$ and gives different error/space trade-offs according to whether $ob$ is true or false. In particular if $ob$ is true, meaning our algorithm is promised that the \occur\ of input streams is bounded by $W$, then we get additive error only that has $\sqrt{W}$ dependency. If $ob$ is false, then our algorithm makes no assumption on the \occur\ of the input stream and therefore incurs a higher additive error.

\begin{theorem}[Accuracy of Algorithm~\ref{alg:large-universe-sampling}]\label{thm:acc_main}
Let $F(t)$ be the correct number of distinct elements of the stream at time $t$ and let $\lambda = 2\log (40 \lceil \log (T) \rceil / \beta )$. 
\begin{enumerate}
    \item \label{it:acc-cond} When $ob$ is true, let $\gamma =  \sqrt{ \frac{  4  (W+1) (\log T + 1)^3  \log (10 (\log T + 1)/ \beta) }{\rho} } $, and
    \item \label{it:acc-uncond} when $ob$ is false, let $\gamma = \sqrt{ \frac{  4  (T^{2/3}+1) (\log T + 1)^3  \log (10 (\log T + 1)/ \beta) }{\rho} }  + 3T^{1/3} \log ( T^{1/3} \lceil \log T \rceil / \beta)$.
\end{enumerate}
For a fixed timestep $t \in [T]$, with probability at least $1- 2\beta$, the output of Algorithm~\ref{alg:large-universe-sampling} at time $t$ is a $(1 \pm 4\eta, 32 \max \{ \gamma/\eta, 32\lambda / \eta^2\})$-approximation of $F(t)$ for any $\eta \in (0, 0.5)$.
\end{theorem}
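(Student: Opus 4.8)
The plan is to decompose the error into three sources—the subsampling error from using hash-based substreams, the \textsf{KSET} capacity/failure error, and the Gaussian noise injected by the $L$ parallel instances of \textsf{BinaryMechanism-CD}—and then to show that there is a ``good'' substream index $i$ on which all three are simultaneously controlled. First I would condition on the high-probability events from \cref{lem:k-set-acc}, namely that every \textsf{KSET} instance that contains at most $k$ distinct elements returns the correct set; taking a union bound over all $T$ timesteps and all $L$ instances costs at most $\beta/2$ in failure probability by the choice of failure parameter $\beta/(2TL)$ in \cref{line:kset-count-init}. I would similarly condition on the accuracy of \textsf{BinaryMechanism-CD} via \cref{lem:bm_acc}, which (over all $L$ instances and all timesteps) guarantees that the additive noise in each $\hat s_i$ is at most roughly $\gamma$ with probability $1-\beta/2$, where $\gamma$ is exactly the quantity in the statement—the first $\sqrt{(W+1)(\log T+1)^3\log(\cdots)/\rho}$ term coming from summing Gaussians of scale $\sqrt{(W+1)(\log T+1)/\rho}$ across the $\log T$ levels of the binary tree, and (when $ob$ is false) the extra additive $T^{1/3}\log(\cdots)$ term absorbing the contribution of the at most $\tilde O(T^{1/3})$ blocklisted high-\occur\ elements whose true count is no longer tracked.

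Next I would analyze the subsampling. Fix the true count $F(t)$. For the substream index $i$ with $2^i$ closest to $\eta^2 F(t)/(32\lambda)$ (truncated to $[L]$), a Chernoff-type bound for $\lambda$-wise independent hash functions shows that the sampled distinct count $|S_i|$ concentrates: with probability $1-\beta/(\poly)$, $|S_i|\in (1\pm\eta) 2^{-i}F(t)$, and moreover $|S_i|\ge \max\{\gamma/\eta,32\lambda/\eta^2\}$ so that the output condition in \cref{line:outout-dist} is met, while $|S_i|\le k - (\text{noise slack})$ so that \textsf{KSET}$_i$ does not overflow and the thresholding in \cref{line:threshold} does not trigger TOO-HIGH (this is where the precise definitions of $\tau$ and $k$ in \cref{alg:kset-count}, each of the form $16\max\{\gamma/\eta,32\lambda/\eta^2\}$ plus a $\tilde O(\sqrt{W}/\sqrt\rho)$ noise buffer, are used: $k$ exceeds $\tau$ by enough noise headroom, and $\tau$ in turn exceeds the largest sampled count we expect on the good substream). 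Then on this good $i$, the reported value $\cS_i[t]\cdot 2^i = (\,|S_i| \pm \gamma\,)\cdot 2^i$ is a $(1\pm\eta)$ multiplicative, $O(\gamma 2^i/\eta\cdot\ldots)$—more carefully, $O(\max\{\gamma/\eta,\lambda/\eta^2\})$ additive—approximation of $F(t)$, after tracking how the $2^i$ scaling interacts with the $|S_i|\ge \gamma/\eta$ lower bound to convert the additive $\gamma 2^i$ into a term of the claimed order.

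The remaining subtlety is that \cref{line:outout-dist} outputs the estimate from the \emph{largest} valid $i$, not necessarily the good $i$ identified above. So I would argue that any larger valid index $i' > i$ also yields a good estimate: validity at $i'$ forces $\cS_{i'}[t]\ge \max\{\gamma/\eta,32\lambda/\eta^2\}$ and not-TOO-HIGH, which by the \textsf{KSET}/\textsf{BinaryMechanism} accuracy events means $2^{-i'}F(t)$ is itself at least (roughly) $\gamma/\eta - \gamma$, and then the same concentration bound applied at level $i'$ (the concentration holds at every level where the expected count is large enough) shows $\cS_{i'}[t]\cdot 2^{i'}$ is also a $(1\pm O(\eta))$ multiplicative approximation with the same additive error. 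Combining the constant-factor losses across these steps—the $\eta \to 4\eta$ and the accumulation of the $16$'s and $32$'s from $\tau$, $k$, and the two-sided concentration—gives the $(1\pm 4\eta,\,32\max\{\gamma/\eta,32\lambda/\eta^2\})$ bound, and the union bound over the $O(1)$ bad events (KSET failures, binary mechanism noise, hash concentration) gives the $1-2\beta$ success probability.

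\textbf{Main obstacle.} I expect the delicate part to be the ``largest valid $i$'' argument together with the bookkeeping of the thresholds: one must verify that on the good substream $i$ the count is large enough to be reported yet small enough that \textsf{KSET}$_i$ does not overflow \emph{and} $\hat s_i \le \tau$, while simultaneously ruling out that some spuriously large $i'$ (where the true sampled count is tiny but noise pushes $\hat s_{i'}$ above the reporting threshold $\max\{\gamma/\eta,32\lambda/\eta^2\}$) gets selected—this is precisely why the reporting threshold is set to $\max\{\gamma/\eta,32\lambda/\eta^2\}$ rather than something smaller, so that noise of magnitude $\gamma$ alone cannot create a false positive, and chasing these inequalities through with the stated constants is the technical heart of the proof.
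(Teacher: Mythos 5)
Your overall strategy differs from the paper's: the paper first proves the accuracy statement for an idealized variant in which every \textsf{COUNTING-KSET}$_i$ is replaced by \textsf{COUNTING-DICT}$_i$ (\cref{thm:acc_exact}, conditioning on \cref{lem:substream_concentration}, \cref{lem:bm_acc}, \cref{lem:e2-occur}, and, when $ob$ is false, \cref{lem:blocklist_size}), and then transfers the guarantee to the real algorithm in one stroke via the coupling/total-variation bound of \cref{lem:new_privacy}, paying $\beta$ for the coupling and $\beta$ for the idealized accuracy to get $1-2\beta$. Your direct analysis of the \textsf{KSET}-based algorithm is viable in principle (it conditions on essentially the same events), but it silently re-assumes something the coupling lemma proves: that $\hat s_i(t)$ really equals the current sampled distinct count plus binary-mechanism noise \emph{even when the \textsf{KSET} has failed at earlier timesteps and later recovered}. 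The distinct count of a turnstile substream is not monotone, so \textsf{KSET}$_i$ can return NIL on an interval and then succeed again; correctness of your step ``$\cS_i[t]=|S_i|\pm\gamma$'' then depends on the catch-up bookkeeping in \cref{alg:kset-count} (feeding $\textsf{diff}$ signed updates plus $t_{\textsf{diff}}-|\textsf{diff}|$ zeros) restoring the running sum, which is exactly Case~3 of the proof of \cref{lem:new_privacy}. Your sketch never verifies this, and without it the direct route has a hole.

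The second genuine gap is in the ``largest valid $i'$'' step, which you correctly flag as the obstacle but then resolve by assertion. You claim the concentration bound at level $i'$ again yields a $(1\pm O(\eta))$ multiplicative guarantee, but \cref{lem:substream_concentration} only gives the multiplicative form when $F(t)/2^{i'}\geq 4\lambda/\eta^2$; when the expected count at level $i'$ is below that, only the additive form $F_{i'}(t)\leq F(t)/2^{i'}+4\lambda/\eta$ is available. The paper closes this by a contradiction argument: validity at $i'$ plus the noise bound forces $F_{i'}(t)\geq 16\lambda/\eta^2$, which is incompatible with the additive bound when $F(t)/2^{i'}<4\lambda/\eta^2$ (in Case~1, where $F(t)$ is large), and by a separate Case~2 analysis for small $F(t)$, where no multiplicative guarantee holds at all and the output is instead bounded by $4F(t)\leq 32\max\{\gamma/\eta,32\lambda/\eta^2\}$, i.e.\ absorbed into the additive term. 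Your sketch contains neither the contradiction argument nor the small-$F(t)$ case split, and your choice of good index ($2^i$ near $\eta^2F(t)/(32\lambda)$) ignores the $\gamma/\eta$ part of the reporting threshold in \cref{line:outout-dist}, so when $\gamma/\eta>32\lambda/\eta^2$ (the typical regime, since $\gamma$ grows like $\sqrt{W}$) the substream you designate as ``good'' would not even clear the output condition; the paper instead takes $i^*$ to be the largest $i$ with $F(t)/2^{i}\geq 4\max\{\gamma/\eta,32\lambda/\eta^2\}$. These are fixable within your framework, but as written the technical heart of the theorem is missing.
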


To prove Theorem \ref{thm:acc_main}, we use three helper lemmas, all proved in Appendix \ref{app.helperlemmas}. Lemma \ref{lem:substream_concentration} bounds the number of elements in the substream after hashing. Lemma \ref{lem:bm_acc} proves the accuracy of \textsf{BinaryMechanism-CD} algorithm. Lemma \ref{lem:blocklist_size} bounds the size of the blocklist when $ob$ is false. With the help of these lemmas, we can show the accuracy of \textsf{COUNTING-DICT}, as an intermediate step in the analysis. The proof of Theorem \ref{thm:acc_exact} is deferred to Appendix \ref{app:exactacc}.

\begin{restatable}{theorem}{accuracy}
%\begin{theorem}
\label{thm:acc_exact}
Let $F(t)$ be the correct number of distinct elements of the stream at time $t$ and let $\lambda = 2\log (40 \lceil \log (T) \rceil / \beta )$. When $ob$ is true, let $\gamma =  \sqrt{ \frac{  4  (W+1) (\log T + 1)^3  \log (10 (\log T + 1)/ \beta) }{\rho} } $ and when $ob$ is false, let $\gamma = \sqrt{ \frac{  4  (T^{2/3}+1) (\log T + 1)^3  \log (10 (\log T + 1)/ \beta) }{\rho} }  + 3T^{1/3} \log (T^{1/3} \lceil \log T \rceil / \beta) $.
For a fixed timestep $t \in [T]$, with probability at least $1- \beta$, the output of Algorithm~\ref{alg:large-universe-sampling} at time $t$ with \textsf{COUNTING-DICT}$_i$ as the subroutine is a $(1 \pm 4\eta, 32 \max \{ \gamma/\eta, 32\lambda / \eta^2\})$-approximation of $F(t)$ for any $\eta \in (0, 0.5)$.
%\end{theorem}
\end{restatable}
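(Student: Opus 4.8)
The plan is to fix the timestep $t$, condition on a good event of probability at least $1-\beta$ assembled from the three helper lemmas, and then argue deterministically on that event. Write $F=F(t)$, let $M:=\max\{\gamma/\eta,32\lambda/\eta^2\}$ be the selection threshold in Line~\ref{line:outout-dist}, and let $\gamma_{\mathrm{BM}}$ be the Gaussian-noise summand of $\gamma$ (so $\gamma_{\mathrm{BM}}=\gamma$ when $ob=true$, and $\gamma_{\mathrm{BM}}$ is the first summand of $\gamma$ when $ob=false$); recall $\tau\ge 16M$, and note $M\ge\gamma/\eta$ forces $\gamma\le\eta M$, so any quantity of order $\gamma$ is at most $M$. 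In this variant of \cref{alg:large-universe-sampling} the subroutine is \textsf{COUNTING-DICT}$_i$, so the internal count $s_i$ is the \emph{exact} number of distinct elements in substream $i$ after blocklisting, the released value is $\hat s_i=\cS_i[t]$ whenever it is not TOO-HIGH, and ``TOO-HIGH'' means simply $\hat s_i>\tau$. The good event is the intersection of: (i) the substream-concentration bound of \cref{lem:substream_concentration} together with a routine upper-tail estimate --- writing $F'$ for the distinct count at time $t$ in the blocklisted run ($F'=F$ when $ob=true$) and $\mu_i:=2^{-i}F'$, every level with $\mu_i\gtrsim\lambda/\eta^2$ has $s_i\in(1\pm\eta)\mu_i$, while every level with $\mu_i$ below that threshold has $s_i<M/2$; (ii) the binary-mechanism accuracy of \cref{lem:bm_acc} --- $|\hat s_i-s_i|\le\gamma_{\mathrm{BM}}$ for all $i\in[L]$; and (iii) when $ob=false$, the blocklist-size bound of \cref{lem:blocklist_size} --- $|\cB|\le 3T^{1/3}\log(T^{1/3}\lceil\log T\rceil/\beta)\le\gamma$, hence $|F-F'|\le|\cB|\le\gamma$. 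I would check that the failure probabilities of (i)--(iii) sum to at most $\beta$ under the stated choices of $\lambda$ and $\gamma$, and record that the hash $g$ is independent of $\cB$ (the blocklisting coin in Line~\ref{line:blocklist} uses fresh randomness), which is what makes $\mu_i=\mathbb{E}_g[s_i\mid\cB]$ a well-defined deterministic quantity.

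The core of the proof is to show that for \emph{every} eligible level $i$ --- meaning $M\le\hat s_i\le\tau$, exactly the condition in Line~\ref{line:outout-dist} --- the quantity $\hat s_i\cdot 2^i$ is a $(1\pm 4\eta,32M)$-approximation of $F$; since the algorithm outputs $\hat s_i\cdot 2^i$ for the largest eligible $i$ (or $0$ if none exists), this handles all outputs of the first kind. Fix an eligible $i$. By (ii), $s_i\ge\hat s_i-\gamma_{\mathrm{BM}}\ge M-\eta M\ge M/2$, so (i) forces us into the concentrated regime: $s_i\in(1\pm\eta)\mu_i$, and with (ii), $\hat s_i=(1\pm\eta)\mu_i\pm\gamma_{\mathrm{BM}}$. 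The decisive quantitative step bounds the sampling scale: eligibility gives $\hat s_i\ge M\ge\gamma/\eta$, while $\hat s_i\le(1+\eta)\mu_i+\gamma_{\mathrm{BM}}\le(1+\eta)\mu_i+\gamma$; rearranging and using $\eta\le 1/2$ gives $\mu_i\ge\gamma/(3\eta)$, hence $2^i=F'/\mu_i\le 3\eta F'/\gamma$ and $\gamma_{\mathrm{BM}}\,2^i\le\gamma\,2^i\le 3\eta F'$. Multiplying $\hat s_i=(1\pm\eta)\mu_i\pm\gamma_{\mathrm{BM}}$ by $2^i$ and using $\mu_i\,2^i=F'$ then gives $\hat s_i\cdot 2^i=(1\pm\eta)F'\pm 3\eta F'=(1\pm 4\eta)F'$; substituting $F'=F\pm\gamma$ from (iii) and absorbing the $O(\gamma)\le M\le 32M$ slack into the additive term finishes this step.

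It remains to handle the branch where Line~\ref{line:outout-dist} outputs $0$ because no level is eligible. If $F'\ge 12M$ then, since $F'\le T\le 2^L$, there is a level $i^*\in[L]$ with $\mu_{i^*}\in[3M,6M)$; by (i) (using $3M\gg\lambda/\eta^2$) and (ii), $\hat s_{i^*}$ lies strictly between $M$ and $\tau$, so $i^*$ is eligible and the output is governed by the previous paragraph. Contrapositively, if no level is eligible then $F'<12M$, and then by (iii) $F\le F'+|\cB|<12M+\gamma<32M$, so $0$ is within additive error $32M$ of $F$ and is again a $(1\pm 4\eta,32M)$-approximation. Combining the two cases proves the theorem at the fixed $t$ with probability at least $1-\beta$.

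The step I expect to be the main obstacle is the quantitative argument in the second paragraph --- controlling the \emph{amplified} error $\gamma_{\mathrm{BM}}\cdot 2^i$ at the selected level. The entire purpose of the $\cS_i[t]\ge M$ threshold is to force $2^i\lesssim\eta F'/\gamma$, so that $2^i$ times the additive binary-mechanism noise degrades to merely an $O(\eta)$-relative error; making the constants line up (so the bound reads cleanly as $1\pm 4\eta$ and $32M$, while the union bound over the $L$ levels and the three helper events still fits inside $\beta$) is the delicate part, and everything else reduces to a direct appeal to a helper lemma or an elementary tail/counting estimate. A secondary point needing care is the independence $g\perp\cB$ that makes $F'$ a clean deterministic quantity, allowing \cref{lem:substream_concentration} to be invoked conditionally on $\cB$ in the $ob=false$ case.
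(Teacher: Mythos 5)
Your proposal is correct, and its engine is the same as the paper's: condition on the substream-concentration bound (\cref{lem:substream_concentration}), the binary-mechanism accuracy bound (\cref{lem:bm_acc}), and, when $ob$ is false, the blocklist-size bound (\cref{lem:blocklist_size}); then use the selection threshold $\cS_i[t]\ge\max\{\gamma/\eta,32\lambda/\eta^2\}$ to force $2^i\lesssim \eta F/\gamma$ so that the amplified noise $\gamma\cdot 2^i$ is only an $O(\eta)$-relative error, with the same constants ($1\pm4\eta$, $32\max\{\gamma/\eta,32\lambda/\eta^2\}$). Where you diverge is only in bookkeeping, and each choice is sound. First, you dispense with the paper's fourth conditioning event (\cref{lem:e2-occur}, ``small true count $\Rightarrow$ not TOO-HIGH'') by noting that for \textsf{COUNTING-DICT} the TOO-HIGH flag is pure thresholding of $\hat s_i$, so the fixed-$t$ bound $|\hat s_i-s_i|\le\gamma$ already controls it; this is valid and slightly leaner. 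Second, you reorganize the case split: instead of the paper's dichotomy on $F(t)\gtrless 8\max\{\gamma/\eta,32\lambda/\eta^2\}$, you prove that \emph{every} eligible level is accurate and that absence of an eligible level forces $F$ small (via a level with $\mu_{i^*}\in[3M,6M)$); this is equivalent to, and arguably cleaner than, the paper's ``largest $i^*$ with $F(t)/2^{i^*}\ge 4\max\{\cdot\}$'' plus contradiction argument. Third, in the $ob=false$ case you apply the hashing concentration to the blocklisted stream ($F'$) and correct $|F-F'|\le|\cB|$ at the end, whereas the paper applies it to the original stream and folds the blocklist error into $\gamma$ as extra ``binary-mechanism'' error. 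The only point to tighten is your appeal to independence of $g$ and $\cB$: in \cref{alg:large-universe-sampling} the blocklisting coin is flipped only when $g(x_t)\neq\perp$, so $\cB$ is not literally independent of $g$; the cleanest fix is simply to follow the paper's bookkeeping (concentrate the un-blocklisted substream counts $F_i(t)$ and charge the at most $|\cB|$ per-substream discrepancy to the additive term, which is exactly why the paper's $\gamma$ for $ob=false$ contains the $3T^{1/3}\log(T^{1/3}\lceil\log T\rceil/\beta)$ summand), after which your argument goes through verbatim.
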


With this result, we can finally prove Theorem \ref{thm:acc_main}, by showing that substituting \textsf{COUNTING-KSET} in place of \textsf{COUNTING-DICT} still allows high accuracy of Algorithm~\ref{alg:large-universe-sampling}.

\begin{proof}[Proof of Theorem \ref{thm:acc_main}]
    We apply a union bound argument that combines Lemma~\ref{lem:new_privacy} and Theorem~\ref{thm:acc_exact}. By Lemma~\ref{lem:new_privacy}, we can link the accuracy of \textsf{COUNTING-DICT}$_i$ to that of \textsf{COUNTING-KSET}$_i$ and with probability at least $1-\beta$, the output distributions of all the $L$ instances of \textsf{COUNTING-DICT}$_i$ and \textsf{COUNTING-KSET}$_i$ used in Algorithm~\ref{alg:large-universe-sampling} are the same. Furthermore, we have argued that we have the desired accuracy with probability at least $1-\beta$ for Algorithm~\ref{alg:large-universe-sampling} with \textsf{COUNTING-DICT}$_i$ in Theorem~\ref{thm:acc_exact}. Thus, by a union bound over the two events that (1) the output of \textsf{COUNTING-DICT}$_i$ matches the output of \textsf{COUNTING-KSET}$_i$ for $i\in [L]$ and (2) \textsf{COUNTING-DICT}$_i$ is accurate, we have proven our claim.   
\end{proof}

\subsection{Space complexity}\label{s.spaceresult}

Finally, we prove the space guarantees of our algorithm below. As with the privacy and accuracy result, \cref{thm:space} is parameterized by the value of the boolean flag $ob$. When $ob$ is true and the input stream is promised to have \occur\ bounded by $W$, the space is only polynomial in $W$. When $ob$ is false, the algorithm allows general input streams, and internally enforces a bound $T^{2/3}$ on the stream's \occur\ using blocklisting, which requires more space.

\begin{theorem}\label{thm:space}
With probability at least $1-\beta$, assuming the universe size $|\cU| = \text{poly}(T)$:
\begin{enumerate}
       \item\label{it:space-cond} If $ob$ is true, the space complexity of \cref{alg:large-universe-sampling} is $O(\sqrt{W} \cdot \text{polylog}(T/\beta)) \cdot \text{poly}(\frac{1}{\rho \eta})$. 
        \item \label{it:space-uncond}If $ob$ is false, the space complexity of \cref{alg:large-universe-sampling} is $O(T^{1/3} \cdot \text{polylog}(T/\beta)) \cdot \text{poly}(\frac{1}{\rho \eta})$. 
\end{enumerate}
\end{theorem}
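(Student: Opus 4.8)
The plan is to bound, component by component, the working memory of \textsf{CountDistinct} (\cref{alg:large-universe-sampling}) and then sum. The stateful components are: the $\lambda$-wise independent hash function $g$; the $L=\lceil\log T\rceil$ parallel instances of \textsf{COUNTING-KSET}$_i$, each maintaining one \textsf{KSET}, one \textsf{BinaryMechanism-CD}, and $O(1)$ auxiliary counters ($F_{i,\textsf{last}}$, $t_{\textsf{last}}$, $\hat s_i$); the blocklist $\cB$; and the output, for which only the current registers $\cS_i[t]$, $i\in[L]$, need be retained. First I would dispose of the cheap pieces: $g$ is stored as a degree-$O(\lambda)$ polynomial over a field of size $\poly(T)$, i.e.\ $O(\lambda\log|\cU|)=\text{polylog}(T/\beta)$ bits (using $\lambda=O(\log(\log T/\beta))$ and $|\cU|=\poly(T)$); each \textsf{BinaryMechanism-CD} uses $O(\log T)$ space by~\cite{ChanSS2011}, so all $L$ of them use $O(\log^2 T)$; the counters and output registers cost $O(L)$ words. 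These are all dominated by the \textsf{KSET} term below (and, when $ob$ is false, by the blocklist term).

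The dominant cost is the $L$ copies of \textsf{KSET}. By \cref{it:kset-space} of \cref{lem:k-set-acc}, a \textsf{KSET} of capacity $k$ and failure probability $\beta/(2TL)$ uses $O\big(k(\log T+\log|\cU|)\log\tfrac{2TLk}{\beta}\big)$ space. I would then substitute the value of $k$ from \cref{alg:kset-count}, namely $k=16\max\{\gamma/\eta,\,32\lambda/\eta^2\}+O\big(\tfrac{(\log T)^{3/2}\sqrt{W\log(T\log T/\beta)}}{\sqrt\rho}\big)$, together with the definition of $\gamma$ from \cref{alg:large-universe-sampling}. When $ob$ is true, $\gamma=O\big(\sqrt W\,\tfrac{(\log T)^{3/2}\sqrt{\log(\log T/\beta)}}{\sqrt\rho}\big)$, hence $k=O\big(\sqrt W\cdot\text{polylog}(T/\beta)\cdot\poly(\tfrac1{\rho\eta})\big)$; when $ob$ is false, the algorithm uses the internal bound $W=T^{2/3}$, so $\sqrt W=T^{1/3}$, $\gamma=O\big(T^{1/3}\cdot\text{polylog}(T/\beta)/\sqrt\rho\big)$, and $k=O\big(T^{1/3}\cdot\text{polylog}(T/\beta)\cdot\poly(\tfrac1{\rho\eta})\big)$. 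Since $|\cU|=\poly(T)$ gives $\log|\cU|=O(\log T)$, and since $k$ is polynomially bounded in $T,1/\rho,1/\eta,1/\beta$ so that $\log\tfrac{2TLk}{\beta}=O(\log(T/(\rho\eta\beta)))$, one \textsf{KSET} occupies $O(k)\cdot\text{polylog}(T/(\rho\eta\beta))$ space; multiplying by $L=O(\log T)$ preserves the form $O(\sqrt W\cdot\text{polylog}(T/\beta)\cdot\poly(\tfrac1{\rho\eta}))$ (resp.\ with $T^{1/3}$ in place of $\sqrt W$).

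It remains to handle the blocklist $\cB$. When $ob$ is true, $\cB=\emptyset$ throughout the execution (\cref{alg:large-universe-sampling} adds to $\cB$ only in the branch where $ob$ is false), so it contributes nothing and the bound is deterministic; this yields \cref{it:space-cond}. When $ob$ is false, each of the at most $T$ element-occurrences in the stream is added to $\cB$ independently with probability $p=\log(T^{1/3}L/\beta)/T^{2/3}$ (\cref{alg:large-universe-sampling}, Line~\ref{line:blocklist}), so the expected size of $\cB$ is at most $Tp=T^{1/3}\log(T^{1/3}L/\beta)$; by \cref{lem:blocklist_size} a Chernoff bound yields $|\cB|=O(T^{1/3}\log(T/\beta))$ with probability at least $1-\beta$, and storing $\cB$ as a hash set of universe elements costs $O(|\cB|\log|\cU|)=O(T^{1/3}\cdot\text{polylog}(T/\beta))$. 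Adding this to the \textsf{KSET} term gives \cref{it:space-uncond}.

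The argument is almost entirely accounting: the only step that needs a genuine probabilistic argument rather than parameter substitution is the high-probability bound on $|\cB|$, which is exactly \cref{lem:blocklist_size}; this is also the sole source of the ``with probability at least $1-\beta$'' qualifier, since every other component has data-independent size (in particular a \textsf{KSET} of capacity $k$ occupies a fixed-size array regardless of the stream). The one minor care-point I anticipate is checking that the nested logarithms --- notably $\log k$ appearing inside $\log\tfrac{2TLk}{\beta}$ --- stay $\text{polylog}(T/(\rho\eta\beta))$, which holds because $k=\poly(T,1/\rho,1/\eta,1/\beta)$ in both regimes.
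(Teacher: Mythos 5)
Your proposal is correct and follows essentially the same route as the paper's proof: decompose the space into the \textsf{KSET} instances (using \cref{it:kset-space} of \cref{lem:k-set-acc}), the $L$ copies of \textsf{BinaryMechanism-CD}, and the blocklist, then substitute $k$ and $\gamma$ for the two settings of $ob$ and observe that the \textsf{KSET} term dominates. Your added observations --- that the hash function and counters are negligible, that the \textsf{KSET} footprint is data-independent so the only probabilistic ingredient is the bound on $|\cB|$ from \cref{lem:blocklist_size} (the paper instead conditions wholesale on the events of \cref{thm:acc_main}), and that the nested $\log k$ factor stays polylogarithmic --- are accurate minor refinements of the same argument.
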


\begin{proof}

We condition on all the high-probability events used in the proof of \cref{thm:acc_main}; these events occur with probability $1-\beta$ as shown in the proof of \cref{thm:acc_main}. The space usage comes from (1) the KSET data structure in \textsf{COUNTING-KSET}, (2) \textsf{BinaryMechanism-CD}, and (3) the blocklist (when $ob$ is false).
\begin{itemize}
    \item     From \cref{lem:k-set-acc}, the $L$ instantiations of \textsf{COUNTING-KSET} together have space complexity $L \cdot O(k (\log T + \log |\cU|) \log (k/\beta)) = O \left(\log T \cdot k \cdot (\log T + \log |\cU|) \log (k / \beta) \right)$ where $\lambda = 2\log (40 \lceil \log (T) \rceil / \beta )$ and $k = 16 \max \{  \gamma/\eta, 32\lambda / \eta^2 \} +   4\sqrt{2} \frac{(\log T + 1)^{3/2}\sqrt{W \log(20T \lceil \log T \rceil /\beta)}}{\sqrt{\rho}} $ (here $W=T^{2/3}$ when $ob$ is false). We make the standard assumption that the universe size $|\cU|$ is $\text{poly}(T)$, so we can represent an item in $\text{poly}(T)$ bits. This means that the space complexity simplifies to $O(k \cdot \text{poly}\log (T) \cdot \log (k / \beta) )$. 
    
    \item     As shown in \cite{ChanSS2011}, one instance of \textsf{BinaryMechanism-CD} uses $O(\log(T))$ space, and therefore the $L$ copies of the \textsf{BinaryMechanism-CD} use space $O(L\cdot \log(T)) = O(\log^2 (T))$. 

    \item The blocklist when $ob$ is false has size $O(T^{1/3} \log (T/\beta))$ when conditioned on the high-probability events.  
\end{itemize}

The dominating term is the one from the \textsf{KSET} data structures, which is $O(k \cdot \text{polylog} (T) \cdot \log (k / \beta) )$ for $k = 16 \max \{  \gamma/\eta, 32\lambda / \eta^2 \} +   4\sqrt{2} \frac{(\log T + 1)^{3/2}\sqrt{W \log(20T \lceil \log T \rceil /\beta)}}{\sqrt{\rho}} $. 
    
If $ob$ is true, $\gamma=  \sqrt{ \frac{  4  (W+1) (\log T + 1)^3  \log (10 (\log T + 1)/ \beta) }{\rho} }  $ in which case the space complexity simplifies to $O(\sqrt{W} \cdot \text{polylog}(T/\beta)) \cdot \text{poly}(\frac{1}{\rho \eta})$; otherwise, when $ob$ is false, $\gamma =  \sqrt{ \frac{  4  (T^{2/3}+1) (\log T + 1)^3  \log (10 (\log T + 1)/ \beta) }{\rho} }  + 3T^{1/3} \log (5 T^{1/3} \lceil \log T \rceil / \beta) $, in which case the space complexity simplifies to $O(T^{1/3} \cdot \text{polylog}(T/\beta)) \cdot \text{poly}(\frac{1}{\rho \eta})$. 
\end{proof}

\section{Blocklisting Problem: Space Upper and Lower Bounds}
\label{sec:lb}

In this section, we formally define the problems of blocklisting items with high flippancy (and high \occur) and prove a space lower bound for both problems. Our lower bound is information theoretic and applies to any algorithm for blocklisting (flippancy or \occur), including exponential time algorithms and non-private algorithms. 
Then, we show that the problem of blocklisting \occur\ has an almost-matching space upper bound that is tight up to log factors, given by \cref{alg:large-universe-sampling}. 

Recall that the flippancy of an item is defined in~\cite{JainKRSS23} as the number of timesteps where the item switches between being present to absent or vice versa, while \occur\ is defined by the number of timesteps where an item appears in the stream (with any sign).

%\paragraph{Blocklist problem.} 
Informally, we define the $\bf{blocklist}_{flip}(W)$  (resp., $\bf{blocklist}_{occ}(W)$) problem to be the problem of identifying, for each timestep of a turnstile stream, whether the current element of the stream has flipped $<W$ times (resp., occurred $<W$ times) before this timestep. More formally, let $\cU$ be a universe of items and let $x=(x_1,\ldots, x_T)$, be a turnstile stream where for each time $t\in [T]$, the stream element $x_t$ is either an insertion ($+u$) or deletion ($-u$) for some $u \in \cU$, or $x_t=\perp$ indicating an empty update.

\begin{definition}[The $\bf{blocklist}_{flip}(W)$ (resp., $\bf{blocklist}_{occ}(W)$) problem]   
 For the turnstile stream $x=(x_1,\ldots, x_T)$, define the \emph{ground truth} for $x$ as the binary stream of outputs $o^*(x) =(o^*_1, \ldots, o^*_T)$ where for each $t \in [T]$, $o^*_t=0$ if $x_t$ has flippancy $<W$ (resp., \occur\ $< W$) in the prefix stream $(x_1, \ldots, x_{t-1})$ or if $x_t = \perp$, and $o^*_t =1$ otherwise.
 Let $o(x) = (o_1, \ldots, o_T)$ be the output provided by an algorithm on the stream $x$. The algorithm has a \emph{false negative} at time $t$ if $o_t = 0$ and $o^*_t=1$, and has a  \emph{false positive} when $o_t = 1$ and $o^*_t = 0$.
\end{definition}

Notice that algorithms using flippancy blocklisting such as \cite{JainKRSS23}, or 
\occur\ blocklisting like our algorithm are required to have no false negatives with high probability. This is because the max flippancy (and \occur) bound is used to upper bound the sensitivity of the binary tree mechanism and thus needs to hold with probability at least ($1-\delta$) to achieve $(\eps,\delta)$-DP. 

While false negatives affect the privacy of the algorithm, false positives must also be bounded to ensure accuracy. For this reason, we ask if it is possible to design {\it low-space} blocklisting algorithms (for flippancy or \occur) that guarantee (with high probability) no false negatives, while bounding the number of false positives. In Section \ref{s.lowerbound}, we prove a lower bound on the space of any algorithm that bounds flippancy or \occur. Then in Section \ref{s.upperbound}, we give a near-matching upper bound by showing that Algorithm~\ref{alg:large-universe-sampling} solves the low-space \occur\ blocklisting problem using space that matches the lower bound up to log-factors.

\subsection{Lower bound}\label{s.lowerbound}

We first show that any algorithm (including non-private and exponential-time algorithms) for  $\bf{blocklist}_{flip}(W)$ and $\bf{blocklist}_{occ}(W)$, with no false negatives and with bounded false positives must have a space that depends on the number of false positives allowed. 
This space lower bound also extends to any algorithm for count distinct estimation that uses blocklisting methods to control flippancy or \occur, such as \cite{JainKRSS23} or our work.

\begin{restatable}{theorem}{blocklistlb}\label{thm:lb}
For any even integer $W>0$, and any integer $r > 0$,  let $\cA_{flip}$ (resp., $\cA_{occ}$) be an algorithm for $\bf{blocklist}_{flip}(W)$ (resp., $\bf{blocklist}_{occ}(W)$) such that, given an arbitrary stream $x$ of length $T$, with probability at least $1-\beta$, has no false negatives and has at most $r$ false positives.  Then algorithm $\cA_{flip}$ and $\cA_{occ}$ use space at least:
$$(1-\beta) \cdot \left( \log(1-\beta)  + \frac{T}{2W} \log\frac{WT}{T+2Wr}\right).$$
\end{restatable}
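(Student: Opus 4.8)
The plan is to prove this via an information-theoretic encoding argument: I will show that a correct low-space blocklisting algorithm can be used to communicate a large number of bits, so its state must be large enough to carry that information. Concretely, I would fix $\cA_{occ}$ (the argument for $\cA_{flip}$ is essentially identical, since an item with high \occur\ can be made to also have high flippancy by alternating insertions and deletions) and design a family of hard input streams. The streams will be built so that distinguishing which stream was fed to the algorithm is forced to happen through the algorithm's memory state at a carefully chosen ``breakpoint'' in the stream.

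The construction I have in mind partitions the stream of length $T$ into roughly $T/(2W)$ blocks, each of length $2W$. In each block, the adversary picks one ``special'' item and inserts/deletes it $W$ times (or inserts it $W$ times; the $2W$ length allows interleaving with padding $\perp$'s so items from distinct blocks don't interfere), and at the end of each block the algorithm is asked its output on the next arrival of that item. Whether the algorithm outputs $0$ or $1$ on a probe reveals whether that item crossed the \occur\ threshold $W$; the ``no false negatives'' promise forces the correct answer on items that truly exceeded $W$, and the ``at most $r$ false positives'' promise limits how often the algorithm can wrongly answer $1$ on under-threshold items. I would encode a subset/message into the choice of which items in each block are pushed over the threshold versus kept just below it, so that the algorithm's outputs on the probes (which are a deterministic function of its memory state, up to the shared randomness) recover that message except on at most $r$ coordinates. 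A counting argument (number of messages recoverable up to $r$ errors, i.e. dividing the total number of messages by the number of error patterns $\binom{\cdot}{\le r}$ and bounding the binomial) then yields that the memory must encode at least $\log\big(\#\text{messages}/\#\text{error patterns}\big)$ bits, which after simplification gives the stated bound $\frac{T}{2W}\log\frac{WT}{T+2Wr}$; the $(1-\beta)$ factor and the additive $\log(1-\beta)$ term come from conditioning on the good event (the algorithm succeeds with probability $\ge 1-\beta$) and applying the bound to the conditional distribution, i.e. a standard averaging/Markov step that loses a $(1-\beta)$ factor.

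More precisely, I would make the reduction a one-way communication problem: Alice holds the message (the choice per block of how many ``over-threshold'' items there are, or which items are over-threshold), Alice runs $\cA_{occ}$ on the prefix of the stream that encodes her message, and sends the algorithm's memory state to Bob; Bob continues the stream with the probe queries and reads off the outputs, thereby decoding. Since the output stream is, conditioned on the shared randomness, a deterministic function of (memory state, suffix), the memory state must have entropy at least the decodable information, which is at least $H(\text{message}) - (\text{bits lost to } r \text{ errors})$. Fixing the shared randomness to its best value (another averaging step) makes the algorithm deterministic, so the number of distinct memory states is at least the number of decodable messages, giving the space lower bound in bits. I would choose the message space to have size about $\binom{\text{(number of items available)}}{\text{(number pushed over threshold)}}$ per block, or more simply $2$ per block (over vs.\ under threshold for a single probed item per block), yielding $2^{T/(2W)}$ messages, and then the logarithm of the ratio with the error-pattern count $\sum_{i\le r}\binom{T/(2W)}{i}$ gives the claimed expression after bounding the binomial sum by $(eT/(2Wr))^{r}$ or similar and simplifying.

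The main obstacle I anticipate is twofold. First, setting up the stream so that the \occur\ (and flippancy) of items across different blocks are genuinely independent — the probes must not themselves inflate the \occur\ of items that are supposed to stay below threshold, and the ``no update'' symbol $\perp$ and block lengths need to be budgeted carefully so that the total length is exactly $T$ and every probed item's \occur\ is controlled to be exactly $W-1$ or $\ge W$ as intended; this is where the even-$W$ hypothesis and the factor $2W$ per block get used. Second, getting the arithmetic of the error-correction counting to collapse to the exact stated bound $\frac{T}{2W}\log\frac{WT}{T+2Wr}$ rather than a messier expression — in particular, matching the $\frac{T}{2W}$ blocks with $W$ items-per-block choices so that $\log(W^{T/(2W)}) - \log(\text{error patterns with }r\text{ flips}) = \frac{T}{2W}\log W - \Theta(r\log(T/(2Wr)))$ telescopes into $\frac{T}{2W}\log\frac{WT}{T+2Wr}$. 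I would handle the latter by choosing the per-block message alphabet to have size exactly $W$ (e.g.\ the \occur\ count of the single probed item in that block can be made to be any value in $\{0,1,\ldots,W-1\}$ when it is under threshold — wait, that does not interact with false positives) — more carefully, by letting each block contribute a multiplicative factor and then invoking the standard inequality $\sum_{i\le r}\binom{n}{i}\le (en/r)^r$ with $n=T/(2W)$, and simplifying, absorbing lower-order slack into the form of the bound as the authors have written it.
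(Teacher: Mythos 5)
Your high-level strategy (a hard input distribution plus an information-theoretic/one-way-communication argument showing the memory state at a breakpoint must encode which items crossed the threshold, with a $(1-\beta)$ conditioning step and WLOG determinism) is the same family of argument as the paper's. However, your concrete construction and counting do not deliver the stated bound, and this is a genuine gap rather than an arithmetic detail. With one probed special item per block of length $2W$ and a binary blocklisting answer per probe, you extract at most one bit per block, i.e.\ about $\tfrac{T}{2W}$ bits in total, whereas the claimed bound is $\tfrac{T}{2W}\log\tfrac{WT}{T+2Wr}\approx \tfrac{T}{2W}\log W$ for small $r$ --- your binary-alphabet scheme cannot reach it. Your proposed fix, letting the probed item's \occur\ range over a $W$-ary alphabet, fails for exactly the reason you half-notice mid-sentence: the algorithm's output is only the binary over/under-threshold decision, so the $W$-ary value is not decodable from the probe. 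Likewise, your error accounting via $\sum_{i\le r}\binom{T/(2W)}{i}\le (eT/(2Wr))^r$ does not simplify to $\log\tfrac{WT}{T+2Wr}$; the $r$ false positives enter the correct bound in a structurally different way.

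The missing idea in the paper's proof is to make the secret a uniformly random subset $X$ of size $m=T/(2W)$ of a universe of size $n=T/2$ (not a per-block binary or $W$-ary symbol). In the first half of the stream each element of $X$ receives exactly $W$ alternating insertions/deletions (giving flippancy and \occur\ exactly $W$, which handles both problems at once and is where even $W$ is used); in the second half \emph{every} universe element is inserted once, so there are $T/2$ probes, far more than the $T/(2W)$ special elements. Letting $Y$ be the set answered $1$ in the second half and $P$ the success indicator, no false negatives gives $X\subseteq Y$ and at most $r$ false positives gives $|Y|\le m+r$, so $H(X\mid Y,P{=}1)\le\log\binom{m+r}{m}$, while $H(X\mid P{=}1)\ge\log\bigl((1-\beta)\binom{n}{m}\bigr)$; the chain $|S|\ge H(S)\ge I(X;Y\mid P)\ge \Pr[P{=}1]\bigl(H(X\mid P{=}1)-H(X\mid P{=}1,Y)\bigr)$ then yields exactly $(1-\beta)\bigl(\log(1-\beta)+m\log\tfrac{n}{m+r}\bigr)$, which is the stated bound. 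In short: the information per special element must be $\log\tfrac{n}{m+r}\approx\log W$, obtained from the subset structure and the $\binom{m+r}{m}$ decoder bound, not from per-block symbols decoded by single probes; without this your argument proves only a bound of order $T/(2W)$ bits.
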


We provide a proof sketch here, 
and a full proof is deferred to Appendix \ref{s.appprooflb}. 
Our lower bound proceeds by describing a random process defining a stream distribution that is hard for the problems of $\bf{blocklist}_{flip}(W)$ and $\bf{blocklist}_{occ}(W)$. For the element universe $\cU = [T/2]$, the main idea is to define a distribution of problem instances as follows. First, generate  $X \subseteq \cU$ as a uniformly random set of size $T/2W$. Then define a stream $x(X)$ where for the first $T/2$ timesteps, each element $u \in X$ appears in $W$ updates, alternating $W/2$ times between one insertion and one deletion of $u$. This results in $W$ flippancy and $W$ \occur\ for all elements of $X$ at the end of the first half of the stream. In the second $T/2$ timesteps of the stream,  all elements in $\cU$ are inserted once. Thus the correct output for this stream for both the $\bf{blocklist}_{flip}(W)$ and $\bf{blocklist}_{occ}(W)$ problems is to always output 0 in the first half, and to output 1 in the second half only for elements in $X$.

We then use an information theory argument to show a space lower bound for any algorithm $\cA_{flip}$ that satisfies the conditions of no false negatives and at most $r$ false positives, with probability at least $1-\beta$ over this distribution of problem instances.

\subsection{Upper bound for \occur\ blocklisting}\label{s.upperbound}

We now present an upper bound (Corollary \ref{corol:blocklist-ub}) based on Algorithm~\ref{alg:large-universe-sampling} for the space required to solve the \occur\ blocklisting problem with $r$ false positives.

\begin{restatable}{corollary}{blocklistub}\label{corol:blocklist-ub}
    With probability $1-2\beta$ and when $|\cU|=poly(T)$, Algorithm~\ref{alg:large-universe-sampling} with $ob$=false reports no false negatives and $ r= 2T^{1/3} \log (T^{1/3} \lceil \log(T) \rceil /\beta)$ false positives for the problem $\bf{blocklist}_{occ}(W)$  for $W=T^{2/3}$, while using space $O(T^{1/3} \cdot \text{polylog}(T/\beta)) \cdot \text{poly}(\frac{1}{\rho \eta}))$. 
\end{restatable}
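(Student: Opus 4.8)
The plan is to instantiate Algorithm~\ref{alg:large-universe-sampling} with the flag $ob = $ false --- in which case it internally fixes the \occur\ bound $W = T^{2/3}$ and, on each arrival of a not-yet-blocklisted element on its assigned substream, adds that element to the blocklist $\cB$ independently with probability $p = \log(T^{1/3}L/\beta)/T^{2/3}$ where $L = \lceil\log T\rceil$ (\cref{line:blocklist}) --- and to read the $\mathbf{blocklist}_{occ}(T^{2/3})$ output at time $t$ off the blocklist as $o_t = \mathbf{1}[x_t \in \cB]$ (or a suitable refinement). Three claims must be established with total failure probability $2\beta$: (i) no false negatives; (ii) at most $r = 2T^{1/3}\log(T^{1/3}\lceil\log T\rceil/\beta)$ false positives; and (iii) the claimed space bound. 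Item (iii) is immediate: it is exactly the $ob = $ false case of \cref{thm:space}, which holds with probability $1-\beta$. So the content is in (i) and (ii).

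For (i), I would argue as follows. Suppose $o^*_t = 1$, i.e., $x_t = u$ has $\occurm(u,(x_1,\dots,x_{t-1})) \ge W$, so $u$ has arrived at least $W$ times before $t$. Each such arrival on which $g(u) \ne \perp$ and $u \notin \cB$ is a fresh Bernoulli($p$) chance to put $u$ into $\cB$, so conditioned on $g(u)\ne\perp$ we get $\Pr[u \notin \cB \text{ after } W \text{ arrivals}] \le (1-p)^W \le e^{-pW} = \beta/(T^{1/3}L)$ by the choice of $p$; and $\Pr[g(u) = \perp] = 2^{-L} \le 1/T$. Since the stream has length $T$ and $W = T^{2/3}$, at most $T/W = T^{1/3}$ distinct elements have \occur\ at least $W$, so a union bound over these elements bounds the probability that any one of them lies outside $\cB$ when its prefix-\occur\ first reaches $W$ by $T^{1/3}(\beta/(T^{1/3}L) + 1/T) \le \beta$. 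On the complementary event, $o_t = 1$ at every timestep with $o^*_t = 1$, hence no false negatives.

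For (ii), a false positive can occur at time $t$ only if $x_t$ has already been placed in $\cB$, so it suffices to control the blocklist. The number of additions to $\cB$ is stochastically dominated by $\mathrm{Binomial}(T,p)$ (at most one Bernoulli($p$) trial per arrival), with mean $Tp = T^{1/3}\log(T^{1/3}L/\beta)$; a Chernoff bound --- this is the content of \cref{lem:blocklist_size} --- shows that with probability $1-\beta$ this quantity is at most $r = 2Tp$. I would then charge each false-positive timestep to the addition event responsible for the membership of $x_t$ in $\cB$, and argue from the reported output rule that each such event is charged at most once (equivalently, that a blocklisted element produces at most one reported false positive), giving at most $r$ false positives. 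A union bound of this event with the event in (i), together with (iii), yields overall failure probability $2\beta$, and combining all three claims proves the corollary.

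The routine parts are (i), a geometric-tail plus union-bound estimate, and (iii), inherited verbatim from \cref{thm:space}. The step I expect to require the most care is the charging argument in (ii): it hinges on pinning down exactly which timesteps Algorithm~\ref{alg:large-universe-sampling} is taken to report as blocklisted for the $\mathbf{blocklist}_{occ}$ problem, and on showing that the sampling rule together with the per-element arrival pattern cannot convert a single blocklist addition into many false positives. Getting this accounting right --- so that the number of false-positive timesteps is genuinely governed by the size bound of \cref{lem:blocklist_size} rather than by the raw number of arrivals of blocklisted elements --- is where the real work of the proof lies.
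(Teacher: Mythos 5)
Your items (i) and (iii) are fine and essentially coincide with the paper: the space bound is quoted verbatim from \cref{thm:space}, and the no-false-negatives argument is the geometric-tail-plus-union-bound computation that the paper packages as \cref{lem:privacy_occ_bound} (your extra care about the $g(u)=\perp$ case is a reasonable refinement that the paper glosses over). The problem is item (ii), and it is exactly the step you flagged but did not resolve. Under the output rule you adopt, $o_t=\mathbf{1}[x_t\in\cB]$ --- which is the only rule compatible with your no-false-negatives argument, since that argument needs $o_t=1$ at \emph{every} later arrival of a high-\occur\ element, not just at the arrival on which it was sampled --- the charging claim ``each blocklist addition is charged at most once'' is simply false. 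An element that is sampled into $\cB$ on, say, its first arrival and then reappears many times while its prefix \occur\ is still below $W=T^{2/3}$ produces a false positive at every one of those arrivals: a single addition can generate up to $W-1=\Omega(T^{2/3})$ false-positive timesteps. Consequently the number of false positives is not governed by the blocklist-size bound of \cref{lem:blocklist_size}, and your plan of ``blocklist size $\le 2Tp$ w.h.p., one false positive per addition'' does not yield $r=2T^{1/3}\log(T^{1/3}\lceil\log T\rceil/\beta)$. This is a missing idea, not a routine detail.

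The paper's proof takes a different route for this part: it defines, for each timestep $i$, an indicator $X_i$ of a false positive at that step, bounds $\Pr[X_i=1]\le p$ directly (in effect accounting a false positive against the premature sampling event at that step rather than against membership accumulated from earlier steps), and then applies the multiplicative Chernoff bound (\cref{thm:mult_chernoff}, with $\eta=1$) to $X=\sum_{i=1}^T X_i$, whose mean is at most $Tp=T^{1/3}\log(T^{1/3}L/\beta)$, to conclude $X\le 2Tp=r$ with probability at least $1-\beta/2$; this is then union-bounded with the no-false-negatives event and the space event to get $1-2\beta$. Note that \cref{lem:blocklist_size} (with $\eta=2$, giving the $3Tp$ bound) is used for the accuracy analysis, not here. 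If you want to repair your write-up, you must either adopt the paper's per-timestep accounting and justify the bound $\Pr[X_i=1]\le p$ under a precisely stated output semantics, or change the output/charging rule and show it still has no false negatives --- as written, your proposal establishes (i) and (iii) but not the false-positive bound that is the heart of the corollary.
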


Note that plugging $ r= 2T^{1/3} \log (T^{1/3} \lceil \log(T) \rceil /\beta)$ and $W=T^{2/3}$ into the lower bound of \cref{thm:lb} gives an near-matching space lower bound that is tight up to log factors. 

The proofs for no false negatives and the space complexity follow from \cref{lem:privacy_occ_bound} in \cref{app.exactcountproof}, which gives a high probability bound on the maximum \occur\ of the stream after blocklisting, and from \cref{thm:space}, which bounds the space used by Algorithm~\ref{alg:large-universe-sampling}. The proof for the bounded number of false positives follows from a concentration bound the probability of blocklisting an element too early. The full proof is deferred to Appendix \ref{s.appproofub}.

\section{Conclusions}
In this paper we designed the first space-efficient differentially private algorithms for the count distinct element problem in the turnstile model. This result addresses an open question of~\cite{JainKRSS23}, showing that it is possible to design a low memory DP algorithm for this problem in the turnstile setting. While we show that any algorithm that uses blocklisting techniques cannot do any better in terms of space, an interesting open question is to prove unconditional space bounds for any DP continual release algorithm addressing the problem (regardless of the techniques used). Currently the theoretical understanding of space lower bounds in the DP streaming setting is very limited. Only recently~\cite{DinurSWZ23} provided the first space DP lower bound for any problem, under cryptographic assumptions; any future progress in this direction would be interesting.

\bibliographystyle{alpha}
\bibliography{references}

\newcommand{\etalchar}[1]{$^{#1}$}
\begin{thebibliography}{MMNW11}

\bibitem[AMS96]{alon1996space}
Noga Alon, Yossi Matias, and Mario Szegedy.
\newblock The space complexity of approximating the frequency moments.
\newblock In {\em Proceedings of the 28th ACM Symposium on Theory of
  Computing}, STOC `96, pages 20--29, 1996.

\bibitem[AMS99]{AlonMS99}
Noga Alon, Yossi Matias, and Mario Szegedy.
\newblock The space complexity of approximating the frequency moments.
\newblock {\em Journal of Computer and System Sciences}, 58(1):137--147, 1999.

\bibitem[BBDS12]{BlockiBDS12}
Jeremiah Blocki, Avrim Blum, Anupam Datta, and Or~Sheffet.
\newblock The {J}ohnson-{L}indenstrauss {T}ransform {I}tself {P}reserves
  {D}ifferential {P}rivacy.
\newblock In {\em 53rd Annual {IEEE} Symposium on Foundations of Computer
  Science}, FOCS `12, pages 410--419, 2012.

\bibitem[BFM{\etalchar{+}}13]{BolotFMNT13}
Jean Bolot, Nadia Fawaz, S.~Muthukrishnan, Aleksandar Nikolov, and Nina Taft.
\newblock Private decayed predicate sums on streams.
\newblock In {\em Proceedings of the 16th International Conference on Database
  Theory}, ICDT `13, pages 284--295, 2013.

\bibitem[BGMZ23]{BlockiGMZ23}
Jeremiah Blocki, Elena Grigorescu, Tamalika Mukherjee, and Samson Zhou.
\newblock How to make your approximation algorithm private: {A} black-box
  differentially-private transformation for tunable approximation algorithms of
  functions with low lensitivity.
\newblock In {\em Approximation, Randomization, and Combinatorial Optimization.
  Algorithms and Techniques, {APPROX/RANDOM} 2023}, volume 275 of {\em LIPIcs},
  pages 59:1--59:24, 2023.

\bibitem[BMWZ23]{BravermanMWZ23}
Vladimir Braverman, Joel Manning, Zhiwei~Steven Wu, and Samson Zhou.
\newblock {P}rivate {D}ata {S}tream {A}nalysis for {U}niversal {S}ymmetric
  {N}orm {E}stimation.
\newblock In {\em Approximation, Randomization, and Combinatorial Optimization.
  Algorithms and Techniques, {APPROX/RANDOM} 2023}, volume 275 of {\em LIPIcs},
  pages 45:1--45:24, 2023.

\bibitem[BR94]{BR94}
M.~Bellare and J.~Rompel.
\newblock Randomness-efficient oblivious sampling.
\newblock In {\em Proceedings of the 35th Annual Symposium on Foundations of
  Computer Science}, FOCS `94, pages 276--287, 1994.

\bibitem[BS16]{BunS16}
Mark Bun and Thomas Steinke.
\newblock Concentrated differential privacy: simplifications, extensions, and
  lower bounds.
\newblock In {\em Proceedings of the 14th International Conference on Theory of
  Cryptography}, TCC `16, pages 635--658, 2016.

\bibitem[Cha12]{chakrabarti-streaming}
Amit Chakrabarti.
\newblock Lecture notes on data stream algorithms, 2012.
\newblock Available at
  \url{https://www.cs.dartmouth.edu/~ac/Teach/data-streams-lecnotes.pdf}.

\bibitem[CM05]{cormode2005improved}
Graham Cormode and Shan Muthukrishnan.
\newblock An improved data stream summary: The count-min sketch and its
  applications.
\newblock {\em Journal of Algorithms}, 55(1):58--75, 2005.

\bibitem[COP03]{charikar2003better}
Moses Charikar, Liadan O'Callaghan, and Rina Panigrahy.
\newblock Better streaming algorithms for clustering problems.
\newblock In {\em Proceedings of the 35th Annual ACM Symposium on Theory of
  Computing}, STOC `03, pages 30--39, 2003.

\bibitem[CSS11]{ChanSS2011}
T.-H.~Hubert Chan, Elaine Shi, and Dawn Song.
\newblock Private and continual release of statistics.
\newblock {\em ACM Transactions on Information and System Security}, 14(3),
  2011.

\bibitem[DF03]{durand2003loglog}
Marianne Durand and Philippe Flajolet.
\newblock Loglog counting of large cardinalities.
\newblock In {\em Proceedings of the European Symposium on Algorithms}, ESA
  `03, pages 605--617, 2003.

\bibitem[DGIM02]{datar2002maintaining}
Mayur Datar, Aristides Gionis, Piotr Indyk, and Rajeev Motwani.
\newblock Maintaining stream statistics over sliding windows.
\newblock {\em SIAM Journal on Computing}, 31(6):1794--1813, 2002.

\bibitem[DLB19]{DesfontainesLB19}
Damien Desfontaines, Andreas Lochbihler, and David~A. Basin.
\newblock Cardinality estimators do not preserve privacy.
\newblock In {\em Proceedings on Privacy Enhancing Technologies}, volume 2019
  of {\em PETS `19}, 2019.

\bibitem[DMNS06]{dwork2006calibrating}
Cynthia Dwork, Frank McSherry, Kobbi Nissim, and Adam Smith.
\newblock Calibrating noise to sensitivity in private data analysis.
\newblock In {\em Proceedings of 3rd Theory of Cryptography Conference}, TCC
  '06, pages 265--284, 2006.

\bibitem[DMP{\etalchar{+}}24]{mcmahan2024efficient}
Krishnamurthy Dvijotham, H~Brendan McMahan, Krishna Pillutla, Thomas Steinke,
  and Abhradeep Thakurta.
\newblock Efficient and near-optimal noise generation for streaming
  differential privacy.
\newblock In {\em 65th IEEE Symposium on Foundations of Computer Science}, FOCS
  `24, 2024.

\bibitem[DNPR10]{DworkNPR10}
Cynthia Dwork, Moni Naor, Toniann Pitassi, and Guy~N. Rothblum.
\newblock Differential privacy under continual observation.
\newblock In {\em Proceedings of the 42nd {ACM} Symposium on Theory of
  Computing}, STOC `10, pages 715--724, 2010.

\bibitem[DR14]{dwork2014algorithmic}
Cynthia Dwork and Aaron Roth.
\newblock The algorithmic foundations of differential privacy.
\newblock {\em Foundations and Trends in Theoretical Computer Science},
  9(3--4):211--407, 2014.

\bibitem[DSWZ23]{DinurSWZ23}
Itai Dinur, Uri Stemmer, David~P. Woodruff, and Samson Zhou.
\newblock On differential privacy and adaptive data analysis with bounded
  space.
\newblock In {\em Proceedings of the 42nd Annual International Conference on
  the Theory and Applications of Cryptographic Techniques}, EUROCRYPT `23,
  pages 35--65, 2023.

\bibitem[DTT22]{DickensTT22}
Charlie Dickens, Justin Thaler, and Daniel Ting.
\newblock Order-invariant cardinality estimators are differentially private.
\newblock In {\em Advances in Neural Information Processing Systems 35: Annual
  Conference on Neural Information Processing Systems}, NeurIPS `22, 2022.

\bibitem[EMM{\etalchar{+}}23]{EpastoMMMVZ23}
Alessandro Epasto, Jieming Mao, Andres~Mu{\~{n}}oz Medina, Vahab Mirrokni,
  Sergei Vassilvitskii, and Peilin Zhong.
\newblock Differentially private continual releases of streaming frequency
  moment estimations.
\newblock In {\em 14th Innovations in Theoretical Computer Science Conference},
  volume 251 of {\em ITCS `23}, pages 48:1--48:24, 2023.

\bibitem[FFGM12]{flajolet2007hyperloglog}
Philippe Flajolet, {\'E}ric Fusy, Olivier Gandouet, and Fr{\'e}d{\'e}ric
  Meunier.
\newblock Hyperloglog: the analysis of a near-optimal cardinality estimation
  algorithm.
\newblock {\em Discrete Mathematics \& Theoretical Computer Science}, 2012.

\bibitem[FHO21]{FichtenbergerHO21}
Hendrik Fichtenberger, Monika Henzinger, and Lara Ost.
\newblock Differentially private algorithms for graphs under continual
  observation.
\newblock In {\em 29th Annual European Symposium on Algorithms, {ESA}}, volume
  204 of {\em LIPIcs}, pages 42:1--42:16. Schloss Dagstuhl - Leibniz-Zentrum
  f{\"{u}}r Informatik, 2021.

\bibitem[FM85]{flajolet1985probabilistic}
Philippe Flajolet and G~Nigel Martin.
\newblock Probabilistic counting algorithms for data base applications.
\newblock {\em Journal of computer and system sciences}, 31(2):182--209, 1985.

\bibitem[Gan07]{Ganguly07}
Sumit Ganguly.
\newblock Counting distinct items over update streams.
\newblock {\em Theoretical Computer Science}, 378(3):211--222, 2007.

\bibitem[GKNM23]{Ghazi0NM23}
Badih Ghazi, Ravi Kumar, Jelani Nelson, and Pasin Manurangsi.
\newblock Private counting of distinct and k-occurring items in time windows.
\newblock In {\em 14th Innovations in Theoretical Computer Science Conference},
  ITCS `23, pages 55:1--55:24, 2023.

\bibitem[HP19]{HuangP19}
Zengfeng Huang and Pan Peng.
\newblock Dynamic graph stream algorithms in o(n) space.
\newblock {\em Algorithmica}, 81(5):1965--1987, 2019.

\bibitem[HSS23]{HenzingerSS23}
Monika Henzinger, A.~R. Sricharan, and Teresa~Anna Steiner.
\newblock Differentially private data structures under continual observation
  for histograms and related queries, 2023.
\newblock arXiv pre-print 2302.11341.

\bibitem[HTC23]{HehirTC23}
Jonathan Hehir, Daniel Ting, and Graham Cormode.
\newblock Sketch-{F}lip-{M}erge: {M}ergeable sketches for private distinct
  counting.
\newblock In {\em Proceedings of the International Conference on Machine
  Learning}, 2023.

\bibitem[JKR{\etalchar{+}}23]{JainKRSS23}
Palak Jain, Iden Kalemaj, Sofya Raskhodnikova, Satchit Sivakumar, and Adam~D.
  Smith.
\newblock Counting distinct elements in the turnstile model with differential
  privacy under continual observation.
\newblock In {\em Advances in Neural Information Processing Systems 36: Annual
  Conference on Neural Information Processing Systems}, NeurIPS `23, 2023.

\bibitem[MG82]{misra1982finding}
Jayadev Misra and David Gries.
\newblock Finding repeated elements.
\newblock {\em Science of Computer Programming}, 2(2):143--152, 1982.

\bibitem[MMNW11]{MirMNW11}
Darakhshan~J. Mir, S.~Muthukrishnan, Aleksandar Nikolov, and Rebecca~N. Wright.
\newblock {P}an-private algorithms via statistics on sketches.
\newblock In {\em Proceedings of the 30th {ACM} {SIGMOD-SIGACT-SIGART}
  Symposium on Principles of Database Systems}, PODS `11, pages 37--48, 2011.

\bibitem[Mor78]{morris1978counting}
Robert Morris.
\newblock Counting large numbers of events in small registers.
\newblock {\em Communications of the ACM}, 21(10):840--842, 1978.

\bibitem[MU17]{mitzenmacher2017probability}
Michael Mitzenmacher and Eli Upfal.
\newblock {\em Probability and computing: Randomization and probabilistic
  techniques in algorithms and data analysis}.
\newblock Cambridge university press, 2017.

\bibitem[Mut05]{muthukrishnan2005data}
Shanmugavelayutham Muthukrishnan.
\newblock Data streams: Algorithms and applications.
\newblock {\em Foundations and Trends in Theoretical Computer Science},
  1(2):117--236, 2005.

\bibitem[MV18]{mcgregor2018simple}
Andrew McGregor and Sofya Vorotnikova.
\newblock A simple, space-efficient, streaming algorithm for matchings in low
  arboricity graphs.
\newblock In {\em Proceedings of 1st Symposium on Simplicity in Algorithms},
  SOSA `18, 2018.

\bibitem[PS21]{PaghMS2020}
Rasmus Pagh and Nina~Mesing Stausholm.
\newblock Efficient differentially private {F}\({}_{\mbox{0}}\) linear
  sketching.
\newblock In {\em Proceedings of the 24th International Conference on Database
  Theory}, ICDT `21, pages 18:1--18:19, 2021.

\bibitem[SNY17]{StanojevicNY17}
Rade Stanojevic, Mohamed Nabeel, and Ting Yu.
\newblock Distributed cardinality estimation of set operations with
  differential privacy.
\newblock In {\em Proceedings of the {IEEE} Symposium on Privacy-Aware
  Computing}, PAC `17, pages 37--48, 2017.

\bibitem[SST20]{Smith0T20}
Adam~D. Smith, Shuang Song, and Abhradeep Thakurta.
\newblock The {F}lajolet-{M}artin sketch itself preserves differential privacy:
  {P}rivate counting with minimal space.
\newblock In {\em Advances in Neural Information Processing Systems 33},
  NeurIPS `20, pages 19561--19572, 2020.

\bibitem[TS13]{TS13}
Abhradeep~Guha Thakurta and Adam Smith.
\newblock ({N}early) optimal algorithms for private online learning in
  full-information and bandit settings.
\newblock In {\em Advances in Neural Information Processing Systems 26},
  volume~26 of {\em NeurIPS `13}, 2013.

\bibitem[Woo04]{woodruff2004optimal}
David~P Woodruff.
\newblock Optimal space lower bounds for all frequency moments.
\newblock In {\em Proceedings of the 15th annual ACM-SIAM Symposium on Discrete
  Algorithms}, SODA `04, pages 167--175, 2004.

\bibitem[WPS22]{WangPS22}
Lun Wang, Iosif Pinelis, and Dawn Song.
\newblock Differentially private fractional frequency moments estimation with
  polylogarithmic space.
\newblock In {\em Proceedings of the 10th International Conference on Learning
  Representations}, ICLR `22, 2022.

\end{thebibliography}

\clearpage
\appendix

\section{Additional Tools}

\subsection{Concentration bounds} We provide some basic concentration inequalities that will be used in our analyses. 

\begin{lemma}[\cite{BR94}]\label{lem:hash}
Let $\lambda \geq 4$ be an even integer. Let $X$ be the sum of $n$ $\lambda$-wise independent random variables which take values in $[0,1]$. Let $\mu = E[X] $ and $A > 0$. Then,
$$\Pr [|X - \mu| > A] \leq 8 \left( \frac{\lambda \mu + \lambda^2}{A^2}\right)^{\lambda / 2}.$$
\end{lemma}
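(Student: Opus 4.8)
The plan is to prove this by the moment method. Since $\lambda$ is an even integer, $(X-\mu)^\lambda$ is a nonnegative random variable, so Markov's inequality gives $\Pr[|X-\mu| > A] = \Pr[(X-\mu)^\lambda > A^\lambda] \le \mathbb{E}[(X-\mu)^\lambda]/A^\lambda$. As $A^\lambda = (A^2)^{\lambda/2}$, it suffices to establish the central moment bound $\mathbb{E}[(X-\mu)^\lambda] \le 8(\lambda\mu + \lambda^2)^{\lambda/2}$, after which the claimed inequality follows immediately.

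To bound the $\lambda$-th central moment, write $Y_i = X_i - p_i$ where $p_i = \mathbb{E}[X_i]$, so that $\mathbb{E}[Y_i] = 0$, $|Y_i| \le 1$, and $\mathbb{E}[Y_i^2] = \mathrm{Var}(X_i) \le \mathbb{E}[X_i] - p_i^2 \le p_i$ (using $X_i^2 \le X_i$). Expanding $(X-\mu)^\lambda = \big(\sum_i Y_i\big)^\lambda$ by the multinomial theorem, every resulting monomial $\prod_i Y_i^{k_i}$ with $\sum_i k_i = \lambda$ involves at most $\lambda$ distinct indices, so its expectation is determined by the joint distribution of at most $\lambda$ of the $X_i$'s and hence, by $\lambda$-wise independence, factorizes exactly as in the fully independent case. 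Any monomial in which some index appears with multiplicity exactly $1$ contributes $0$, because the factor $\mathbb{E}[Y_i] = 0$ splits off. Hence the only surviving monomials are those in which every participating index has multiplicity $\ge 2$; in particular each such monomial has at most $\lambda/2$ active indices.

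Next I would bound the surviving terms. For $k \ge 2$ we have $|\mathbb{E}[Y_i^k]| \le \mathbb{E}[|Y_i|^k] \le \mathbb{E}[Y_i^2] \le p_i$ since $|Y_i| \le 1$. Grouping the surviving monomials by the set $S$ of active indices and letting $j = |S| \in \{1,\dots,\lambda/2\}$, the total contribution is at most $\sum_{j=1}^{\lambda/2} \big(\sum_{|S|=j}\prod_{i\in S} p_i\big)\, N(\lambda,j)$, where $N(\lambda,j) = \sum_{k_1,\dots,k_j \ge 2,\ \sum_i k_i = \lambda} \binom{\lambda}{k_1,\dots,k_j}$ is the weighted count of ways to distribute $\lambda$ labeled positions among $j$ boxes with at least two positions per box. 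Using the elementary symmetric polynomial bound $\sum_{|S|=j}\prod_{i\in S}p_i \le \big(\sum_i p_i\big)^j/j! = \mu^j/j!$, the whole estimate reduces to the purely combinatorial inequality $\mathbb{E}[(X-\mu)^\lambda] \le \sum_{j=1}^{\lambda/2} \frac{\mu^j}{j!}\, N(\lambda,j)$.

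The final — and main — step is to bound $N(\lambda,j)$ and resum. Substituting $k_i = 2 + m_i$ and using $(2+m)! \ge 2\cdot m!$ gives $N(\lambda,j) \le \frac{\lambda!}{2^j(\lambda-2j)!}\, j^{\lambda-2j}$, which equals $\lambda!/2^{\lambda/2}$ (the number of perfect matchings of $[\lambda]$) at $j=\lambda/2$. Plugging this in and comparing the resulting series term-by-term with the binomial expansion $(\lambda\mu+\lambda^2)^{\lambda/2} = \sum_{j=0}^{\lambda/2}\binom{\lambda/2}{j}(\lambda\mu)^j\lambda^{\lambda-2j}$, via Stirling-type estimates such as $(j/e)^j \le j! \le j^j$, yields $\mathbb{E}[(X-\mu)^\lambda] \le 8(\lambda\mu+\lambda^2)^{\lambda/2}$. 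This last bookkeeping — where the constant $8$ and the precise combination $\lambda\mu+\lambda^2$ emerge — is the technically delicate part and is precisely the computation in \cite{BR94}; alternatively one may simply invoke their moment lemma as a black box.
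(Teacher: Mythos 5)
This lemma is imported by the paper directly from \cite{BR94}; the paper gives no proof of it, so there is nothing internal to compare against. Your outline is essentially the same moment-method argument as in the cited source, and the steps you do carry out are sound: Markov's inequality applied to the even moment $(X-\mu)^\lambda$, the observation that every monomial in the multinomial expansion touches at most $\lambda$ indices and hence factorizes exactly as in the fully independent case, the vanishing of monomials containing a singleton index, the bounds $|\mathbb{E}[Y_i^k]|\le \mathbb{E}[Y_i^2]\le p_i$ for $k\ge 2$, the symmetric-function bound $\sum_{|S|=j}\prod_{i\in S}p_i\le \mu^j/j!$, and the count $N(\lambda,j)\le \lambda!\,j^{\lambda-2j}/\bigl(2^j(\lambda-2j)!\bigr)$ are all correct. (One small slip: $\lambda!/2^{\lambda/2}$ counts ordered sequences of $\lambda/2$ pairs, not perfect matchings of $[\lambda]$, which number $\lambda!/\bigl(2^{\lambda/2}(\lambda/2)!\bigr)$; this is harmless since the missing $(\lambda/2)!$ is supplied by the $\mu^j/j!$ factor.)

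The substantive caveat is the last step: the inequality $\sum_{j\le\lambda/2}\frac{\mu^j}{j!}\cdot\frac{\lambda!\,j^{\lambda-2j}}{2^j(\lambda-2j)!}\le 8(\lambda\mu+\lambda^2)^{\lambda/2}$ is exactly where the constant $8$ and the specific combination $\lambda\mu+\lambda^2$ arise, and you assert it via unspecified ``Stirling-type estimates'' rather than proving it. The comparison is plausible (for instance at $j=\lambda/2$ it reduces to $(\lambda-1)!!\le\lambda^{\lambda/2}$, which holds), but intermediate values of $j$ need genuine care, and your own fallback is to invoke the moment lemma of \cite{BR94} as a black box --- at which point nothing beyond the citation has been proved. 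So as a self-contained proof there is a gap precisely at the step carrying all the quantitative content; as a treatment of the statement, citing \cite{BR94} is exactly what the paper does, and your sketch faithfully reflects how that reference proves it.
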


\begin{theorem}[Multiplicative Chernoff Bound \cite{mitzenmacher2017probability}]\label{thm:mult_chernoff}
    Let $X = \sum_{i = 1}^n X_i$ where each $X_i$ is a Bernoulli variable which takes value $1$ with probability $p_i$ and value $0$
    with probability $1-p_i$. Let $\mu = \E[X] = \sum_{i = 1}^n p_i$. Then, 
    \begin{enumerate}
        \item Upper Tail: $\Pr[X \geq (1+\eta) \cdot \mu] \leq \exp\left(-\frac{\eta^2\mu}{2 + \eta}\right)$ for all $\eta > 0$;
        \item Lower Tail: $\Pr[X \leq (1-\eta) \cdot \mu] \leq \exp\left(-\frac{\eta^2\mu}{3}\right)$ for all $0 < \eta < 1$.
    \end{enumerate}
\end{theorem}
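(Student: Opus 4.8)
The plan is to establish both tail bounds by the standard exponential-moment (Chernoff) method, and then to reduce the two resulting closed-form bounds to the stated expressions by elementary calculus.

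For the \textbf{upper tail}, I would fix a parameter $s > 0$ and apply Markov's inequality to the nonnegative random variable $e^{sX}$: $\Pr[X \ge (1+\eta)\mu] = \Pr[e^{sX} \ge e^{s(1+\eta)\mu}] \le e^{-s(1+\eta)\mu}\,\E[e^{sX}]$. Independence of the $X_i$ gives $\E[e^{sX}] = \prod_{i=1}^{n} \E[e^{sX_i}] = \prod_{i=1}^{n}\bigl(1 + p_i(e^s - 1)\bigr)$, and the inequality $1 + z \le e^z$ applied with $z = p_i(e^s-1) \ge 0$ yields $\E[e^{sX}] \le \exp\bigl((e^s-1)\sum_i p_i\bigr) = e^{\mu(e^s-1)}$. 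Hence $\Pr[X \ge (1+\eta)\mu] \le \exp\bigl(\mu(e^s - 1 - s(1+\eta))\bigr)$, and choosing the minimizer $s = \ln(1+\eta) > 0$ gives the familiar bound $\Pr[X \ge (1+\eta)\mu] \le \bigl(e^{\eta}/(1+\eta)^{1+\eta}\bigr)^{\mu}$. It then remains to check the scalar inequality $e^{\eta}/(1+\eta)^{1+\eta} \le e^{-\eta^2/(2+\eta)}$, equivalently $g(\eta) := (1+\eta)\ln(1+\eta) - \eta - \tfrac{\eta^2}{2+\eta} \ge 0$ for all $\eta > 0$. I would verify this by noting $g(0) = 0$, $g'(0) = 0$, and $g''(\eta) = \tfrac{1}{1+\eta} - \tfrac{8}{(2+\eta)^3} \ge 0$ since $(2+\eta)^3 - 8(1+\eta) = 4\eta + 6\eta^2 + \eta^3 \ge 0$; thus $g' \ge 0$ on $[0,\infty)$ and hence $g \ge 0$.

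The \textbf{lower tail} is symmetric. I would apply Markov's inequality to $e^{-sX}$ for $s > 0$, use independence and $1 + z \le e^z$ to obtain $\E[e^{-sX}] \le e^{\mu(e^{-s}-1)}$, hence $\Pr[X \le (1-\eta)\mu] \le \exp\bigl(\mu(e^{-s} - 1 + s(1-\eta))\bigr)$, and optimize with $s = -\ln(1-\eta) > 0$ to get $\Pr[X \le (1-\eta)\mu] \le \bigl(e^{-\eta}/(1-\eta)^{1-\eta}\bigr)^{\mu}$. To finish I would prove $(1-\eta)\ln(1-\eta) \ge -\eta + \tfrac{\eta^2}{3}$ for $\eta \in (0,1)$; the cleanest route is the power series $(1-\eta)\ln(1-\eta) = -\eta + \sum_{k \ge 2} \tfrac{\eta^k}{k(k-1)}$, whose leading term $\tfrac{\eta^2}{2}$ already exceeds $\tfrac{\eta^2}{3}$ and whose remaining terms are nonnegative on $(0,1)$.

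The entire argument is textbook; the only mildly fussy parts are the two final scalar inequalities. I expect the upper-tail inequality $g(\eta) \ge 0$ to be the main (still routine) obstacle, since it does not collapse to a single transparent power-series comparison the way the lower-tail one does and instead needs the short monotonicity / second-derivative argument sketched above.
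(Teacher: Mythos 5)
Your proof is correct: the exponential-moment (Chernoff) derivation, the optimization at $s=\ln(1+\eta)$ and $s=-\ln(1-\eta)$, and both closing scalar inequalities (the second-derivative argument giving $g''(\eta)=\frac{1}{1+\eta}-\frac{8}{(2+\eta)^3}\ge 0$, and the power-series bound $(1-\eta)\ln(1-\eta)=-\eta+\sum_{k\ge 2}\frac{\eta^k}{k(k-1)}\ge -\eta+\frac{\eta^2}{2}$) all check out. The paper does not prove this statement but imports it from the cited textbook, whose argument is exactly this standard one, so your route matches the intended proof; the only implicit ingredient worth making explicit is the independence of the $X_i$, which the theorem statement omits but your moment-factorization step (and the cited result) requires.
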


\begin{lemma}[Chernoff Bound of Gaussian Random Variable]\label{lem:chernoff}  For $X \sim \N(0, \sigma^2)$, $\Pr(|X| > t) \leq 2\exp (-t^2 / 2\sigma^2) $.
\end{lemma}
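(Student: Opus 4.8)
The statement is the standard sub-Gaussian tail bound: for $X \sim \N(0,\sigma^2)$ and any $t > 0$, $\Pr(|X| > t) \leq 2\exp(-t^2/2\sigma^2)$. The plan is to prove the one-sided bound $\Pr(X > t) \leq \exp(-t^2/2\sigma^2)$ via the exponential moment (Chernoff) method, and then double it using symmetry of the centered Gaussian to cover $\Pr(|X|>t) = \Pr(X>t) + \Pr(X<-t) = 2\Pr(X>t)$.

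For the one-sided bound, first I would recall the moment generating function of a centered Gaussian: $\E[e^{\lambda X}] = e^{\lambda^2 \sigma^2/2}$ for all $\lambda \in \bbR$ (this is a routine Gaussian integral, completing the square in the exponent). Then, for any $\lambda > 0$, Markov's inequality applied to the nonnegative random variable $e^{\lambda X}$ gives
\[
\Pr(X > t) = \Pr(e^{\lambda X} > e^{\lambda t}) \leq e^{-\lambda t}\,\E[e^{\lambda X}] = \exp\left(\tfrac{\lambda^2 \sigma^2}{2} - \lambda t\right).
\]
Next I would optimize the exponent over $\lambda > 0$: the quadratic $\tfrac{\lambda^2\sigma^2}{2} - \lambda t$ is minimized at $\lambda = t/\sigma^2$, yielding exponent $-t^2/(2\sigma^2)$. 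Plugging this choice of $\lambda$ back in gives $\Pr(X > t) \leq \exp(-t^2/2\sigma^2)$.

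Finally, since $-X$ has the same $\N(0,\sigma^2)$ distribution as $X$, we get $\Pr(X < -t) = \Pr(-X > t) = \Pr(X > t) \leq \exp(-t^2/2\sigma^2)$, and combining the two tails by a union bound yields $\Pr(|X| > t) \leq 2\exp(-t^2/2\sigma^2)$, as claimed. There is no real obstacle here — the only mild subtlety is ensuring $\lambda = t/\sigma^2 > 0$ so that the Markov step is valid, which holds since $t > 0$; this is a textbook computation and is included only for self-containedness of the appendix.
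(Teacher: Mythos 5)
Your proof is correct: the standard Chernoff/MGF argument (bound $\E[e^{\lambda X}] = e^{\lambda^2\sigma^2/2}$, apply Markov's inequality, optimize at $\lambda = t/\sigma^2$, then double by symmetry of the centered Gaussian) establishes exactly the stated bound. The paper treats this lemma as a standard fact and gives no proof of its own, so your textbook derivation is precisely the intended justification and there is nothing to compare beyond that.
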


\subsection{Information theory basics}
\label{sec:info}
We provide some basic information theory definitions and facts that are used in Section \ref{sec:lb}. In this paper, we use $\log$ to refer to the base $2$ logarithm.

\begin{definition}
The \emph{entropy} of a random variable $X$, denoted by $H(X)$, is defined as $H(X) = \sum_x \Pr[X = x] \log(1 / \Pr[X = x])$. 
\end{definition}

\begin{definition}
The \emph{conditional entropy} of random variable $X$ conditioned on random variable $Y$ is defined as $H(X|Y) = \mathbb{E}_y[H(X|Y = y)] = \sum_y \Pr[Y=y] \cdot H(X|Y=y)$. 
\end{definition}
\begin{definition}
\label{def:muinfo}
The \emph{mutual information} between two random variables $X$ and $Y$ is defined as $I(X;Y) = H(X) - H(X|Y) = H(Y) - H(Y|X)$. 
\end{definition}

\begin{definition}
The \emph{conditional mutual information} between $X$ and $Y$ given $Z$ is defined as $I(X;Y|Z) =  \mathbb{E}_z[I(X; Y|Z = z)]$
\end{definition}

\begin{fact}
\label{fact:infofact}
Let $X,Y,Z$ be three random variables.
\begin{enumerate}
\item $H(X|Y) \geq H(X|Y,Z)$.
\item $H(X) \leq \log | \text{supp}(X) |$.
\item $I(X;Y|Z) \leq H(X|Z)$.
\item Data processing inequality: for a deterministic function $f(X)$, $I(X;Y|Z) \geq I(f(X);Y|Z)$.
\item $I(X;Y|Z) \geq 0$.
\end{enumerate}
\end{fact}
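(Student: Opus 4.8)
The plan is to derive all five inequalities of the final Fact from two elementary ingredients only: Jensen's inequality applied to the strictly concave logarithm, and the chain rule for entropy/mutual information. I would establish parts (5) and (2) first from Jensen directly, then bootstrap parts (1) and (3) from them, and finally obtain the data processing inequality in part (4) from the chain rule.

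For part (5), I would fix a value $z$, write $-I(X;Y\mid Z=z) = \sum_{x,y}\Pr[X=x,Y=y\mid Z=z]\,\log\frac{\Pr[X=x\mid Z=z]\Pr[Y=y\mid Z=z]}{\Pr[X=x,Y=y\mid Z=z]}$ with the sum restricted to the support of the conditional joint law, and apply Jensen to pull the $\log$ outside the expectation: this upper bounds the expression by $\log\sum_{x,y}\Pr[X=x\mid Z=z]\Pr[Y=y\mid Z=z]=\log 1 = 0$. Averaging over $z$ (per the definition of conditional mutual information, cf.\ Definition~\ref{def:muinfo}) gives $I(X;Y\mid Z)\ge 0$. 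Part (2) is the same trick with no conditioning: $H(X)=\mathbb{E}[\log(1/\Pr[X])]\le\log\mathbb{E}[1/\Pr[X]]=\log|\mathrm{supp}(X)|$.

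Next, part (1) follows from the identity $H(X\mid Y)-H(X\mid Y,Z)=I(X;Z\mid Y)$, which is just the definition of conditional mutual information combined with the chain rule for conditional entropy; the right side is $\ge 0$ by part (5), so $H(X\mid Y)\ge H(X\mid Y,Z)$. Part (3) follows from $I(X;Y\mid Z)=H(X\mid Z)-H(X\mid Y,Z)$ together with the observation that a conditional entropy of a discrete variable is an average of ordinary entropies, each of which is non-negative; hence $H(X\mid Y,Z)\ge 0$ and $I(X;Y\mid Z)\le H(X\mid Z)$.

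Finally, for the data processing inequality in part (4), I would expand $I(X,f(X);Y\mid Z)$ via the chain rule in two ways: $I(X,f(X);Y\mid Z)=I(X;Y\mid Z)+I(f(X);Y\mid X,Z)=I(f(X);Y\mid Z)+I(X;Y\mid f(X),Z)$. Since $f$ is deterministic, $H(f(X)\mid X,Z)=0$, so by part (3) applied to $f(X)$ we get $I(f(X);Y\mid X,Z)\le H(f(X)\mid X,Z)=0$, forcing that term to vanish. Rearranging and discarding the non-negative term $I(X;Y\mid f(X),Z)\ge 0$ (part (5)) yields $I(X;Y\mid Z)=I(f(X);Y\mid Z)+I(X;Y\mid f(X),Z)\ge I(f(X);Y\mid Z)$. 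There is no genuine obstacle here — these are textbook facts — and the only care needed is bookkeeping with the conditioning variable $Z$ (everything is an average over its values) and with zero-probability atoms (restrict every sum to the support); the single spot where a hypothesis is actually used is the determinism of $f$ in the chain-rule step of part (4).
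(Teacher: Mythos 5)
Your derivation is correct, and there is nothing in the paper to compare it against: the paper states Fact~\ref{fact:infofact} without proof, as standard information-theoretic facts (they are used only as black boxes in the lower-bound argument of Theorem~\ref{thm:lb}). Your route --- Jensen for non-negativity of (conditional) mutual information and the support bound on entropy, the identity $H(X\mid Y)-H(X\mid Y,Z)=I(X;Z\mid Y)$ for part (1), $I(X;Y\mid Z)=H(X\mid Z)-H(X\mid Y,Z)$ with $H(X\mid Y,Z)\ge 0$ for part (3), and the two-way chain-rule expansion of $I(X,f(X);Y\mid Z)$ for part (4) --- is the standard textbook argument and is sound, including the observation that determinism of $f$ enters only through $H(f(X)\mid X,Z)=0$. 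One small imprecision in part (5): after Jensen, the sum $\sum_{x,y}\Pr[X=x\mid Z=z]\Pr[Y=y\mid Z=z]$ taken over the support of the conditional \emph{joint} law is in general at most $1$ rather than exactly $1$ (the joint support can be a strict subset of the product of the marginal supports), so the final step should read $\le \log 1=0$; this only strengthens the bound and does not affect the conclusion.
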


\section{Additional details on \textsf{KSET}}\label{app:kset}
We describe the \textsf{TESTSINGLETON} data structure (Algorithm~\ref{alg:ts}) which is a building block of the \textsf{KSET} data structure in more detail.

\begin{algorithm}[!ht]
\caption{TEST-SINGLETON data structure }
\label{alg:ts}
\begin{algorithmic}[1]
\REQUIRE{Input stream $x_1, x_2, \ldots, x_T$}
\STATE{Initialize $m \rightarrow 0, U \rightarrow 0, V \rightarrow 0$}
\STATE{\textbf{TSUPDATE$(x_t)$:}} 
\IF{$x_t$ is an insertion of item $i$}
\STATE{$m \leftarrow m + 1, U \rightarrow U + i , V \rightarrow V + i^2$}
\ELSIF{$x_t$ is a deletion}
\STATE{$m \leftarrow m - 1, U \rightarrow U - i , V \rightarrow V - i^2$}
\ENDIF
\STATE{\textbf{TSCARD():}}
\IF{$m = 0$}
\STATE{Return EMPTY}
\ELSIF{$U^2  = m \cdot V$}
\STATE{Return (SINGLETON, $U/m$, $m$)}
\ELSE
\STATE{Return COLLISION}
\ENDIF
\end{algorithmic}
\end{algorithm}

The \textsf{TESTSINGLETON} data structure supports the following operations:
\begin{enumerate}
    \item An update operation, \textsf{TSUPDATE}$(x_t)$, which updates  three counters --- $m_{TS}$, $U$, and $V$ (all initialized to zero) preserving the following invariants throughout the stream:
    \begin{align*}
        m_{TS}=\sum_{a \in \cU} f_a,\; \; U=\sum_{a \in \cU} f_a\cdot a,\;\; V=\sum_{a \in \cU} f_a\cdot a^2.
    \end{align*}
    More precisely, for an non-empty update $x_t$ corresponding to data item $a$, \textsf{TSUPDATE}$(x_t)$ performs the following update:
    \begin{align*}
        m_{TS}&:= m_{TS}+1,\; \; U=U+a,\; \; V=V+a^2 \; \; \\
        &\textrm{(for an addition)}\\
        m_{TS}&:= m_{TS}-1,\; \; U=U-a,\; \; V=V-a^2 \; \; \\
        &\textrm{(for a deletion)}.
    \end{align*}
    \item A check operation, \textsf{TSCARD}$()$, which determines whether the \textsf{TESTSINGLETON} data structure: (1) is empty, (2) contains a single element, or (3) has more than a single element. The function returns, in each case respectively: (1) EMPTY (this happens if $m_{TS}=0$); (2) the triplet SINGLETON, the element, and its frequency (this happens if $U^2=m_{TS}\cdot V)$; or (3) COLLISION (if the last two checks fail). It is easy to see that the unique item returned (in the SINGLETON case) has identity $\frac{U}{m_{TS}}$ and has frequency $m_{TS}$.
\end{enumerate}

\section{Omitted Proofs from Section \ref{sec:analysis-cd}}\label{app.proofs}

For ease of analysis, we define the streams produced after applying the hash function and the blocklisting procedure (in the case when $ob=false$) as follows. These streams will be used as intermediate steps in the analysis of \textsf{CountDistinct} to separately reason about the hashing and blocklisting procedures, and their impact on sensitivity of the resulting streams.

\begin{definition}\label{def:hash-stream}
    Define $\cS_{i,g}$ as the substream of $x$ after applying hash function $g$. That is, let $a$ be the item contained in the update $x_t$. Then $\cS_{i,g}[t]=x_t$ if $g(a)=i$ and $\cS_{i,g}[t]=\bot$ otherwise.  
\end{definition}

\begin{definition}\label{def:bl-stream}
    Define $\cS_{i,B}$ as the stream of updates produced from $\cS_{i,g}$
     after checking whether the item corresponding to the update $x_t$ is in the blocklist $\cB$ before time $t$ or not. That is, if the item corresponding to $x_t$ is in $\cB$ before time $t$, then $\cS_{i,B}[t]=\bot$, otherwise $\cS_{i,B}[t]=\cS_{i,g}[t]$. 
\end{definition}

\subsection{Proof of Lemma \ref{lem:new_privacy} and Helper Lemma}\label{app.newprivacyproof}

\ksetexactmatch*

\begin{proof}
We start with the case when $ob=false$, which is the more involved case. 
Consider the randomness of \textsf{COUNTING-KSET}$_i$ and \textsf{COUNTING-DICT}$_i$. Observe that because of the fixed randomness of both hashing and blocklisting, the resulting streams $\cS_{i,B}$ (see \cref{def:bl-stream}) that are respectively used to update the \textsf{KSET} (in the case of \textsf{COUNTING-KSET}$_i$) and the \textsf{DICT} data structure (in the case of \textsf{COUNTING-DICT}$_i$) are identical.

Next, define the randomness of \textsf{COUNTING-KSET}$_i$ as $\cR_{KC_i} = \cR_{KS_i}\times \cR_{BM_i}$ where $\cR_{KS_i}$ denotes the randomness from the \textsf{KSET} (\cref{alg:kset}) and $\cR_{BM_i}$ denotes the randomness from \textsf{BinaryMechanism-CD} (\cref{alg:bm}).
On the other hand, the only randomness in \textsf{COUNTING-DICT}$_i$ is due to the randomness of \textsf{BinaryMechanism-CD}, i.e., $\cR_{EC_i} = \cR_{BM_i}$. We emphasize that because the randomness from adding items to the blocklist has been fixed across \textsf{CountDistinct} and \textsf{CountDistinct'}, the blocklist $\cB$ passed to both \textsf{COUNTING-KSET}$_i$ and \textsf{COUNTING-DICT}$_i$ are identical. 

We now want to argue that the outputs of \textsf{COUNTING-KSET}$_i$ and \textsf{COUNTING-DICT}$_i$ are the same except with probability $\beta$. Let the \textsf{BinaryMechanism-CD} instance in \textsf{COUNTING-KSET}$_i$ be denoted as \textsf{BinaryMechanism-CD}$_{KC_i}$ and the \textsf{BinaryMechanism-CD} instance in \textsf{COUNTING-DICT}$_i$ as \textsf{BinaryMechanism-CD}$_{EC_i}$, and fix the randomness used in \textsf{BinaryMechanism-CD}$_{KC_i}$ and \textsf{BinaryMechanism-CD}$_{EC_i}$. 

We claim that there are two bad events for which the outputs of \textsf{COUNTING-KSET}$_i$ and \textsf{COUNTING-DICT}$_i$ may differ. 

\begin{itemize}
  \item \textbf{E}$_1$: There exists a timestep where the true count $\leq k$ and 
    the \textsf{KSET} outputs NIL. 
   \item \textbf{E}$_2$: There exists a timestep where the true count is $> k$ and the noisy count of \textsf{COUNTING-DICT}$_i \leq k$.   
\end{itemize}

By setting $k \geq \tau + O\left( \frac{\text{polylog}  (T/\beta) \sqrt{W} }{\sqrt{\rho}} \right)$, we argue that both events happen with probability at most $\beta$. 

For the event \textbf{E}$_1$, the probability of the \textsf{KSET} outputting NIL happens with probability at most $\beta/2TL$ for one timestep. This is because, in Line~\ref{line:kset-count-init} of \textsf{COUNTING-KSET}$_i$, we set $k$ as the capacity of the \textsf{KSET} and $\beta/2TL$ as the failure probability. So by \cref{it:kset-s} in \cref{lem:k-set-acc} and union bound over all timesteps, this event happens with probability at most $\beta/2L$. 

For the event \textbf{E}$_2$, \cref{lem:e2-occur} (below) bounds the probability of $\textbf{E}_2$ as $\beta/2L$ over all timesteps for our choice of $k$.

Now, conditioned on bad events \textbf{E}$_1$ and \textbf{E}$_2$ \emph{not} occurring over all timesteps, we argue that the outputs of \textsf{COUNTING-KSET}$_i$ and \textsf{COUNTING-DICT}$_i$ are identical. Let $t_1$ be the \emph{first} timestep that the true count is $>k$. Let $t_2 > t_1$ be the next timestep that the true count is $\leq k$. We next consider the outputs of the two algorithms by cases across timesteps.
\begin{itemize}
    \item \textbf{Case 1: $1 \leq t < t_1$.} Conditioned on event \textbf{E}$_1$ not occurring, the \textsf{KSET} does not output NIL during this time epoch, which means that the inputs to \textsf{BinaryMechanism-CD}$_{KC_i}$ and \textsf{BinaryMechanism-CD}$_{EC_i}$ are identical, since both \textsf{BinaryMechanism-CD}$_{KC_i}$ and \textsf{BinaryMechanism-CD}$_{EC_i}$ will be updated with only the update from the previous timestep. In this case, the resulting noisy outputs will be the same under the fixed randomness, and the output of \textsf{COUNTING-KSET}$_i$ and \textsf{COUNTING-DICT}$_i$ after the thresholding step (comparison to $\tau$) is identical. 

\item \textbf{Case 2: $t_1 \leq  t < t_2 $.} 
For timesteps in this epoch, both \textsf{COUNTING-KSET}$_i$ and \textsf{COUNTING-DICT}$_i$ will output TOO-HIGH. Since the true count is $>k$ for all $t_{1} \leq t < t_{2} $, and conditioned on \textbf{E}$_2$ not occurring, the output of \textsf{COUNTING-DICT}$_i$ must be TOO-HIGH. Also by \cref{it:kset-full} in \cref{lem:k-set-acc}, the \textsf{KSET} outputs NIL for this time period with probability $1$, which means that the output of \textsf{COUNTING-KSET}$_i$ is also TOO-HIGH. 
 
\item \textbf{Case 3: $t = t_2$.} 
Next we argue about the output of \textsf{COUNTING-DICT}$_i$ and \textsf{COUNTING-KSET}$_i$ at timestep $t_2$ when the true count $\leq k$.  Conditioning on event \textbf{E}$_1$ not occurring, the \textsf{KSET} does not output NIL at timestep $t_2$ because the true count $\leq k$. Moreover, observe that \textsf{BinaryMechanism-CD}$_{KC_i}$ is not updated over $t_1< t <t_2$ and is only updated at timestep $t_2$ because the \textsf{KSET} does not output NIL. Also, by construction of \textsf{COUNTING-KSET}$_i$, we claim that \textsf{BinaryMechanism-CD}$_{KC_i}$ is fed a sequence of inputs ($+1,-1,0$) at timestep $t_{2}$ that result in the same sum and the same length as in \textsf{BinaryMechanism-CD}$_{EC_i}$ over $t_{2} - t_{1}$ timesteps. This is because by definition, $\textsf{diff}$ is the difference in the number of distinct elements between times $t_1-1$ and $t_2$, and $\vert \textsf{diff} \vert \leq t_{\textsf{diff}}$, as there can only be $\leq t_{\textsf{diff}}$ many distinct elements added (or removed) over $t_2-t_1$.  Since the length and sum of the sequence of inputs to both \textsf{BinaryMechanism-CD}$_{KC_i}$ and \textsf{BinaryMechanism-CD}$_{EC_i}$ at timestep $t_{2}$ is the same, the outputs of both \textsf{COUNTING-KSET}$_i$ and \textsf{COUNTING-DICT}$_i$ are the same at timestep $t_{2}$ under the fixed randomness. 
\end{itemize}
   
   This argument can be extended over all timesteps by iteratively considering the \emph{next} timestep when the true count is $>k$ and the following timestep when the true count is $\leq k$. Thus when $ob=true$, except with probability $\beta/L$ corresponding to the events \textbf{E}$_1$ and \textbf{E}$_2$ occurring, \textsf{COUNTING-KSET}$_i$ and \textsf{COUNTING-DICT}$_i$ will produce identical outputs at each timestep. That is, the distributions of \textsf{COUNTING-KSET}$_i$ and \textsf{COUNTING-DICT}$_i$, denoted $K$ and $E$ respectively, will agree on all outcomes except a subset of probability mass $\beta/L$, which implies that that $d_{TV}(K,E)\leq \beta/L$. 

   For the case when $ob=true$, the blocklisting step is not needed. Then the fixed randomness between \textsf{CountDistinct} and \textsf{CountDistinct}' means that the randomness of the hash functions of both algorithms will be the same, so the resulting stream $\cS_{i,g}$ (see \cref{def:hash-stream}) that is used to update the \textsf{KSET} (in the case of \textsf{COUNTING-KSET}$_i$) and the \textsf{DICT} data structure (in the case of \textsf{COUNTING-DICT}$_i$) is identical. The rest of the argument follows symmetrically to the case when $ob=false$. 
\end{proof}

\subsubsection{Helper Lemma}

\begin{lemma}\label{lem:e2-occur}
Let \textbf{E}$_{BM}$ be the event that there exists a timestep where the noisy count of \textsf{BinaryMechanism-CD}$_{EC_i}$ is greater than $k$ when the true count is less than $\tau$, or the noisy count of \textsf{BinaryMechanism-CD}$_{EC_i}$ is less than $\tau$ when the true count is greater than $k$. The probability of \textbf{E}$_{BM}$ is at most $\beta/2L$ over all timesteps when $k \geq \tau + 2\sqrt{2} \frac{(\log T + 1)^{3/2}\sqrt{W \log(4T\lceil \log(T) \rceil/\beta)}}{\sqrt{\rho}} $. 
\end{lemma}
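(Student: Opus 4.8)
The plan is to reduce the lemma to a single uniform-deviation statement: control $|\hat s_i(t) - s_i(t)|$ for all $t\in[T]$, where $\hat s_i(t)$ is the value reported by \textsf{BinaryMechanism-CD}$_{EC_i}$ at time $t$ and $s_i(t)$ is the true number of distinct elements in the substream processed by \textsf{COUNTING-DICT}$_i$ at time $t$. The point is that each disjunct defining $\textbf{E}_{BM}$ forces a large deviation: if the reported count exceeds $k$ while $s_i(t)<\tau$, or the reported count is below $\tau$ while $s_i(t)>k$, then in either case $|\hat s_i(t)-s_i(t)|>k-\tau$. Hence it suffices to show that, except with probability $\beta/(2L)$, $|\hat s_i(t)-s_i(t)|\le k-\tau$ simultaneously for all $t$, which follows once $k-\tau$ is at least the noise bound in the hypothesis.

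First I would pin down the correspondence between \textsf{COUNTING-DICT}$_i$ (\cref{alg:exact-count}) and the internal state of \textsf{BinaryMechanism-CD}$_{EC_i}$ (\cref{alg:bm}). Since \textsf{COUNTING-DICT}$_i$ feeds exactly one value into the mechanism at every timestep — either $s_i-F_{i,last}$ or $0$ — the cumulative sum of these inputs telescopes, so the binary-tree partial sum at time $t$ is exactly $s_i(t)$, and the reported value is $\hat s_i(t)=\sum_{j:\,\binjt=1}\hat\alpha_j$ while $s_i(t)=\sum_{j:\,\binjt=1}\alpha_j$. Therefore $\hat s_i(t)-s_i(t)=\sum_{j:\,\binjt=1}(\hat\alpha_j-\alpha_j)$, a sum over the at most $\lceil\log T\rceil+1$ nodes on the root-to-leaf path at time $t$, each term being an independent $\N(0,1/\rho')$ with $\rho'=\rho/(2(W+1)(\log T+1))$.

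Next I would union-bound over the fresh Gaussian draws made during the run. Every timestep triggers exactly one fresh draw $\hat\alpha_i$ (the coordinate of the least significant set bit of $t$), so there are at most $T$ such draws, and every $\hat\alpha_j$ that ever appears in a reported value $\hat s_i(t)$ is among them. By the Gaussian tail bound (\cref{lem:chernoff}), each satisfies $|\hat\alpha_j-\alpha_j|\le\sqrt{(2/\rho')\log(4T\lceil\log T\rceil/\beta)}$ except with probability $\beta/(2T\lceil\log T\rceil)$; a union bound over the $\le T$ draws gives that all hold simultaneously except with probability $\le\beta/(2\lceil\log T\rceil)=\beta/(2L)$. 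Conditioning on this event, the triangle inequality over the $\le\log T+1$ path nodes yields
\[
|\hat s_i(t)-s_i(t)|\;\le\;(\log T+1)\sqrt{\tfrac{2}{\rho'}\log\tfrac{4T\lceil\log T\rceil}{\beta}}\;=\;2(\log T+1)^{3/2}\sqrt{\tfrac{(W+1)\log(4T\lceil\log T\rceil/\beta)}{\rho}},
\]
and using $W+1\le 2W$ this is at most $2\sqrt{2}\,(\log T+1)^{3/2}\sqrt{W\log(4T\lceil\log T\rceil/\beta)}/\sqrt{\rho}$. By hypothesis $k-\tau$ is at least this quantity, so on this probability-$(1-\beta/(2L))$ event we have $|\hat s_i(t)-s_i(t)|\le k-\tau$ for all $t$, and $\textbf{E}_{BM}$ does not occur.

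The main bookkeeping obstacle is keeping the three sources of $(\log T+1)$ factors straight so that they combine to exactly $(\log T+1)^{3/2}$: one factor is hidden in the per-node variance $1/\rho'=\Theta((W+1)(\log T+1)/\rho)$, one factor comes from summing the $\le\log T+1$ node noises along a path, and the square roots run throughout. The other point requiring a little care is verifying that the number of independent noise draws to union-bound over is only $O(T)$ (not $O(T\log T)$), which is exactly what makes the per-draw failure probability small enough to sum to $\beta/(2L)$ within a single substream; the mild subtlety that the $\hat\alpha_j$'s used in $\hat s_i(t)$ were drawn at earlier timesteps is harmless since all of them lie among those $\le T$ draws.
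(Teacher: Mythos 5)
There is a genuine gap, caused by a dropped factor of $L$ that happens to be masked by a lossy step elsewhere. You set the per-node noise variance to $1/\rho'$ with $\rho' = \rho/(2(W+1)(\log T+1))$, but the \textsf{BinaryMechanism-CD}$_i$ instance inside \textsf{COUNTING-DICT}$_i$ is initialized with privacy parameter $\rho/L$ (Line 2 of \cref{alg:exact-count}), so the correct per-node variance is $2L(W+1)(\log T+1)/\rho$ — larger than yours by a factor of $L=\lceil\log T\rceil$; the paper's proof flags this explicitly. Separately, your route bounds the path noise by a per-node high-probability bound plus the triangle inequality over the $\le \log T+1$ path nodes, which pays a factor $m$ rather than the $\sqrt m$ one gets by exploiting that the path noise is itself a single zero-mean Gaussian whose variance is the sum of the independent node variances. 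These two discrepancies are each of size roughly $\sqrt{\log T+1}$ and they cancel numerically, which is why your final expression coincides with the lemma's threshold; but once the variance is corrected, your argument only yields $|\hat s_i(t)-s_i(t)| \le 2(\log T+1)\sqrt{L(W+1)(\log T+1)\log(4T\lceil\log T\rceil/\beta)/\rho} = \Theta\bigl((\log T+1)^2\sqrt{W\log(T\log T/\beta)/\rho}\bigr)$, which exceeds the stated $k-\tau = 2\sqrt{2}\,(\log T+1)^{3/2}\sqrt{W\log(4T\lceil\log T\rceil/\beta)}/\sqrt{\rho}$ by about $\sqrt{\log T+1}$, so the lemma with its stated constant is not established.

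The fix is the paper's aggregation step: at a fixed timestep $t$ the reported error is a sum of at most $m\le\log T+1$ independent Gaussians, hence is distributed as $\N\bigl(0,\,4mW(\log T+1)L/\rho\bigr)$ (using $2(W+1)\le 4W$); apply the Gaussian tail (\cref{lem:chernoff}) directly to this single Gaussian with target failure probability $\beta/(2TL)$ and union over the $T$ timesteps. Solving for the deviation gives $\Delta \ge 2\sqrt2\sqrt{mWL(\log T+1)\log(4TL/\beta)/\rho}$, and bounding $m,L\le\log T+1$ recovers exactly the $(\log T+1)^{3/2}$ threshold in the hypothesis. Your reduction of $\textbf{E}_{BM}$ to the deviation event $|\hat s_i(t)-s_i(t)|>k-\tau$, and the telescoping identification of the tree partial sum with $s_i(t)$, are both correct and match the paper's (implicit) setup; only the noise-magnitude bound needs to be done at the level of the aggregated path Gaussian with the correct $\rho/L$-scaled variance.
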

\begin{proof}
For notational convenience, let $\Delta = k - \tau$. 
From Algorithm~\ref{alg:bm}, we know that the noise that we apply to the true count is a summation of at most $m \leq \log T + 1$ Gaussian random variables, each sampled from $\N(0, 4W(\log T + 1) L / \rho)$. (Recall that the input privacy parameter to the binary mechanism is $\rho/L$). Thus the overall noise added is $N(0, 4mW(\log T + 1) L / \rho)$. 
Now we want to bound the probability that $|\N(0, 4mW(\log T + 1) L / \rho)| > \Delta$, which is an upper bound on the probability of $\textbf{E}_{BM}$ occurring at a single timestep. 

Applying a Chernoff bound (Lemma~\ref{lem:chernoff}) yields:
\begin{align*}
    & \Pr(|\N(0, 4mW(\log T + 1) L / \rho) | > \Delta) \leq  2\exp (-\frac{\Delta^2 \rho}{8mW(\log T+1) L} )
\end{align*}
We wish to bound the above term on the right by $\beta/2TL$, so that then by a union bound, the probability of $\textbf{E}_{BM}$ over all timesteps is bounded by $\beta/2L$. This requires:
\begin{align*}
- \frac{\Delta^2 \rho} {8mWL(\log T+ 1)} &\leq \log (\beta/4TL)\\
\Longleftrightarrow \quad \quad  \Delta  &\geq 2\sqrt{2} \frac{\sqrt{mWL(\log T+1) \log(4TL/\beta)}}{\sqrt{\rho}}
\end{align*}
Recall that our goal is to set the value of $\Delta = k -\tau$ such that the above inequality always holds, and we want to set $\Delta$ to be the upper bound of the right hand side. 
Since $m \leq \log T + 1$  and $L = \lceil \log T \rceil \leq \log T + 1$, 
then, choosing $k \geq t + 2\sqrt{2} \frac{(\log T + 1)^{3/2}\sqrt{W \log(4T\lceil \log(T)\rceil/\beta)}}{\sqrt{\rho}} $ will ensure that $\Delta$ is a valid upper bound, and hence that the probability of $\textbf{E}_{BM}$ over all timesteps is bounded by $\beta / 2L$. 
\end{proof}

%%%%%%%%%%%%%%%%%

\subsection{Proof of Lemma \ref{lem:exact_count_dp} and Helper Lemmas}\label{app.exactcountproof}

\exactcountdp*

\begin{proof}
We will prove the privacy claim for the more general case when $ob=false$. Note that when $ob=true$, we do not need to deal with the failure event associated with blocklisting (\cref{lem:privacy_occ_bound}) and thus $\beta=0$ and \textsf{COUNTING-DICT}$_i$ (\cref{alg:exact-count}) is $\rho/L$-zCDP. 

The key point we must show is that when neighboring streams are input to \textsf{COUNTING-DICT}$_i$, then the internal streams passed to \textsf{BinaryMechanism-CD}$_i$ inside of \textsf{COUNTING-DICT}$_i$ will remain neighboring. Once this is shown, then we can directly apply \cref{lem:bm-ec-dp}, which shows that this instance of \textsf{BinaryMechanism-CD}$_i$ inside \textsf{COUNTING-DICT}$_i$ is differentially private. Thus we must show that even after applying the hashing and blocklisting operations to the original neighboring input streams, the resulting processed streams remain neighboring.

  The randomness of \textsf{CountDistinct'} can be viewed as a joint probability distribution $\cR_{CD'}= \cR_{g} \times \cR_{BL} \times \cR_{EC_1}\times \ldots \times \cR_{EC_L}$ where $\cR_g$ denotes the randomness from picking a hash function $g$ (in Line \ref{line:hash} of \cref{alg:large-universe-sampling}), $\cR_{BL}$ denotes the randomness from blocklisting,  and $\cR_{EC_i}$ denotes the randomness from the subroutine \textsf{COUNTING-DICT}$_i$ for $i\in [L]$.   
   Let $x$ and $x'$ be neighboring streams that differ only at timestep $t^*$, in which the update (either deletion or addition) in $x$ is for item $u$, and in $x'$ is $\bot$, and fix the randomness used in \textsf{CountDistinct}' across runs on $x$ and $x'$.

Let $\cS_{i,g}$ and $\cS'_{i,g}$ be the substreams of $x$ and $x'$ produced from the hash function $g$ (see \cref{def:hash-stream}). Then with the fixed randomness, $\cS_{i,g}$ and $\cS'_{i,g}$ are neighboring. To see this, observe that for all updates except those inserting or deleting $u$, $\cS_{i,g}$ and $\cS'_{i,g}$ are exactly the same. For updates regarding item $u$,  if $u$ is hashed into substream $i$, then $\cS_{i,g}$ and $\cS'_{i,g}$ will differ only in time $t^*$. Otherwise $\cS_{i,g}$ and $\cS'_{i,g}$ will be identical. Thus, $\cS_{i,g}$ and $\cS'_{i,g}$ will be neighboring streams for all $i \in [L]$.

Let $\cS_{i,B}$ and $\cS'_{i,B}$ be the substreams of $x$ and $x'$ produced after blocklisting (see~\cref{def:bl-stream}). Under the fixed randomness, the timesteps at which items are first blocklisted are the same for the neighboring streams, except (possibly) for item $u$. 
Since the updates from substreams $\cS_{i,B}$ and $\cS'_{i,B}$ are stored exactly as-is in \textsf{DICT}, and then fed to \textsf{BM-Count-Distinct} as input, 
the input streams to \textsf{BinaryMechanism-CD}$_i$ are indeed neighboring.

By \cref{lem:bm-ec-dp}, the output $\hat{s}_i$ of \textsf{BinaryMechanism-CD}$_i$ is $\beta/L$-approximate $\rho/L$-zCDP. The remainders of the operations in \textsf{COUNTING-DICT}$_i$ -- including the thresholding step to output either the numerical value of $\hat{s}_i$ or TOO-HIGH -- are simply postprocessing on the private outputs $\hat{s}_i$ of \textsf{BinaryMechanism-CD}$_i$, which will retain the same privacy guarantee.
Thus, \textsf{COUNTING-DICT}$_i$ is $\beta/L$-approximate $\rho/L$-zCDP as well. 
\end{proof}

\subsubsection{Helper Lemmas}

\begin{lemma}\label{lem:bm-ec-dp}
   The \textsf{BinaryMechanism-CD} instance in \textsf{COUNTING-DICT}$_i$ is $\beta/L$-approximate $\rho/L$-zCDP. 
\end{lemma}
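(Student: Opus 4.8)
The plan is to view the entire output of the \textsf{BinaryMechanism-CD}$_i$ instance inside \textsf{COUNTING-DICT}$_i$ as a post-processing of the vector $(\hat\alpha_v)_v$ of noised dyadic partial sums that the binary tree maintains over the course of the stream, to bound the $\ell_2$-sensitivity of the map sending the input stream to the noiseless vector $(\alpha_v)_v$, and then to apply the (vector-valued) Gaussian mechanism (\cref{thm:gaussian}). The only source of the $\beta/L$-approximate slack will be the blocklisting failure event.

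First I would set up the comparison. Fix neighboring streams $x\sim x'$ differing only at timestep $t^*$ --- say $x$ contains an insertion/deletion of item $u$ there while $x'$ contains $\bot$ --- and couple the hashing and blocklisting coins of the two runs identically. Let $y=y(x),\,y'=y(x')\in\{-1,0,1\}^T$ be the streams actually fed into \textsf{BinaryMechanism-CD}$_i$, so that $y_t$ is the change in the number $\mathrm{dc}^{(x)}(t)$ of distinct elements stored in \textsf{DICT}$_i$ between steps $t-1$ and $t$. Under the coupled coins only the bookkeeping of item $u$ can differ between the two runs, so setting $\phi(t):=\mathbf 1[\textsf{DICT}_i^{(x)}[u][t]>0]-\mathbf 1[\textsf{DICT}_i^{(x')}[u][t]>0]\in\{-1,0,1\}$, we have $y_t-y'_t=\phi(t)-\phi(t-1)$, and the partial sum over any dyadic interval $I_v=\{\ell_v,\dots,r_v\}$ telescopes to $\alpha_v(y)=\mathrm{dc}^{(x)}(r_v)-\mathrm{dc}^{(x)}(\ell_v-1)$, whence $\alpha_v(y)-\alpha_v(y')=\phi(r_v)-\phi(\ell_v-1)$. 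By the blocklisting guarantee (\cref{lem:privacy_occ_bound}, see also \cref{lem:bl-id}), with probability at least $1-\beta/L$ item $u$ is processed into \textsf{DICT}$_i$ at most $W$ times in each run, so $\phi$ changes value at most $W+1$ times over $[T]$; call this good event $G$.

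Next I would bound the sensitivity under $G$. The two telescoping identities give immediately (i) $|\alpha_v(y)-\alpha_v(y')|\le 2$ for every node $v$, and (ii) $\alpha_v(y)\ne\alpha_v(y')$ only if $\phi$ changes value inside $I_v$. Since the binary tree over $[T]$ has $\lfloor\log T\rfloor+1$ levels and the nodes at a fixed level partition $[T]$, event $G$ forces at most $W+1$ nodes per level to be affected, so $\sum_v(\alpha_v(y)-\alpha_v(y'))^2\le(\log T+1)(W+1)\cdot 2^2=4(W+1)(\log T+1)$; i.e.\ the $\ell_2$-sensitivity of $y\mapsto(\alpha_v)_v$ is at most $2\sqrt{(W+1)(\log T+1)}$, matching the bound quoted in the proof sketch of \cref{lem:exact_count_dp} and following the analysis of~\cite{JainKRSS23}. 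Recalling from \cref{alg:bm} that, when instantiated with privacy parameter $\rho/L$, the mechanism adds i.i.d.\ $\mathcal N(0,1/\rho')$ noise to each $\alpha_v$ with $\rho'=\frac{\rho/L}{2(W+1)(\log T+1)}$, the Gaussian mechanism (applied coordinatewise) shows that releasing $(\hat\alpha_v)_v$ is $(\rho/L)$-zCDP conditioned on $G$, since the parameters are chosen precisely so that $\tfrac{4(W+1)(\log T+1)}{2/\rho'}=\rho/L$. Finally, every output $\mathcal B(t)=\sum_{j:\,\binjt=1}\hat\alpha_j$ of \textsf{BinaryMechanism-CD}$_i$ is a fixed linear function of $(\hat\alpha_v)_v$, so the whole output stream of this instance is $(\rho/L)$-zCDP conditioned on $G$ by post-processing; since $G$ and its analogue obtained by swapping $x$ and $x'$ each occur with probability at least $1-\beta/L$, taking these as the conditioning events of \cref{def:approx-zcdp} yields that the \textsf{BinaryMechanism-CD}$_i$ instance in \textsf{COUNTING-DICT}$_i$ is $\beta/L$-approximate $\rho/L$-zCDP.

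The main obstacle is proving the two structural facts that sharpen the sensitivity from the naive $O\bigl((W+1)\sqrt{\log T}\bigr)$ down to $2\sqrt{(W+1)(\log T+1)}$: that on coupled coins the two runs differ only in the processing of the single item $u$ (so each affected dyadic partial sum moves by at most $2$, rather than by $\Theta(W)$), and that $\phi$ has at most $W+1$ change points, which in turn rests on the blocklisting occurrency bound (\cref{lem:privacy_occ_bound}/\cref{lem:bl-id}) and a careful coupling of the blocklist evolution across the two runs. A secondary subtlety is routing the blocklisting-failure probability into the $\delta$ term of approximate zCDP correctly: the conditioning events in \cref{def:approx-zcdp} must be attached to the two runs' outputs separately, and one must check that conditioning on $G$ does not inflate the Rényi divergence beyond $\rho/L$.
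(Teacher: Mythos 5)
Your proposal is correct and follows essentially the same route as the paper's proof: you decompose the mechanism into the dyadic node sums (the paper's vectors $G_h$, your $\alpha_v$), bound each node's change by $2$ and the number of changed nodes per level by $W+1$ via the blocklisting lemmas (\cref{lem:privacy_occ_bound}, \cref{lem:bl-id}), obtain $\ell_2$-sensitivity $2\sqrt{(W+1)(\log T+1)}$, apply the Gaussian mechanism with the algorithm's noise scale, and absorb the blocklisting failure event into the $\beta/L$ of approximate zCDP, finishing by post-processing. Your telescoping bookkeeping via $\phi$ and your explicit treatment of the conditioning events in \cref{def:approx-zcdp} are, if anything, more careful renditions of the same steps the paper takes.
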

\begin{proof}
Consider the binary tree produced by \textsf{BM-Count-Distinct}$_i$ with $\log(T)$ levels. We define a vector $G_h$ of length $T/2^h$ for each level $h \in [\log(T)]$ of the binary tree as 
$$G_h[j] = s_i[j \cdot 2^h] -  s_i[(j-1) \cdot 2^h] $$
for all $j \in [T/2^h]$ and $s_i[t]=\sum_{u \in \cU} \mathbf{1}_{\textsf{DICT}_i[u][t]>0} $ as defined in \cref{line:s-ec} of \textsf{COUNTING-DICT}$_i$. Let $G=(G_0,\ldots, G_{\log(T)})$. To prove the claim, we will bound the sensitivity of the counts stored in the binary tree represented by $G$, and then show that sufficient noise is added to each count to satisfy differential privacy. Similar to the original binary tree mechanism of \cite{ChanSS2011,DworkNPR10}, the output of \textsf{BM-Count-Distinct}$_i$ at timestep $t$ can be obtained from $G$ by considering the dyadic decomposition of the interval $(0,t]$ as a sum of the individual nodes composing the interval, and this output will be private by postprocessing.\footnote{\cite{JainKRSS23}  used similar techniques to argue about the sensitivity of their binary tree mechanism. However, their argument is more straightforward as it does not have to consider the randomness from hashing or blocklisting.}

First, we claim that the binary tree described by $G$ plus DP noise (which we will determine) produces the output $\hat{s}_i$. To see this, first observe that $\hat{s_i}[t] = s_i[t] - s_i[t-1] +Z[t]$ where $Z[t]$ is the noise term. Also $G_0[j] = \hat{s}_i[j]-Z[j]$. The claim follows by induction over $h \in [\log(T)]$. 

Let $G$ and $G'$ be the binary tree representation of neighboring streams $x$ and $x'$ respectively. We will show that $\| G-G'\|_2 \leq 2\sqrt{(W+1) (\log(T)+1)}$ with probability $1-\beta/L$. 

Fix $h \in [\log(T)]$ and $j \in [T/2^h]$. For ease of notation, let $j_1 = (j-1)\cdot 2^h$ and $j_2 = j \cdot 2^h$. Then 
\begin{align}
    \vert G_h [j] - G'_h[j] \vert = \vert s_i[j_2] - s_i[j_1] - s'_i[j_2] +s'_i[j_1]\vert \leq 2,
\end{align}
where the inequality is due to the fact that $s_i$ and $s'_i$ can differ by at most $1$ at both timesteps $j_1$ and $j_2$.

Next, observe that for a fixed $h$, the intervals $(j_1,j_2]$ are disjoint, by definition. Also, for $j \in [T/2^h]$, $G_h [j] \neq G'_h [j]$ are different in at most $W+1$ intervals with probability $1-\beta$ (by \cref{lem:bl-id}) where $W=T^{2/3}$. 

Thus, with probability $1-\beta/L$, the (squared) $\ell_2$-sensitivity of $G$ is bounded:
\begin{align} \label{eq:l2-sen}
    \Delta^2_2 \leq \|G-G'\|^2_2 = \sum_{h \in [\log(T)] }\sum_{j \in [T/2^h]}(G_h[j] - G'_h[j])^2 \leq (\log T+1)(W+1)\cdot 2^2 
\end{align}

By \cref{thm:gaussian}, adding Gaussian noise sampled $\cN(0,\sigma^2)$ for $\sigma^2 = \frac{\Delta^2_2 L}{2\rho}$ to each count stored in a node of the binary tree represented by $G$ will satisify $\rho/L$-zCDP with probability $1-\beta/L$. Plugging in the bound on $\Delta^2_2$, it is sufficient to add Gaussian noise with variance $\sigma^2 = \frac{(\log T+1)(W+1)\cdot L}{\rho}$. In Algorithm \ref{alg:exact-count}, the \textsf{BinaryMechanism-CD} subroutine is instantiated with privacy parameter $\rho/L$, which adjusts for the extra factor of $L$.

Finally, since the output of \textsf{BinaryMechanism-CD} can be obtained by postprocessing the noisy nodes of $G$, the output is $\rho/L$-zCDP with probability $1-\beta/L$.
\end{proof}

\begin{lemma}\label{lem:privacy_occ_bound}
Suppose $ob=false$. With probability at least $1-\beta/L$, the maximum \occur\ of the stream produced from the blocklisting procedure (\cref{def:bl-stream}) is bounded by $T^{2/3}$.
\end{lemma}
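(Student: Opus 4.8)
The plan is to bound, for each element $u \in \cU$, the number of occurrences of $u$ that survive the randomized blocklisting, and then to union bound over the (few) elements that occur many times in $x$. Recall from Line~\ref{line:blocklist} of \cref{alg:large-universe-sampling} that whenever a non-$\bot$ update for an element $u$ arrives while $u \notin \cB$, the algorithm adds $u$ to $\cB$ with probability $p = \frac{\log(T^{1/3} L / \beta)}{T^{2/3}}$, using fresh independent randomness, and that once $u \in \cB$ every subsequent update for $u$ is replaced by $\bot$ in $\cS_{i,B}$ (see \cref{def:bl-stream}). Hence $\occurm(u, \cS_{i,B})$ is exactly the index of the occurrence of $u$ at which it is first added to $\cB$ (and equals $\occurm(u,x)$ if $u$ is never blocklisted). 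Since $g(u)$ is fixed, $u$ contributes updates only to the single substream $i = g(u)$, so the maximum occurrency over all of $\cS_{1,B}, \dots, \cS_{L,B}$ equals $\max_{u \in \cU} \occurm(u, \cS_{g(u),B})$, and it suffices to control this quantity.

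First I would fix $u$ and observe that the event $\occurm(u, \cS_{i,B}) > T^{2/3}$ is contained in the event that the first $T^{2/3}$ occurrences of $u$ all fail to trigger blocklisting. As the corresponding coin flips are i.i.d.\ Bernoulli$(p)$, this event has probability at most $(1-p)^{T^{2/3}} \le e^{-p T^{2/3}} \le \beta / (T^{1/3} L)$ by the choice of $p$. Next I would note that only elements with $\occurm(u, x) > T^{2/3}$ can possibly have $\occurm(u, \cS_{i,B}) > T^{2/3}$ (blocklisting only deletes occurrences), and since the stream has length $T$ there are at most $T^{1/3}$ such ``heavy'' elements. A union bound over these at most $T^{1/3}$ elements then gives that, with probability at least $1 - T^{1/3} \cdot \beta/(T^{1/3} L) = 1 - \beta/L$, every element has occurrency at most $T^{2/3}$ in the substream it lands in, which is the claim.

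This is essentially a geometric-tail estimate together with a heavy-hitter counting bound, so I do not anticipate a serious technical obstacle. The one step that needs care is the union bound: a union over all of $\cU$ (whose size is $\poly(T)$) would be hopelessly lossy, so it is essential to restrict to the at most $T^{1/3}$ elements occurring more than $T^{2/3}$ times, and to use that the blocklisting coins attached to distinct occurrences of a fixed element are mutually independent, so that the survival probability is exactly $(1-p)^{T^{2/3}}$ rather than something larger.
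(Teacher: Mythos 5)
Your proof is correct and follows essentially the same route as the paper's: bound the probability that a fixed element survives $T^{2/3}$ independent Bernoulli$(p)$ blocklisting coins by $(1-p)^{T^{2/3}} \le \beta/(T^{1/3}L)$, note that only the at most $T^{1/3}$ elements occurring more than $T^{2/3}$ times in $x$ can violate the bound, and union bound over them. Your extra care about per-substream occurrencies and the exact index at which an element enters $\cB$ is a fine (slightly more detailed) presentation of the same argument.
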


\begin{proof}
Recall that the probability of blocklisting any element after an appearance in the stream is $p = \frac{\log (T^{1/3}L/\beta)}{T^{2/3}}$. 

For any element $x \in \cU$, we can bound the failure probability of the blocklist to catch an element after the maximum number of occurrences:
\begin{align*}
    \Pr[x \notin \cB \text{ after } T^{2/3}  \text{ appearances} ] & = (1 - \frac{\log (T^{1/3}L/\beta)}{T^{2/3}})^{T^{2/3}} \\
   & \leq  e^{-T^{2/3}\cdot \frac{\log(T^{1/3}L/\beta)}{T^{2/3}}}   \\
   & = e^{-\log(T^{1/3}L/\beta)} \\
   & = \frac{\beta}{T^{1/3}L}
\end{align*}
The inequality in the second step comes from the fact that $(1-a) \leq e^{-a}$ for all $a \in \mathbb{R}$. 

At most $T/T^{2/3} = T^{1/3}$ elements can appear $\geq T^{2/3} $ times in a stream of length $T$. Taking a union bound, the probability that \textit{any} of these elements is not blocklisted after $T^{2/3}$ appearances is at most $\frac{\beta}{T^{1/3}L} \cdot {T^{1/3}} = \beta/L $.
\end{proof}

\begin{lemma}\label{lem:bl-id}
    Suppose $ob=false$. Let $x$ and $x'$ be neighboring input streams and fix the randomness of \textsf{CountDistinct'} across runs on $x$ and $x'$. For any $i \in [L]$, 
    $\cS_{i,B}$ and $\cS'_{i,B}$ differ in at most $T^{2/3}+1$ positions with probability $1-\beta/L$. 
\end{lemma}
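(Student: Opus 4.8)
# Proof Plan for Lemma~\ref{lem:bl-id}

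\textbf{Setup and overall strategy.} The plan is to track how a single-timestep difference between $x$ and $x'$ propagates through the hashing step (Definition~\ref{def:hash-stream}) and then the blocklisting step (Definition~\ref{def:bl-stream}). Let $x$ and $x'$ differ only at timestep $t^*$, where $x$ has an update (insertion or deletion) to item $u$ and $x'$ has $\bot$. Since the randomness of \textsf{CountDistinct'} is fixed across the two runs, the hash function $g$ and all blocklist coin flips are coupled identically. First I would argue that $\cS_{i,g}$ and $\cS'_{i,g}$ differ in at most one position: for every update not involving $u$ the two hashed substreams agree, and for updates involving $u$ they agree on which substream $u$ lands in, so the only possible discrepancy is at time $t^*$ (and only when $g(u)=i$). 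This is the easy half and mirrors the argument already made in the proof of \cref{lem:exact_count_dp}.

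\textbf{The blocklisting step.} The subtle part is that even though $\cS_{i,g}$ and $\cS'_{i,g}$ differ in just one position, the \emph{downstream} streams $\cS_{i,B}$ and $\cS'_{i,B}$ can differ in many positions, because the missing occurrence of $u$ at time $t^*$ in the $x'$-run shifts when $u$ gets blocklisted. Concretely, in the $x$-run $u$ has one extra occurrence, so under the coupled coin flips $u$ may be added to $\cB$ one occurrence \emph{earlier} in the $x$-run than in the $x'$-run (or at the same time). Once $u$ is blocklisted in one run but not the other, every subsequent appearance of $u$ (in substream $i$) is suppressed in one stream and passed through in the other, creating a discrepancy. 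The key observation is that \emph{all} such discrepancies are confined to occurrences of the single item $u$ in substream $i$: for every other item, the coin flips and the update sequence are identical across the two runs, so the blocklisting behavior is identical. Therefore the number of positions where $\cS_{i,B}$ and $\cS'_{i,B}$ differ is at most (the number of occurrences of $u$ in substream $i$) $+\,1$, where the $+1$ accounts for the timestep $t^*$ itself.

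\textbf{Bounding the occurrency of $u$.} It remains to bound the number of occurrences of $u$ in substream $i$ after blocklisting. Here I would invoke \cref{lem:privacy_occ_bound}: with probability $1-\beta/L$, the blocklisting procedure guarantees that the max \occur\ of the post-blocklisting stream is bounded by $T^{2/3}$, i.e., no item survives more than $T^{2/3}$ occurrences without being caught. Applying this to item $u$ in substream $i$ — noting that the discrepancy positions are exactly occurrences of $u$ that are \emph{not} suppressed in at least one of the two runs, and the earlier-suppressing run suppresses $u$ after at most $T^{2/3}$ of its occurrences — the number of such positions is at most $T^{2/3}$, and adding the single position $t^*$ gives $T^{2/3}+1$. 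The failure probability $\beta/L$ is inherited directly from \cref{lem:privacy_occ_bound}.

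\textbf{Main obstacle.} The step requiring the most care is making precise the claim that \emph{the only positions of disagreement are occurrences of $u$}, and then correctly counting them against the $T^{2/3}$ \occur\ bound. One has to argue carefully that under the identity coupling of the blocklist randomness, the blocklist $\cB$ evolves identically in both runs up until the first occurrence of $u$ that is flipped by the discrepancy, and that from that point on only $u$-updates can be treated differently — other items are unaffected because their blocklist-insertion times depend only on their own (identical) occurrence history and their own (coupled) coins. Once this structural claim is pinned down, the counting reduces cleanly to the \occur\ bound of \cref{lem:privacy_occ_bound}.
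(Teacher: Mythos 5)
Your plan follows the paper's proof essentially step for step: identity coupling of the hash and blocklist randomness, the observation that all disagreements between $\cS_{i,B}$ and $\cS'_{i,B}$ are confined to occurrences of the single differing item $u$ (plus the timestep $t^*$ itself), and an appeal to \cref{lem:privacy_occ_bound} to cap those occurrences at $T^{2/3}$, yielding $T^{2/3}+1$ with failure probability $\beta/L$.

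One step, as written, does not quite follow and needs the fix that the paper builds in. A position $t \neq t^*$ is a discrepancy only while $u$ is already blocklisted in one run but not yet in the other, so the set of possible discrepancy positions is controlled by the occurrences of $u$ preceding the \emph{later} of the two blocklisting times --- not the earlier one, which is what you invoke (``the earlier-suppressing run suppresses $u$ after at most $T^{2/3}$ of its occurrences''). If only the earlier run were known to blocklist $u$ within $T^{2/3}$ occurrences, the window of disagreement could in principle extend far beyond $T^{2/3}$ occurrences, since it closes only when the second run also blocklists $u$. The repair is exactly the paper's bad event $\textbf{E}_0$: it is defined over the occurrences at timesteps $t \neq t^*$, which are common to both runs and governed by the \emph{shared} coins, so conditioning on its complement guarantees that $u$ is blocklisted within its first $T^{2/3}$ such occurrences in \emph{both} runs simultaneously, and this costs only $\beta/L$ (no union bound over the two runs is needed). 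With that adjustment, your counting --- at most $T^{2/3}$ disagreeing occurrences of $u$ plus the one position $t^*$ --- goes through verbatim.
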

\begin{proof}
Let neighboring streams $x$ and $x'$ differ only at timestep $t^*$, in which the update (either deletion or addition) in $x$ is for item $u$, and in $x'$ is $\bot$, and fix the randomness used in \textsf{CountDistinct}' across runs on $x$ and $x'$. As shown in the proof of \cref{lem:exact_count_dp}, the substreams $\cS_{i,g}$ and $\cS'_{i,g}$ are neighboring  and thus will also differ at timestep $t^*$ with respect to item $u$. Thus under the fixed randomness, for all timesteps $t \neq t^*$, the same items are blocklisted in $\cS_i$ and $\cS'_i$.
 
    Let $\textbf{E}_0$ be the bad event that there exists an item $v$ that appears in the hashed substream at some timestep $t \neq t^*$ and is \emph{not} blocklisted after $T^{2/3}$ occurrences. 
    By \cref{lem:privacy_occ_bound}, the probability of $\textbf{E}_0$ is bounded by $\beta/L$. 
    We will condition on the event $\textbf{E}_0$ not occurring for the remainder of the proof. Note that if the items in  $\cS_{i,g}$ and $\cS'_{i,g}$ do not appear more than $T^{2/3}$ times for all $i \in [L]$, then naturally $\textbf{E}_0$ does not occur.

    Recall that the item $u$ appears in exactly the same timesteps in $\cS_{i,g}$ and $\cS'_{i,g}$ for $t \neq t^*$. Suppose that  the number of appearances of $u$ in those steps is $\geq T^{2/3}$. 
    The resulting blocklisted streams $\cS_{i,B}$ and $\cS'_{i,B}$ can differ in at most $T^{2/3}+1$ timesteps because the item $u$ may be blocklisted before it appears $T^{2/3}$ times, but conditioned on $\textbf{E}_0$ not occurring, it must be blocklisted after the $T^{2/3}$-th appearance. Since $\cS_{i,g}$ has an extra occurrence of $u$ (at timestep $t^*$) relative to $\cS'_{i,g}$, this means that $\cS'_{i,B}$ can differ from $\cS'_{i,B}$ in at most $T^{2/3}+1$ timesteps, after which both $\cS_{i,B}$ and $\cS'_{i,B}$ will have 0's for all future occurrences of item $u$. 
\end{proof}

%%%%%%%%%%%%%%%%%%%%%%%%%

\subsection{Proof of Theorem \ref{thm:acc_exact} (Accuracy) }\label{app:exactacc}

We restate Theorem \ref{thm:acc_exact} below for convenience.

\accuracy*

\begin{proof}[Proof of Theorem \ref{thm:acc_exact}.] 
\textbf{When $ob$ is true:}
The proof relies on the following lemmas, which ensure that for a specific timestep $t$, the good events occur with high probability:

\begin{enumerate}
    \item In all substreams $i \in [L]$, the correct number of distinct elements in the substream $i$ by hashing, denoted $F_i(t)$, 
    is also a good estimator for the number of distinct elements in the entire stream at timestep $t$, denoted $F(t)$ (Lemma~\ref{lem:substream_concentration}). That is, for all $i \in [L]$, the following two conditions hold at the same time for any specific timestep $t$  with probability at least $1-\beta/5$ for any $\eta \in (0,0.5)$:

\begin{enumerate}
     \item  $\forall i \in [L]$ with $F(t) \geq 2^i \cdot  \frac{4\lambda}{\eta^2}$, we have $(1-\eta)\frac{F(t)}{2^i} \leq F_i(t) \leq (1+\eta)\frac{F(t)}{2^i}$ 
        \item$\forall i \in [L]$ with $F(t) < 2^i \cdot  \frac{4\lambda}{\eta^2}$, we have $\frac{F(t)}{2^i} - \frac{4\lambda}{\eta}\leq F_i(t) \leq \frac{F(t)}{2^i} + \frac{4\lambda}{\eta}$.  
\end{enumerate}
    
    \item \textsf{BinaryMechanism-CD} (Algorithm~\ref{alg:bm}) is accurate (Lemma~\ref{lem:bm_acc}). That is, for all $i \in [L]$, we have $|F_i(t) - \hat{s}_i (t)| \leq \gamma =   \sqrt{ \frac{  4  (W+1) (\log T + 1)^3  \log (10 (\log T + 1)/ \beta) }{\rho} } $ with probability $1-\beta/5$.

\item For any stream $i$, if the correct number of distinct elements in the subtream $i$ is below a certain threshold then \textsf{COUNTING-DICT}$_i$ will not output TOO-HIGH  (Lemma~\ref{lem:e2-occur}). Plugging in $\beta/5L$ into Lemma~\ref{lem:e2-occur} yields that if $F_i(t) \leq 16 \max \{  \gamma/\eta, 32\lambda / \eta^2 \}$, then \textsf{COUNTING-DICT}$_i$ will not output TOO-HIGH, i.e. the noisy count $\hat{s}_i \leq \tau = 16 \max \{  \gamma/\eta, 32\lambda / \eta^2 \} +   2\sqrt{2} \frac{(\log T + 1)^{3/2}\sqrt{W \log(20T \lceil \log T \rceil /\beta)}}{\sqrt{\rho}}$, with probability at least $1-\beta/5$.  
\end{enumerate}

To prove the desired accuracy claim, we will condition on all three high-probability events listed above occurring at timestep $t$. Note that each of the three events occur with probability $1-\beta/5$. Thus all three events will happen with probability at least $1-\frac35 \beta \geq 1-\beta$ by a union bound.

We consider two cases for the number of distinct elements of the stream at time $t$ denoted by $F(t)$: (1) $F(t) \geq 8 \max (\gamma / \eta, 32 \lambda / \eta^2)$, for which we show that the resulting approximation satisfies a multiplicative error of $(1\pm 4\eta)$, and (2) $F(t) \leq 8 \max (\gamma / \eta, 32 \lambda / \eta^2)$, for which we show that the resulting approximation has an additive error of $32 \max (\gamma / \eta, 32 \lambda / \eta^2)$. We now separately consider the two cases. 

\textbf{Case (1).} $F(t) \geq 8 \max (\gamma / \eta, 32 \lambda / \eta^2)$.  Let $i^* \in [L]$ be the largest $i$ s.t. $\frac{F(t)}{2^{\istar}} \geq 4 \max (\gamma / \eta, 32 \lambda / \eta^2)$.  
Note that by Lemma~\ref{lem:substream_concentration}, $(1-\eta) \frac{F(t)}{2^{i*}} \leq F_{i^*} (t) \leq (1+\eta) \frac{F(t)}{2^{i^*}}$. Therefore by the definition of $i^*$, $F_{i^*}(t) \leq 2 \frac{F(t)}{2^{i^*}} \leq 2 \cdot 8  \max (\gamma / \eta, 32 \lambda / \eta^2) = 16 \max (\gamma / \eta, 32 \lambda / \eta^2)$. Since in this case the noisy count $\hat{s}_{i^*}$ at timestep $t$ would not exceed $\tau$ by Lemma~\ref{lem:e2-occur}, so \textsf{COUNTING-DICT}$_{i^*}$ will not output TOO-HIGH in the stream $i^*$. Then $\cS_{i^*}[t] = \hat{s}_{i^*} (t) $), and by Lemma~\ref{lem:bm_acc}, \textsf{BinaryMechanism-CD} will be accurate and $\cS_{i^*}[t]  \geq F_{\istar}(t) - \gamma$. 
    Since $F_{\istar} (t) \geq (1-\eta) \frac{F(t)}{2^{i*}} \geq 2 \max (\gamma / \eta, 32 \lambda / \eta^2)$ by Lemma~\ref{lem:substream_concentration} and using the fact that $\eta < 0.5$: 
    \begin{equation}\label{eq.sbound}
        \cS_{i^*}[t]  \geq 2 \max (\gamma / \eta, 32 \lambda / \eta^2) - \gamma \geq \max (\gamma / \eta, 32 \lambda / \eta^2).
    \end{equation}
    
    Above we showed that $F_{i^*}(t) \leq 16 \max (\gamma / \eta, 32 \lambda / \eta^2)$. Then by Lemma~\ref{lem:e2-occur}, the noisy count from $i^*$ will not exceed $\tau = 16 \max \{  \gamma/\eta, 32\lambda / \eta^2 \} +   2\sqrt{2} \frac{(\log T + 1)^{3/2}\sqrt{W \log(20T \lceil \log T \rceil /\beta)}}{\sqrt{\rho}}$, 
    so \textsf{COUNTING-DICT}$_{i^*}$ will not output TOO-HIGH. 
    The only concern now is that the output from $i^*$ may not be output if the noisy count is smaller than $\max (\gamma / \eta, 32 \lambda / \eta^2)$ (see Line 22 of Algorithm~\ref{alg:large-universe-sampling}), but by Inequality \eqref{eq.sbound}, this is impossible because $\cS_{i^*}[t]  \geq \max (\gamma / \eta, 32 \lambda / \eta^2)$. Therefore, Algorithm~\ref{alg:large-universe-sampling} with \textsf{COUNTING-DICT}$_{i}$ as the subroutine will produce a non-zero output, i.e. it will output some $\cS_{i'}[t] \cdot 2^{i'}$ for some $i' \geq i^*$ instead of 0.

We now proceed in two steps: first, we derive a lower bound on the true count \( F_{i'}(t) \) in substream \( i' \); then, we bound the ratio between Algorithm~\ref{alg:large-universe-sampling}'s output \(  \cS_{i'}[t]\cdot 2^{i'} \) and the true count \( F(t) \), using \( F_{i'}(t) \) as an intermediate quantity.

We start with the lower bound on $F_{i'}(t)$:
    \begin{align*}
        F_{i'}(t) & \geq \cS_{i'}[t] - \gamma & \text{by Lemma~\ref{lem:bm_acc}}\\
        & \geq (1 - \eta ) \cS_{i'}[t] & \text{because } \gamma \leq \eta \cS_{i'}[t]\\
        & \geq 16 \lambda / \eta^2 & \text{because } 32 \lambda \leq \eta^2 \cS_{i'}[t]  \text{ and } 0 < \eta < 0.5
    \end{align*}

Next we bound the ratio between $\cS_{i'}[t]\cdot 2^{i'}$ and $F(t)$. According to Lemma~\ref{lem:substream_concentration}, $(1-\eta)\frac{F(t)}{2^{i'}} \leq F_{i'}(t) \leq (1+\eta)\frac{F(t)}{2^{i'}}$. Then,
\begin{align*}
    \cS_{i'}[t]  & \leq  F_{i'} (t) + \gamma & \text{by Lemma~\ref{lem:bm_acc}}\\
   \Longrightarrow \qquad \cS_{i'}[t] & \leq \frac{F_{i'} (t)}{ 1-\eta} & \text{because } \gamma \leq \eta \cS_{i'}[t] \\
    & \leq \frac{1+\eta}{1-\eta} \frac{F(t)}{2^{i'}} & \text{ by Lemma~\ref{lem:substream_concentration}} \\ 
    & \leq (1+4\eta) \frac{F(t)}{2^{i'}} & \text{because } 0 < \eta < 0.5 
\end{align*}
To see the second inequality, we use the fact that $\gamma \leq \eta \cS_{i'}[t]$, and plug this into $\cS_{i'}[t] \leq  F_{i'} (t) + \gamma$ to get $\cS_{i'}[t]  \leq  F_{i'} (t) + \eta \cS_{i'}[t]$. Rearranging gives that $\cS_{i'}[t] \leq \frac{F_{i'} (t) }{1-\eta}$. 
The third inequality is from Lemma~\ref{lem:substream_concentration}, which gives that $F_{i'}(t) \leq (1+\eta)\frac{F(t)}{2^{i'}}$ whenever $F(t) \geq 2^{i'} \cdot  \frac{4\lambda}{\eta^2}$. We now show that because we are in Case 1 where $F(t) \geq 8 \max (\gamma / \eta, 32 \lambda / \eta^2)$, then it must always be the case that $F(t)/2^{i'} \geq  \frac{4\lambda}{\eta^2}$. Assume towards a contradiction that $F(t)/2^{i'} <  4\lambda / \eta^2$. Then by Lemma~\ref{lem:substream_concentration} $F_{i'}(t) \leq F(t)/2^{i'} + 4\lambda / \eta$. By the lower bound above $F_{i'}(t) \geq 16\lambda / \eta^2$. Combining these gives that,
\[
16\lambda / \eta^2 \leq F_{i'}(t) \leq F(t)/2^{i'} + 4\lambda / \eta < 4\lambda / \eta^2 + 4\lambda / \eta \leq 8\lambda / \eta^2,
\]
where the second to last step is from the assumption that $F(t)/2^{i'} <  4\lambda / \eta^2$ and the last step is because $\eta \in (0, 0.5)$. Clearly this is a contraction, so it must be that $F(t)/2^{i'} \geq  \frac{4\lambda}{\eta^2}$.

By symmetric arguments, 
\[
    \cS_{i'}[t]   \geq F_{i'}(t) - \gamma \geq \frac{F_{i'} (t)}{1+\eta}  \geq \frac{1-\eta}{1+\eta} \frac{F(t)}{2^{i'}} \geq (1-4\eta ) \frac{F(t)}{2^{i'}}.
\]
Thus in the case when $F(t) > 8 \max (\gamma / \eta, 32 \lambda / \eta^2)$, \cref{alg:large-universe-sampling} produces a numerical output $\cS_{i'}[t]\cdot 2^{i'}$ for some $i'>0$, which will achieve a multiplicative error of $1 \pm 4 \eta$ with respect to the true count $F(t)$.

\textbf{Case (2).} $F(t) \leq 8 \max (\gamma / \eta, 32 \lambda / \eta^2)$, in which case again $F_i(t) \leq 16  \max (\gamma / \eta, 32 \lambda / \eta^2)$ for all $i$, so the noisy count would not exceed $\tau$ by Lemma~\ref{lem:e2-occur}, so \textsf{COUNTING-DICT}$_{i}$ will not output TOO-HIGH. Then, Algorithm~\ref{alg:large-universe-sampling} either outputs 0, which will result in additive error at most $8  \max (\gamma / \eta, 32 \lambda / \eta^2) $), or it outputs $\cS_{i'}[t] \cdot 2^{i'}$ for some $i'$, such that $\cS_{i'}[t]  \geq \max (\gamma / \eta, 32 \lambda / \eta^2)$.  In this latter case, by the same argument as in Case (1),
\begin{align*}
    F_{i'}(t)  \geq \cS_{i'}[t] - \gamma \geq (1-\eta) \cS_{i'}[t] \geq 16\lambda / \eta^2.
\end{align*}

 Having established a lower bound for $F_{i'}(t)$, we now turn to derive an upper bound for the algorithm's output $\cS_{i'}[t] \cdot 2^{i'}$. 
By Lemma~\ref{lem:substream_concentration}, we know that if $\frac{F(t)}{2^{i'}} \geq \frac{4\lambda}{\eta^2}$ then $F_{i'}(t) \leq 2 \frac{F(t)}{2^{i'}}$, and otherwise, if $\frac{F(t)}{2^{i'}} < \frac{4\lambda}{\eta^2}$, then $16 \lambda / \eta \leq 16\lambda / \eta^2 \leq F_{i'}(t) \leq \frac{F(t)}{2^{i'}} + \frac{4\lambda}{\eta}$ which implies that $\frac{F(t)}{2^{i'}} \geq 12 \lambda / \eta$. Thus in this case $F_{i'}(t) \leq \frac{F(t)}{2^{i'}} + 4\lambda / \eta \leq 2\frac{F(t)}{2^{i'}}$ as well, since $F(t) / 2^{i'} \geq 12\lambda / \eta$.
Hence under both conditions, 
\begin{align*}
    \cS_{i'}[t]  \cdot 2^{i'} & \leq   (F_{i'} (t) + \gamma) \cdot 2^{i'} & \text{by Lemma~\ref{lem:bm_acc}}\\
    & \leq \frac{F_{i'}(t)}{1-\eta} \cdot 2^{i'} &\text{because } \gamma \leq \eta \cS_{i'}[t] \\
    & \leq 2 \cdot 2 \cdot \frac{F(t)}{2^{i'}} \cdot 2^{i'} & \text{ because } F_{i'}(t) \leq 2 \frac{F(t)}{2^{i'}} \text{ and } 0 < \eta < 1/2\\
    & = 4F(t) &\\
    & \leq 32 \max (\gamma / \eta, 32 \lambda / \eta^2)& \text{by Case 2 condition}
\end{align*}

Therefore, in this case Algorithm~\ref{alg:large-universe-sampling} achieves an additive error of at most $32  \max (\gamma / \eta, 32 \lambda / \eta^2)$.

\paragraph{When $ob$ is false:} 
This proof is very similar to the case where $ob$ is true. The main difference is that we need one extra lemma about the size of the blocklist. There is also an additional error caused by the blocklist, which can be treated as part of the error from the binary mechanism. The majority of the analysis stays the same; the only difference is that we will have a larger error in this case because of the blocklist. 

We will use the three same key lemmas as in the case where $ob$=true, and an additional lemma bounding the size of the blocklist. These four lemmas will ensure that at a fixed timestep $t$, the desired good events will happen with high probability:

\begin{enumerate}
   \item In all substreams $i \in [L]$, the correct number of distinct elements in the substream $i$ by hashing, denoted $F_i(t)$, 
    is also a good estimator for the number of distinct elements in the entire stream at timestep $t$, denoted $F(t)$ (Lemma~\ref{lem:substream_concentration}). That is, for all $i \in [L]$, the following two conditions hold at the same time for any specific timestep $t$  with probability at least $1-\beta/5$ for any $\eta \in (0,0.5)$:

\begin{enumerate}
     \item  $\forall i \in [L]$ with $F(t) \geq 2^i \cdot  \frac{4\lambda}{\eta^2}$, we have $(1-\eta)\frac{F(t)}{2^i} \leq F_i(t) \leq (1+\eta)\frac{F(t)}{2^i}$ 
        \item$\forall i \in [L]$ with $F(t) < 2^i \cdot  \frac{4\lambda}{\eta^2}$, we have $\frac{F(t)}{2^i} - \frac{4\lambda}{\eta}\leq F_i(t) \leq \frac{F(t)}{2^i} + \frac{4\lambda}{\eta}$.  
\end{enumerate}
    
    \item \textsf{BinaryMechanism-CD} (Algorithm~\ref{alg:bm}) is accurate (Lemma~\ref{lem:bm_acc}). That is,         for all $i \in [L]$, we have $|F_i(t) - \hat{s}_i (t)| \leq \gamma =   \sqrt{ \frac{  4  (W+1) (\log T + 1)^3  \log (10 (\log T + 1)/ \beta) }{\rho} } $ with probability at least $1-\beta/5$.

\item For any stream $i$, if the correct number of distinct elements in the subtream $i$ is below a certain threshold then \textsf{COUNTING-DICT}$_i$ will not output TOO-HIGH  (Lemma~\ref{lem:e2-occur}). We plug $\beta/5L$ into Lemma~\ref{lem:e2-occur} to obtain that, if $F_i(t) \leq 16 \max \{  \gamma/\eta, 32\lambda / \eta^2 \}$, then \textsf{COUNTING-DICT}$_i$ will not output TOO-HIGH, i.e. the noisy count $\hat{s}_i \leq \tau = 16 \max \{  \gamma/\eta, 32\lambda / \eta^2 \} +   2\sqrt{2} \frac{(\log T + 1)^{3/2}\sqrt{W \log(20T \lceil \log T \rceil /\beta)}}{\sqrt{\rho}}.$ 

\item The blocklist has bounded size (Lemma~\ref{lem:blocklist_size}).
With probability at least $1 - \beta / 5$, the size of the blocklist is bounded by $3T^{1/3} \log (T^{1/3} \lceil \log T \rceil  / \beta)$.        
\end{enumerate}

Conditioned on these good events and plugging in $W=T^{2/3}$, by Lemmas \ref{lem:bm_acc} and~\ref{lem:blocklist_size}, for each instance of \textsf{BinaryMechanism-CD}, the overall additive error stemming from the binary mechanism and blocklisting is $\gamma=  \sqrt{ \frac{  4  (T^{2/3}+1) (\log T + 1)^3  \log (10 (\log T + 1)/ \beta) }{\rho} } + 3T^{1/3} \log (T^{1/3}  \lceil \log T \rceil  / \beta)$. 

To prove the desired accuracy claim, we will condition on all four high-probability events described above occurring at timestep $t$. Note that each of the four events occur with probability $1-\beta/5$. Thus all four events will happen with probability at least $1-\frac45 \beta \geq 1-\beta$ by a union bound.

From here, we can again treat separately two cases based on $F(t)$, the number of distinct elements of the stream at time $t$: (1) $F(t) \geq 8 \max (\gamma / \eta, 32 \lambda / \eta^2)$ and (2) $F(t) \leq 8 \max (\gamma / \eta, 32 \lambda / \eta^2)$. As in the case where $ob$=true, we show in Case (1), the resulting approximation satisfies a multiplicative error of $(1\pm 4\eta)$, and in Case (2), the resulting approximation has an additive error of $32 \max (\gamma / \eta, 32 \lambda / \eta^2)$. The analysis in both cases is identical to when $ob$=true as presented above, with the correspondingly larger value of $\gamma$, and so is not repeated here. 
\end{proof}

\subsection{Proofs of Helper Lemmas for Theorem \ref{thm:acc_exact}}\label{app.helperlemmas}

Lemma \ref{lem:substream_concentration} bounds the number of elements in the substream after hashing. Lemma \ref{lem:bm_acc} proves the accuracy of \textsf{BinaryMechanism-CD} algorithm. Lemma \ref{lem:blocklist_size} bounds the size of the blocklist when $ob$ is false. With the help of these lemmas, we can show the accuracy of \textsf{COUNTING-DICT}, as an intermediate step in the analysis.

\begin{lemma}[Substream concentration bound]\label{lem:substream_concentration}
    Let $F(t)$ be the number of distinct elements of the stream at time $t$, and let $F_i(t)$ be the number of distinct elements of substream $\cS_{i,g}$ (\cref{def:hash-stream}) for any $i \in [L]$. If $F(t) \geq 2^i \cdot  \frac{4\lambda}{\eta^2}$, then $\Pr[|F_i(t) -  \frac{F(t)}{2^i}| > \eta \cdot \frac{F(t)}{2^i} ] \leq \frac{\beta}{5L}$. Otherwise, if  $F(t) \leq 2^i \cdot  \frac{4\lambda}{\eta^2}$, then $\Pr[|F_i(t) -  \frac{F(t)}{2^i}| > \frac{4\lambda}{\eta} ] \leq \frac{\beta}{5L}$. This implies that the following two conditions hold at the same time with probability at least $1-\beta/5$:
\begin{enumerate}
     \item  $\forall i \in [L]$ with $F(t) \geq 2^i \cdot  \frac{4\lambda}{\eta^2}$, we have $(1-\eta)\frac{F(t)}{2^i} \leq F_i(t) \leq (1+\eta)\frac{F(t)}{2^i}$ 
        \item$\forall i \in [L]$ with $F(t) < 2^i \cdot  \frac{4\lambda}{\eta^2}$, we have $\frac{F(t)}{2^i} - \frac{4\lambda}{\eta}\leq F_i(t) \leq \frac{F(t)}{2^i} + \frac{4\lambda}{\eta}$.  
\end{enumerate}
\end{lemma}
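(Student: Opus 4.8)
The plan is to recognize $F_i(t)$ as a sum of $\lambda$-wise independent Bernoulli indicators and apply the tail bound for sums of $\lambda$-wise independent $[0,1]$-valued random variables (Lemma~\ref{lem:hash}). First I would fix a timestep $t$ and an index $i \in [L]$, and let $D_t \subseteq \cU$ be the set of distinct elements present in $x$ at time $t$, so $|D_t| = F(t)$. By \cref{def:hash-stream}, an update for item $a$ appears in $\cS_{i,g}$ exactly when $g(a) = i$, and then all of $a$'s updates are preserved; hence $a$ is present in $\cS_{i,g}$ at time $t$ iff $a \in D_t$ and $g(a) = i$, giving $F_i(t) = \sum_{a \in D_t} \mathbf{1}[g(a) = i]$. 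Since $g$ is $\lambda$-wise independent with $\Pr[g(a)=i]=2^{-i}$, the summands are $\lambda$-wise independent, take values in $\{0,1\}\subseteq[0,1]$, and $\mu := \E[F_i(t)] = F(t)/2^i$. Lemma~\ref{lem:hash} (with $\lambda \geq 4$ even, as holds for the chosen $\lambda = 2\log(40L/\beta)$ up to rounding) then gives $\Pr[|F_i(t)-\mu| > A] \leq 8\bigl((\lambda\mu+\lambda^2)/A^2\bigr)^{\lambda/2}$ for every $A>0$.

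Next I would split into the two regimes of the statement. When $F(t) \geq 2^i\cdot 4\lambda/\eta^2$, i.e. $\mu \geq 4\lambda/\eta^2$, I set $A = \eta\mu$; using $\eta<1/2$ to get $\lambda \leq \mu$ (hence $\lambda^2 \leq \lambda\mu$) one has $(\lambda\mu+\lambda^2)/A^2 \leq 2\lambda/(\eta^2\mu) \leq 1/2$. When $F(t) \leq 2^i\cdot 4\lambda/\eta^2$, i.e. $\mu \leq 4\lambda/\eta^2$, I set $A = 4\lambda/\eta$; then $(\lambda\mu+\lambda^2)/A^2 = \eta^2(\mu+\lambda)/(16\lambda) \leq \eta^2\cdot(8\lambda/\eta^2)/(16\lambda) = 1/2$, again using $\lambda \leq 4\lambda/\eta^2$ since $\eta<1/2$. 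In both cases the bound becomes $8\cdot 2^{-\lambda/2}$, and substituting $\lambda = 2\log(40L/\beta)$ yields $8\cdot 2^{-\lambda/2} = 8\beta/(40L) = \beta/(5L)$, exactly the per-$i$ probability claimed.

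Finally, for each $i \in [L]$ exactly one of the two cases applies, so there is one bad event per index, each of probability at most $\beta/(5L)$; a union bound over the $L$ indices shows that with probability at least $1-\beta/5$ the good event holds simultaneously for all $i$, which is precisely conditions (1) and (2) (rewriting $|F_i(t)-\mu|\le \eta\mu$ as $(1-\eta)F(t)/2^i \le F_i(t) \le (1+\eta)F(t)/2^i$, and similarly in the additive case).

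Since every step is a direct substitution into Lemma~\ref{lem:hash}, I do not expect a serious obstacle. The only points needing a little care are (a) justifying that presence in $\cS_{i,g}$ coincides with membership in $D_t \cap g^{-1}(i)$, so that $F_i(t)$ really is the claimed sum of indicators with mean $F(t)/2^i$, and (b) verifying the two elementary inequalities $(\lambda\mu+\lambda^2)/A^2 \leq 1/2$ in the respective regimes, which rely on the threshold $4\lambda/\eta^2$ and on $\eta \in (0,1/2)$; getting the constant $40$ inside $\lambda$ to line up with the target failure probability $\beta/(5L)$ (accounting for the factor $8$ and the extra factor $L$ for the final union bound) is the one bookkeeping detail to double-check.
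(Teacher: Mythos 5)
Your proposal is correct and follows essentially the same route as the paper: express $F_i(t)$ as a sum of $F(t)$ $\lambda$-wise independent Bernoulli$(2^{-i})$ indicators, apply Lemma~\ref{lem:hash} with $A=\eta F(t)/2^i$ in the dense regime and $A=4\lambda/\eta$ in the sparse regime, and finish with a union bound over $i\in[L]$. The only (immaterial) difference is bookkeeping: you bound the base $(\lambda\mu+\lambda^2)/A^2$ by $1/2$, whereas the paper keeps the slightly sharper $1/4+\eta^2/16$ and solves for $\lambda$; both yield the stated $\beta/(5L)$ with $\lambda=2\log(40L/\beta)$ (base-$2$ logs), so the argument goes through as you wrote it.
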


\begin{proof} 
We start with the case of $F(t) \geq 2^i \cdot  \frac{4\lambda}{\eta^2}$. 
Applying Lemma~\ref{lem:hash} to the $F_i(t)$ as a sum of $F(t)$ $\lambda$-wise independent $Bernoulli(2^{-i})$ random variables, and with $\mu = \frac{F(t)}{2^{i}}, A=\eta \cdot \frac{F(t)}{2^i}$ yields the following:
\begin{align*}
\Pr\left[|F_i(t) -  \frac{F(t)}{2^i}| > \eta \cdot \frac{F(t)}{2^i} \right] < \  & 8\left( \frac{\lambda \cdot \frac{F(t)}{2^i} + \lambda^2}{\eta^2 \cdot (\frac{F(t)}{2^i})^2} \right)^{\lambda/2} \\
= \ & 8\left( \frac{\lambda}{\eta^2 \frac{F(t)}{2^i}} + \frac{\lambda^2}{\eta^2 (\frac{F(t)}{2^i})^2} \right)^{\lambda/2}  \\
\leq \ & 8\left( \frac{\lambda}{\eta^2 (4\lambda / \eta^2)} + \frac{\lambda^2}{\eta^2 (4\lambda/\eta^2)^2} \right)^{\lambda/2}  \\
 = \ &  8 \left(\frac{1}{4} + \frac{\eta^2}{16}\right)^{\lambda/2},
    \end{align*}
where the third step is because of the case $F(t) \geq 2^i \cdot  \frac{4\lambda}{\eta^2}$, and the fourth step simplifies terms. 

We now wish to bound the right hand side by $\frac{\beta}{5L}$; solving this inequality for $\lambda$ yields the following bound:
    \begin{align*}
& 8 \left(\frac{1}{4} + \frac{\eta^2}{16}\right)^{\lambda/2}  \leq \frac{\beta}{5L} \\
\iff \quad  & \left(\frac{1}{4} + \frac{\eta^2}{16}\right)^{\lambda/2} \leq \beta/40L \\
        \iff \quad & \frac{\lambda}{2} \log (1/4 + \eta^2 / 16) \leq \log (\beta/ 40L) \\
        \iff \quad & \lambda \geq \frac{2\log (\beta / 40L)}{\log (1/4 + \eta^2 / 16)} \\
        \iff \quad & \lambda \geq \frac{2\log (40L/\beta)}{\log (\frac{1}{1/4 + \eta^2 / 16})} 
    \end{align*}
    The last step comes from multiplying both the numerator and denominator by $-1$, and $-\log(x) = \log (1/x)$.

    Since $\eta < 0.5$, then the denominator can be bounded by: $\log (\frac{1}{1/4 + \eta^2 / 16}) > \log (64/17) > 1$. 
    Since Algorithm \ref{alg:large-universe-sampling} sets $\lambda = 2 \log (40L / \beta)$, the above inequality will be satisfied. 

    Next we consider the second case, where $F(t) < 2^i \cdot  \frac{4\lambda}{\eta^2}.$  Applying Lemma~\ref{lem:hash} again to the $F_i(t)$, now with $A= \frac{4 \lambda}{\eta}$, yields the following:
    \begin{align*}
     \Pr\left[|F_i(t) -  \frac{F(t)}{2^i}| >  \frac{4\lambda}{\eta} \right] 
& < \   8\left( \frac{\lambda \cdot \frac{F(t)}{2^i} + \lambda^2}{(\frac{4\lambda}{\eta})^2} \right)^{\lambda/2} \\
< \ &  8\left( \frac{\lambda \cdot 4\lambda / \eta^2 + \lambda^2}{ 16\lambda^2 / \eta^2} \right)^{\lambda/2} \\
& = 8 \left(\frac{1}{4} + \frac{\eta^2}{16}\right)^{\lambda/2},
    \end{align*}
where the second step is because of the case $F(t) < 2^i \cdot  \frac{4\lambda}{\eta^2}$, and the third step simplifies terms. 

   We again wish to bound the right hand side by $\frac{\beta}{5L}$, and by the same steps as in the first case, the desired inequality holds if and only if $\lambda \geq \frac{2\log (40L/\beta)}{\log (\frac{1}{1/4 + \eta^2 / 16})}$, and using the requirement that $\eta < 0.5$, Algorithm \ref{alg:large-universe-sampling}'s choice of $\lambda = 2 \log (40L / \beta)$ will satisfy the desired inequality.
\end{proof}

\begin{lemma}[Binary mechanism accuracy] \label{lem:bm_acc} Fix a timestep $t \in [T]$, and recall that $\hat{s}_i(t)$ is the noisy count of $F_i(t)$ produced by \textsf{BinaryMechanism-CD}$_i$.
Then $|F_i(t) - \hat{s}_i(t)| \leq  \sqrt{ \frac{  4  (W+1) (\log T + 1)^3  \log (10 (\log T + 1)/ \beta) }{\rho} }$ simultaneously for all $i \in [L]$ with probability $1-\beta/5$.
\end{lemma}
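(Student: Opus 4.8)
The plan is to analyze a single instance of \textsf{BinaryMechanism-CD}$_i$ and then union bound over the $L$ substreams. Recall that \textsf{BinaryMechanism-CD}$_i$ is instantiated (in Line~\ref{line:bm-init} of \textsf{COUNTING-KSET}, and analogously in \textsf{COUNTING-DICT}) with privacy parameter $\rho/L$ and \occur\ bound $W$, so that internally it uses $\rho' = \frac{\rho/L}{2(W+1)(\log T + 1)}$ and adds independent noise $\N(0, 1/\rho') = \N(0, \frac{2(W+1)(\log T+1)L}{\rho})$ to each of the (at most) $\lceil \log T \rceil + 1 \le \log T + 1$ dyadic partial sums (nodes) whose values compose the output at timestep $t$. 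The key structural fact, inherited from the binary tree mechanism of~\cite{ChanSS2011,DworkNPR10}, is that $\hat{s}_i(t)$ equals the true prefix sum $F_i(t)$ plus a sum of at most $\log T + 1$ of these independent Gaussian noise terms; hence $\hat{s}_i(t) - F_i(t) \sim \N(0, \sigma_t^2)$ with $\sigma_t^2 \le (\log T + 1)\cdot \frac{2(W+1)(\log T+1)L}{\rho} = \frac{2(W+1)(\log T+1)^2 L}{\rho}$. (One should note here that, because \textsf{COUNTING-DICT}$_i$ only feeds updates of absolute value $1$ and the prefix sums at times $t_{\textsf{last}}$ through $t$ telescope correctly, the quantity the tree is tracking at time $t$ is exactly $F_i(t)$ — this is the same bookkeeping used in Cases 1--3 of the proof of Lemma~\ref{lem:new_privacy}, so no additional error is introduced by the \textsf{KSET}/\textsf{DICT} interface.)

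With the distributional claim in hand, I would apply the Gaussian tail bound (Lemma~\ref{lem:chernoff}): for $X \sim \N(0,\sigma_t^2)$, $\Pr[|X| > a] \le 2\exp(-a^2/(2\sigma_t^2))$. Setting $a = \sqrt{\frac{4(W+1)(\log T+1)^3 \log(10(\log T+1)/\beta)}{\rho}}$, one checks that $\frac{a^2}{2\sigma_t^2} \ge \frac{4(W+1)(\log T+1)^3 \log(10(\log T+1)/\beta)/\rho}{4(W+1)(\log T+1)^2 L/\rho} = \frac{(\log T+1)\log(10(\log T+1)/\beta)}{L} \ge \log(10(\log T+1)/\beta)$, using $L = \lceil \log T \rceil \le \log T + 1$. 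Therefore the failure probability at a single $i$ is at most $2\exp(-\log(10(\log T+1)/\beta)) = \frac{2\beta}{10(\log T+1)} = \frac{\beta}{5(\log T+1)} \le \frac{\beta}{5L}$. A union bound over $i \in [L]$ then gives that $|F_i(t) - \hat{s}_i(t)| \le \gamma$ simultaneously for all $i$ with probability at least $1 - \beta/5$, which is exactly the claim.

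I do not expect any serious obstacle here; the only point requiring care is justifying that the error at time $t$ is genuinely a sum of at most $\log T + 1$ independent $\N(0,1/\rho')$ variables and that $F_i(t)$ (not some stale or miscounted quantity) is what the noiseless tree returns. This is a direct consequence of the standard dyadic-decomposition argument for the binary mechanism together with the update bookkeeping in \textsf{COUNTING-DICT}$_i$ / \textsf{COUNTING-KSET}$_i$; I would state it explicitly but not belabor it, citing~\cite{ChanSS2011}. The remaining arithmetic — plugging the variance into the Gaussian tail bound and simplifying the exponent using $m \le \log T + 1$ and $L \le \log T + 1$ — is routine and mirrors the computation already carried out in the proof of Lemma~\ref{lem:e2-occur}.
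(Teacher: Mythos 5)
Your proposal is correct and follows essentially the same route as the paper's proof: decompose the error at time $t$ into at most $\log T + 1$ independent Gaussian node noises of variance $2L(W+1)(\log T+1)/\rho$, apply the Gaussian tail bound of Lemma~\ref{lem:chernoff} to the resulting $\N(0,\sigma_t^2)$ error with $\sigma_t^2 \le 2L(W+1)(\log T+1)^2/\rho$, and union bound over the $L$ substreams using $L \le \log T + 1$. The only cosmetic difference is that you plug in the stated $\gamma$ and verify the exponent, whereas the paper solves the tail inequality for $\gamma$; these are equivalent.
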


\begin{proof}
    We will show that for a specific substream $i \in [L]$, the error $|F_i(t) - \hat{s}_i(t)|$ is bounded by the desired term with probability $1-\beta / 5L$, and the lemma will follow by a union bound over all $i \in [L]$. 
    
    Now consider the error of substream $i$ at timestep $t$. Algorithm~\ref{alg:bm} (\textsf{BinaryMechanism-CD}) adds a total of $\textsf{Bin}_1(t)$ independent Gaussian noise terms, where $\textsf{Bin}_1(t)$ is the number of ones in the binary representation of $t$. 
    Hence, the error $|F_i(t) - \hat{s}_i(t)|$ is a sum of at most $\log T + 1$ independent Gaussian random variables each distributed as $\mathcal{N}(0, 2L(W+1) (\log T + 1) / \rho)$, and so the error itself is also Gaussian with mean 0 and variance at most $\frac{2L(W+1) (\log T + 1) ^2}{\rho }$. 
    
    Applying a Chernoff bound (Lemma~\ref{lem:chernoff}) to this random variable, gives that for any $\gamma>0$,
    \begin{align*}
        \Pr \left[ |F_i(t) - \hat{s}_i(t)| > \gamma  \right] \leq 2 \exp \left( \frac{- \rho \gamma^2}{4L (W+1) (\log T + 1)^2 } \right).
    \end{align*}
    We wish to bound this probability by $\beta / 5L$. Solving this inequality for $\gamma$ yields $\gamma \geq  \sqrt{\frac{  4 L (W+1) (\log T + 1)^2  \log (10L/ \beta) }{\rho}}$. Since $L = \lceil \log T \rceil  \leq \log T + 1$, then choosing $\gamma = \sqrt{ \frac{  4  (W+1) (\log T + 1)^3  \log (10 (\log T + 1)/ \beta) }{\rho} }$ will satisfy the inequality.   
\end{proof}

\begin{lemma} [Bounded blocklist size] \label{lem:blocklist_size}  Suppose $ob = false$. Fix a timestep $t \in [T]$. With
probability at least $1 - \beta / 5$, the size of the blocklist $\cB$ is bounded by $3T^{1/3} \log ( T^{1/3} L/ \beta)$, where $L = \lceil \log T \rceil$. 
    
\end{lemma}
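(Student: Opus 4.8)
The plan is to bound $|\cB|$ by a direct expectation computation followed by a concentration argument. First I would observe that each time an element appears in the stream (in one of the $L$ substreams), it is added to $\cB$ independently with probability $p = \frac{\log(T^{1/3}L/\beta)}{T^{2/3}}$, as specified in Line~\ref{line:blocklist} of Algorithm~\ref{alg:large-universe-sampling}. Since the total number of (non-$\bot$) updates across the whole stream is at most $T$, and each update triggers at most one blocklisting trial (the update is hashed to exactly one substream $i = g(x_t)$), the size of $\cB$ is stochastically dominated by $\mathrm{Bin}(T, p)$. Hence $\E[|\cB|] \leq Tp = T^{1/3}\log(T^{1/3}L/\beta)$.

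Next I would apply the multiplicative Chernoff upper-tail bound (Theorem~\ref{thm:mult_chernoff}) with $\mu = Tp = T^{1/3}\log(T^{1/3}L/\beta)$ and deviation parameter $\eta' = 2$, giving
\[
\Pr[|\cB| \geq 3\mu] \leq \exp\!\left(-\frac{4\mu}{4}\right) = \exp(-\mu) = \exp\!\left(-T^{1/3}\log(T^{1/3}L/\beta)\right) \leq \frac{\beta}{5}
\]
for $T$ at least a small constant, since $T^{1/3}\log(T^{1/3}L/\beta) \geq \log(5/\beta)$ whenever $T^{1/3} \geq 1$ and $L \geq 1$. This yields $|\cB| \leq 3\mu = 3T^{1/3}\log(T^{1/3}L/\beta)$ with probability at least $1-\beta/5$, as claimed. (A small technical point: if one prefers to avoid stochastic domination, one can instead define the indicator variables $Y_{t}$ for $t \in [T]$ equal to $1$ if the update at time $t$ is non-empty and its element is newly added to $\cB$ at that step; these are independent Bernoulli variables with parameter at most $p$, $|\cB| \leq \sum_t Y_t$, and Theorem~\ref{thm:mult_chernoff} applies directly to $\sum_t Y_t$.)

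The main obstacle — really the only subtlety — is making sure the independence structure is clean enough to invoke the Chernoff bound: the blocklisting coins are drawn fresh at every update and are independent of the hash function $g$ and of all other randomness, so conditioning on $g$ (hence on which element lands in which substream) leaves the blocklisting trials mutually independent, and the bound on the number of trials by $T$ holds deterministically. Everything else is a routine substitution, noting $L = \lceil \log T\rceil$ so that $\log(T^{1/3}L/\beta)$ matches the statement exactly.
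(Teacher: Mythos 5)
Your proposal is correct and follows essentially the same route as the paper: bound $|\cB|$ by a sum of at most $T$ independent Bernoulli($p$) trials (one per non-empty update), compute $\E \leq Tp = T^{1/3}\log(T^{1/3}L/\beta)$, and apply the multiplicative Chernoff bound with deviation parameter $2$ to get the $3T^{1/3}\log(T^{1/3}L/\beta)$ bound with failure probability $\exp(-Tp) \leq \beta/5$. The only nit is your parenthetical claim that $T^{1/3}\log(T^{1/3}L/\beta) \geq \log(5/\beta)$ holds whenever $T^{1/3}\geq 1$ and $L \geq 1$ (it fails, e.g., at $T^{1/3}L=1$); the correct sufficient condition, as the paper notes, is $T^{1/3}\lceil\log T\rceil \geq 5$, i.e.\ $T \geq 8$, which your ``$T$ at least a small constant'' hedge already covers.
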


\begin{proof}
    The size of the blocklist is non-decreasing since elements are never removed, so it is sufficient to upper bound the final size of the blocklist after $T$ timesteps. 

    Define the random variable $Y_j$ to be 1 if the $j$-th arrival in the stream is blocklisted and 0 if otherwise. Then $Y = \sum _{i=j}^T Y_j$ is an upper bound on the size of the blocklist (because the arrivals may be updates of the same element). The sampling rate for blocklisting is $p = \frac{\log (T^{1/3} L/\beta)}{T^{2/3}}$, so  $Y_j \sim Bern(\frac{\log (T^{1/3} L / \beta)}{T^{2/3}} ) $ and $\mathbb{E}[Y] =   \frac{T \log (T^{1/3} L / \beta)}{T^{2/3}} = T^{1/3}\log (T^{1/3}L / \beta)  $. Applying a  multiplicative Chernoff bound (Theorem \ref{thm:mult_chernoff}) with $\eta = 2$ yields: 
    $$
\Pr \left[ Y >  3T^{1/3} \log (T^{1/3} L / \beta)  \right] \leq \exp \left( - T^{1/3}\log (T^{1/3} L / \beta) 
 \right) 
$$
We would like to bound this probability by $\beta/5$, which occurs when $T^{1/3} \log(\frac{T^{1/3} L }{\beta}) \geq \log (\frac{5}{\beta})$. This inequality will clearly hold if $T^{1/3} L = T^{1/3} \lceil \log T \rceil \geq 5$, which is true for $T \geq 8$. 
\end{proof}

\section{Omitted Proofs from Section \ref{sec:lb}}

\subsection{Proof of Theorem \ref{thm:lb}}\label{s.appprooflb}

\blocklistlb*

\begin{proof}

We now define a random process defining a stream distribution that is simultaneously hard for the the problem of $\bf{blocklist}_{flip}(W)$ and for $\bf{blocklist}_{occ}(W)$. For notational convenience, set $n = T/2$, and $m = n / W = T/(2W)$. For simplicity, we assume $T,W$ such that $n,m$ are integers.

Let the element universe be $\cU = [n]$, over which we assume there is a total order. We define a distribution of problem instances with the following process:
\begin{itemize}
    \item A uniformly random set $X \subseteq \cU$ of size $m$ is sampled.
    \item Generate the first $n = T/2$ timesteps ensuring that all elements in $X$ have $W$ flippancy and $W$ \occur. 
    \item Then generate the second $n = T/2$ timesteps ensuring that all elements in $\cU$ are inserted once. 
\end{itemize}

Define a stream $x(X)$ to be a deterministic function of $X$ as follows.  For the first half, all updates related to each element in $X$ appear consecutively and in the total order. Each element $u \in X$ has exactly $W$ updates, alternating $W/2$ times between one insertion and one deletion of $u$; this results in both $W$ flippancy and $W$ \occur\ for all $u \in X$ at the end of the first half of the stream. Then, the second half of the stream has one insertion of each element in $\cU$, appearing according to the total order. 

It is easy to see that the correct output for this stream in both the $\bf{blocklist}_{flip}(W)$ and $\bf{blocklist}_{occ}(W)$ problems is to always output 0 in the first half, and to output 1 in the second half only for elements in $X$: $o^*(x(X))_t = 0$ for $t\le n$ and $o^*(x(X))_t = 1$ iff $x_t \in X$ and $t\ge n+1$.

We want to show a space lower bound for an algorithm $\cA_{flip}$ (resp., $\cA_{occ}$) that will satisfy the conditions of no false negatives, and at most $r$ false positives, with probability at least $1-\beta$ over this distribution, for the $\bf{blocklist}_{flip}(W)$ (resp., $\bf{blocklist}_{occ}(W)$) problem. The proof proceeds identically for both problems, so we prove it only for $\cA_{flip}$.

Without loss of generality, we assume $\cA_{flip}$ is deterministic, since for any randomized algorithm, there will exist a random seed to achieve no worse guarantee on the considered input distribution. 

We prove this space lower bound via an  information theory argument (see Section \ref{sec:info} for information theory basics) by considering three random variables in addition to $X$:
\begin{itemize}
\item $S$: A random variable representing the memory state of algorithm $\cA_{flip}$ after observing the first $n=T/2$ timesteps.
\item $Y$: The set of elements with an output 1 in the second half of the stream by algorithm $\cA_{flip}$.
\item $P$: An indicator variable that is 1 if $\cA_{flip}$ has no false negatives and at most $r$ false positives, and 0 otherwise. By definition we know $\Pr[P=1] \geq 1- \beta$.
\end{itemize}

Since the algorithm $\cA_{flip}$ is deterministic, then $S, Y, P$ are all deterministic functions of $X$, since the stream itself is a deterministic function of $X$. Since algorithm $\cA_{flip}$ does not learn new information in the second half of the stream -- since the second half of the stream is the same for all streams -- then  $Y$ is a deterministic function of $S$.

To start, the maximum size of $S$ can be lower bounded by its entropy $H(S)$. 
Then $H(S)$ can be further lower bounded by the mutual information between the input secret $X$ and the random variable $Y$, which is part of algorithm $\cA_{flip}$'s output, conditioned on $P$. This can be seen using inequalities from Fact \ref{fact:infofact}:
\[
 |S| \geq H(S) \geq H(S|P) \geq I(X; S | P) \geq I(X; Y | P).
\]

Next, we apply the definitions of condition mutual information and mutual information:
\begin{align*}
I(X ; Y| P)  &= \Pr[P=1]\cdot I(X ; Y| P=1)  +\Pr[P=0]\cdot I(X ; Y| P=0)  \\
&\geq \Pr[P=1]\cdot I(X ; Y| P=1) \\
&= \Pr[P=1]\cdot \left( H(X|P=1) - H(X|P=1,Y) \right).
\end{align*}

For $H(X|P=1)$, since $\cA_{flip}$ is deterministic and $\Pr[P=1] \geq  1- \beta$, we know that $X | P=1$ is distributed uniformly across at least $(1-\beta) \cdot \binom{n}{m}$ sets of universe elements\footnote{To see this, first observe that $P(X)=1$ partitions the input space according to the event that $\cA_{flip}$ has no false negatives and at most $r$ false positives. Let $R$ denote the size of the support for $X \vert P=1$. Then, since $X$ is uniformly chosen, $\Pr[P(X)=1] \geq 1-\beta$ means that $\frac{R}{\binom{n}{m}} \geq 1-\beta$, and therefore $R \geq (1-\beta)\cdot \binom{n}{m}$ }, and therefore,
\[
H(X|P=1) \geq \log\left( (1-\beta)\cdot \binom{n}{m}\right).
\]

Next we analyze $H(X|P=1,Y)$. Conditioned on $P=1$, by the no false negatives requirement, we know $X \subseteq Y$, and by at most $r$ false positives, we know $|Y| \leq m + r$. Therefore, for any $y$ such that $\Pr[Y=y|P=1] > 0$, it must be that $|y| \leq m+r$ and that conditioned on $P=1$ and $Y=y$, $X$ is a subset of $y$ of size $m$. Therefore $X$ can take at most $\binom{m+r}{m}$ different values. By Fact \ref{fact:infofact}, 
\[
H(X|P=1, Y=y) \leq \log \binom{m+r}{m}
\]
and
\[
H(X|P=1,Y) = \sum_{y \subseteq [n]} \Pr[Y=y|P=1] \cdot H(X|P=1, Y=y) \leq \log \binom{m+r}{m}.
\]

Putting everything together:
\begin{align*}
|S| \geq H(S) \geq I(X ; Y| P) &\geq  \Pr[P=1]\cdot \left( H(X|P=1) - H(X|P=1,Y) \right)\\
&\geq (1-\beta) \cdot \left( \log\left((1-\beta)\cdot \binom{n}{m} \right) - \log \binom{m+r}{m} \right) \\
&= (1-\beta) \cdot \left( \log(1-\beta) + \log\frac{n \times (n-1)\times \cdots \times(n-m+1)}{(m+r) \times (m+r-1) \times \cdots \times (r+1) } \right) \\
&\geq (1-\beta) \cdot \left( \log(1-\beta)  + m \log\frac{n}{m+r}\right)\\
&= (1-\beta) \cdot \left( \log(1-\beta)  + \frac{T}{2W} \log\frac{WT}{T+2Wr}\right). 
\end{align*}
\end{proof}

\subsection{Proof of Corollary \ref{corol:blocklist-ub}}\label{s.appproofub}

\blocklistub*

\begin{proof}
For the space complexity of Algorithm~\ref{alg:large-universe-sampling}, \cref{thm:space} says that when $|\cU| = \text{poly}(T)$ and $ob$=false, then with probability at least $1-\beta$, Algorithm~\ref{alg:large-universe-sampling} uses space $O(T^{1/3} \cdot \text{polylog}(T/\beta)) \cdot \text{poly}(\frac{1}{\rho \eta})$.

For the claim that Algorithm~\ref{alg:large-universe-sampling} has no false negatives with high probability, \cref{lem:privacy_occ_bound} in \cref{app.exactcountproof} shows that when $ob=false$, with probability at least $1-\beta/L$, the maximum \occur\ of the stream produced from the blocklisting is at most $T^{2/3}$. Since Algorithm~\ref{alg:large-universe-sampling} uses $W=T^{2/3}$, then with probability at least $1-\beta/L$, Algorithm~\ref{alg:large-universe-sampling} will have no false negatives, since it will never allow elements with \occur\ larger than $W=T^{2/3}$ to persist without being blocklisted. Plugging in the value of $L=\lceil \log(T) \rceil$, note that $\beta/\lceil \log(T)\rceil \leq \beta/2$, therefore with probability $1-\beta/2$, Algorithm~\ref{alg:large-universe-sampling} will have no false negatives.

 To see the false positive bound, let $X_i$ be an indicator random variable where $X_i=1$ if the element at timestep $i$ is a false positive in Algorithm~\ref{alg:large-universe-sampling}, and let $X=\sum^T_{i=1}X_i$. Recall that Algorithm~\ref{alg:large-universe-sampling} samples elements for the blocklist with probability $p = \frac{\log (T^{1/3} L/\beta)}{T^{2/3}}$ at each occurrence.  
Note that $\Pr[X_i=1] = \Pr[(x_i \in \cB) \cap (x_i<W)] \leq \Pr[(x_i \in \cB) ] = p$, so then $\E[X]= \sum^T_{i=1} \Pr[X_i=1] \leq T\cdot p = T^{1/3} \log (T^{1/3} L/\beta)$. Applying a multiplicative Chernoff bound (Theorem \ref{thm:mult_chernoff}) with $\eta=1$ yields: 
\[\Pr[X \geq 2T^{1/3}{\log (T^{1/3} L/\beta)}]\leq \exp(- (T^{1/3}/3)\log(T^{1/3}L/\beta) ) = \left(\frac{\beta}{T^{1/3}L}\right)^{T^{1/3}/3} \leq \beta/2.\] 
To see the last inequality, we first plug in $L= \lceil \log(T) \rceil$, so we wish to show $\left(\frac{\beta}{T^{1/3}\lceil \log(T) \rceil}\right)^{T^{1/3}/3} \leq \beta/2$. We observe numerically that this holds for $T \geq 8$. 

Taking a union bound over the failure probability from the space complexity, false positive, and false negative bounds, then with probability $1-2\beta$, \cref{alg:large-universe-sampling} solves the \textbf{blocklist}$_{occ}(W)$ for the desired $r$ and space conditions.
\end{proof}

\end{document}